\def\ps@pprintTitle{%
 \let\@oddhead\@empty
 \let\@evenhead\@empty
 \def\@oddfoot{}%
 \let\@evenfoot\@oddfoot}
\numberwithin{equation}{section}
\newtheorem{theorem}{Theorem}[section]
\newtheorem{proposition}[theorem]{Proposition}
\theoremstyle{definition}
\newcommand{\expe}[2]{\underset{#1}{\E}\left[{#2}\right]}
\newcommand{\E}{\mathbb{E}}
\newcommand{\1}[1]{\mathds{1}_{\left\{ {#1} \right\} }}
\newcommand{\bm}{\pmb}
\def\F{{\mathcal F}}
\renewcommand{\P}{\mathbb{P}}
\newcommand{\Ynd}[1]{ {Y_{#1}^d \in dy_{#1}^d } }
\newcommand{\Y}{\mathcal{Y}}
\newcommand{\Znd}[2]{{Z_{#1}^d = \theta_{#2} }}
\newcommand{\T}{\intercal}
\newcommand{\bchi}{{\boldsymbol \chi}}
\renewcommand{\bell}{{\boldsymbol \ell}}
\newcommand{\bN}{{\boldsymbol N}}
\newcommand{\bZ}{{\boldsymbol Z}}
\newcommand{\bpi}{{\boldsymbol \pi}}
\newcommand{\blambda}{{\boldsymbol \lambda}}
\newcommand{\bLambda}{{\boldsymbol \Lambda}}
\newcommand{\wlambda}{{\widehat{\lambda}}}
\newcommand{\hlambda}{{\widehat \lambda}}
\newcommand{\hA}{{\widehat A}}
\newcommand{\mcA}{{\mathcal A}}
\newcommand{\mcF}{{\mathcal F}}
\newcommand{\mcG}{{\mathcal G}}
\newcommand{\mfN}{{\mathfrak{N}}}
\newcommand{\mfJ}{{\mathfrak{J}}}
\newcommand{\PP}{{\mathbb P}}
\newcommand{\bUpsilon}{{\boldsymbol \Upsilon}}
\newcommand{\mR}{{\mathds R}}
\newcommand{\mZ}{{\mathds Z}}
\newcommand{\Real}{\mathds{R}}
\newcommand{\eyeM}{\boldsymbol{I}_{(J\times J)}}
\newcommand{\kp}{\kappa^\star}
\newcommand{\bCp}{\bm{C}^\star}
\newcommand{\Dt}{{\Delta t\,}}
\newcommand{\mcB}{{\mathcal B}}
\newcommand{\bmu}{{\boldsymbol{\mu}}}
\begin{document}

\begin{frontmatter}

\title{\textbf{Trading algorithms with learning in latent alpha models}\tnoteref{t1}
\\[1em]
{Mathematical Finance, Forthcoming\tnoteref{t2}}
}
\tnotetext[t1]{The authors would like to thank Damir Kinzebulatov for discussions on earlier parts of this work. SJ would like to acknowledge the support of the Natural Sciences and Engineering Research Council of Canada (NSERC), [funding reference numbers RGPIN-2018-05705 and RGPAS-2018-522715].}
\tnotetext[t2]{first posted online Nov 2016 https://ssrn.com/abstract=2871403}

\author[author1]{Philippe Casgrain}
\ead{p.casgrain@mail.utoronto.ca}

\author[author1]{Sebastian Jaimungal}
\ead{sebastian.jaimungal@utoronto.ca}
\address[author1] {Department of Statistical Sciences, University of Toronto}

\begin{abstract}
Alpha signals for statistical arbitrage strategies are often driven by latent factors. This paper analyses how to optimally trade with latent factors that cause prices to jump and diffuse. Moreover, we account for the effect of the trader's actions on quoted prices and the prices they receive from trading. Under fairly general assumptions, we demonstrate how the trader can learn the posterior distribution over the latent states, and explicitly solve the latent optimal trading problem. We provide a verification theorem, and a methodology for calibrating the model by deriving a variation of the expectation-maximization algorithm. To illustrate the efficacy of the optimal strategy, we demonstrate its performance through simulations and compare it to strategies which ignore learning in the latent factors. We also provide calibration results for a particular model using Intel Corporation stock as an example.
\end{abstract}

\end{frontmatter}

\section{Introduction}

The phrase ``All models are wrong, but some are useful'' (Box, 1978) rings true across all areas in finance, and intraday trading is no exception. If an investor wishes to efficiently trade assets, she must use a strategy that can anticipate the asset's price trajectory while simultaneously being mindful of the flaws in her model, as well as the costs borne from transaction fees and her own impact on prices. With all of the complexities in intraday markets, it is no surprise that strategies differ substantially based on what assumptions are made about asset price dynamics. Trading with an incorrect model can be very costly to an investor, and therefore being able to mitigate model risk is valuable.

The availability of information at very high frequencies can help a trader partially overcome the problem of model selection. The information provided from realized trajectories of the asset price and the incoming flow of orders of other traders, allows her to infer which model best fits the observed data, and in turn she may use it to predict future movements in asset prices. Ideally, the trader should be able to incorporate this information in an on-line manner. In other words, the trader should be continuously updating her model as she observes new information, keeping in mind that the market may switch between a number of regimes over the course of the trading period. Furthermore, the trader would like to have some means of incorporating a-priori knowledge about markets into her trading strategy before beginning to trade.

This paper studies the optimal trading strategy for a single asset when there are latent alpha components to the asset price dynamics, and where the trader uses price information to learn about the latent factor. Prices can diffuse as well as jump. The trader's goal is to optimally trade subject to this model uncertainty, and end the trading horizon with zero inventory. By treating the trader's problem as a continuous time control problem where information is partially obscured, we succeed in obtaining a closed form strategy, up to the computation of an expectation that is specific to the trader's prior assumptions on the model dynamics. The optimal trading strategy we find can be computed with ease for a wide variety of models, and we demonstrate its performance by comparing, in simulation, with approaches that that do not make use of learning.

Early works on partial information include \cite{detemple1986asset}, \cite{detemple1991further}, who study optimal technology investment problems (where the states that drive production are obfuscated by gaussian noise); \cite{gennotte1986optimal}, who studies the optimal portfolio allocation problem when returns are hidden but satisfy an Ornstein-Uhlenbeck process; \cite{dothan1986equilibrium}, who analyzes a production and exchange economy with a single unobservable source of nondiversifiable risk; \cite{karatzas1991note}, who studies  utility maximization under partial observations; \cite{rieder2005portfolio}, \cite{bauerle2007portfolio} and \cite{frey2012portfolio}, who study model uncertainty in the context of portfolio optimization and the optimal allocation of assets; and \cite{Papa2018}, who studies an optimal portfolio allocation problem where the drift of the assets are latent Ito diffusions.

There are a few recent papers on partial information that are related to this study. \cite{ekstrom2016optimal} investigates the optimal timing problem associated with liquidating a single unit of an asset when the asset price is a geometric Brownian motion with random (unobserved) drift. \cite{colaneri2016shall} studies the optimal liquidation problem when the asset midprice is driven by a Poisson random measure with unknown mean-measure. \cite{garleanu2013dynamic} study the optimal trading strategy for maximizing the discounted, and penalized, future expected excess returns in a discrete-time, infinite-time horizon problem. In their model, prices contain an unpredictable martingale component, and an independent stationary (visible) predictable component -- the alpha component. \cite{gueant2016portfolio} study models in which the drift of the asset price process is a latent random variable in an optimal portfolio selection setting, for an investor who seeks to maximize a CARA and CRRA objective function, as well as in the cases of optimal liquidation in an Almgren-Chriss like setting.


 The approach we take differs in several ways from the extant literature, but the two key generalizations are: (i) we account for quite general latent factors which drive the drift and jump components in the asset's midprice; and (ii) we include both temporary and permanent impact that the agent's trading has on the market.

 The structure of the remainder of this paper is as follows. Section~\ref{sec: The Model} outlines our modelling assumptions, as well as providing the optimization problem with partial information that the trader wishes to solve. Section~\ref{sec: Filtering} provides the filter which the trader uses to make proper inference on the underlying model driving the data she is observing. Section~\ref{sec: Reducing the Problem} shows that the original optimization problem presented in Section~\ref{sec: The Model} can be simplified to an optimization problem with complete information using the filter presented in Section~\ref{sec: Filtering}. Section~\ref{sec: Solving the DPP } shows how to solve the reduced optimization problem from Section~\ref{sec: Reducing the Problem} and verifies that the resulting strategy indeed solves the original optimization problem. Lastly, Section~\ref{sec: Numerical Examples} provides some numerical examples by applying the theory to a few specific models, and compares the resulting strategy, using simulations, to an alternative which does not learn from price dynamics.

\section{The Model} \label{sec: The Model}

We work on the filtered probability space $\left( \Omega , \mcG = \{ \mcG_t , 0 \leq t \leq T \} , \mathbb{P} \right)$, where $T>0$, and finite, is some fixed time horizon. The filtration $\mcG$ is the natural one generated by the paths of the un-impacted asset midprice process $F=(F_t)_{ {t\in[0,T]} }$, the counting processes for the number of buy and sell market orders which cause price changes, denoted $N^+=(N_t^+)_{ {t\in[0,T]} }$ and $N^-=(N_t^-)_{ {t\in[0,T]} }$, and a latent process $\Theta=\left( \Theta_t \right)_{ {t\in[0,T]} }$. The exact nature of these processes will be provided in more detail in the remainder of the section.

The trader's optimization problem is to decide on a dynamic trading strategy to buy/sell an asset over the course of a trading horizon to maximize some performance criteria. We assume the trader executes orders continuously at a (controlled) rate denoted by $\nu = (\nu_t)_{{t\in[0,T]}}$. The trader's inventory, given some strategy $\nu$, is denoted  ${Q^\nu = (Q_t^\nu)_{{t\in[0,T]}}}$, with the initial condition $Q^\nu_0=\mfN$. $\mfN$ may be zero, positive (a long position), or negative (a short position) -- and, hence, the inventory at time $t$ can be written as
\begin{equation} \label{eq: Inventory Definition}
  Q_t^\nu = \mfN + \int_0^t \nu_u \; du\;.
\end{equation}
The above can be interpreted as the investor purchasing $\nu_t \, d t$ shares over the period $[t,t+dt)$. A positive (negative) value for $\nu_t$ represents the trader buying (selling) the asset. The rate at which the investor buys or sells the asset affects prices through two mechanisms. Firstly, a temporary price impact, which is effectively a transaction cost that increases with increasing trading rate. Secondly, a permanent impact, which incorporates the fact that when there are excess buy orders, prices move up, and excess sell orders, prices move down.

We further assume that other market participants also have a permanent impact on the asset midprice through their own buy and sell market orders (MOs). To model this, we let  $N^\pm$ be doubly stochastic Poisson processes with respective intensity processes $\lambda^\pm=(\lambda_t^\pm)_{{t\in[0,T]}}$, which count the number of market orders that cause prices to move. In the remainder of the paper, we write  $\bN = (N_t^+,N_t^-)_{ {t\in[0,T]} }$.

\subsection{Asset Midprice Dynamics}

To model the permanent price impact of trades, we define two processes $S = (S_t)_{ {t\in[0,T]} }$ and $F = (F_t)_{ {t\in[0,T]} }$ to represent the asset midprice and the asset midprice without the trader's impact, respectively. As shown by \cite{cartea2016incorporating}, intraday permanent price impact (over short time scales) is well approximated by a linear model.  Hence, we write
\begin{equation} \label{eq: Permanent Impact}
  S_t^\nu = F_t + \beta \int_0^t \nu_u \;du
  \;,
\end{equation}
where $\beta >0$ controls the strength of the trader's impact on the asset midprice. Alternatively, one could write this as a pure jump model \[
S_t^\nu = F_t + \beta\;(\mathcal{M}^{+,\nu}_t-\mathcal{M}^{-,\nu}_t)\;,
\]
where
$({\mathcal{M}}^{+,\nu}_t, {\mathcal{M}}^{-,\nu}_t)_{ {t\in[0,T]} }$
are controlled doubly stochastic Poisson processes with $\PP$--intensities $\gamma^+_t = \nu_t\mathds{1}_{\nu_t>0}$ and $\gamma^-_t = -\nu_t\mathds{1}_{\nu_t<0}$, respectively. The results will be identical to that obtained using the continuous model above.

We assume the investor does not have complete knowledge of the dynamics of the asset midprice, nor the rates of arrival of market orders. This uncertainty is modeled by assuming there is a latent continuous time Markov Chain $\Theta = (\Theta_t)_{ {t\in[0,T]} }$  (with $\Theta_t\in\{ \theta_j \}_{j\in\mfJ}$ and $\mfJ=\{1,2,\dots,J\}$), which modulates the dynamics of state variables, but is not observable by the trader. The latent process $\Theta$ is assumed to have a known generator matrix\footnote{The generator matrix $\bm{C}\in\mathbb{R}^{J\times J}$ of a $J$-state continuous time Markov chain $\Theta$ has non-diagonal entries $\bm{C}_{i,j}\geq 0$ if $i\neq j$ and diagonal entries $\bm{C}_{i,i}= -\sum_{j\neq i} \bm{C}_{i,j}$. $\bm{C}$ is defined so that $\mathbb{P} \left( \Theta_t = \theta_j {\lvert} \Theta_0 = \theta_i \right) = \left( e^{t \, \bm{C}} \right)_{i,j}$, where $\left( e^{t \, \bm{C}} \right)_{i,j}$ is element $(i,j)$ of the matrix exponential of $t \, \bm{C}$.} $\bm C$ and the trader places a prior $\pi_0^j=\mathbb{P}\left[ \Theta_0 = \theta_j \right]$, $j\in\mfJ$, on the initial state of the latent process, all estimated e.g., by the EM algorithm (see Section~\ref{sec: EM Algorithm} for details).

Conditional on a path of $\Theta$, the unaffected midprice $F$ is assumed to satisfy the SDE
\begin{equation*}
  dF_t = A(t,F_t,\bN_t; \Theta_t) \; dt + b\;(dN_t^{+} - dN_t^{-}) + \sigma \;dW_t
  \;, \qquad F_0 = F\;,
\end{equation*}
where $N^\pm$ have $\PP$--intensities
\begin{equation}
  \lambda_t^{\pm} = \sum_{j\in\mfJ} \1{\Theta_t = \theta_j} \lambda_t^{\pm,j}\;,
\end{equation}
and $W = ( W_t )_{ {t\in[0,T]} }$ is a $\mathbb{P}$-Brownian Motion. Moreover, we assume that each of the $\{\lambda^{\pm,j}\}_{j\in\mfJ}$ are $\mcF$--adapted processes, where $\mcF\subseteq \mcG$ is the natural filtration generated by the paths of the processes $F$ (note that $\bN$ can be inferred from this filtration, and strategies are therefore also adapted to the paths of $\bN$). Furthermore, we assume $(\Theta_t,F_t,\blambda_t,\bN_t)_{ {t\in[0,T]} }$ is a $\mcG$--adapted Markov process, where $\blambda = ( \{ \lambda_t^{+,j} , \lambda_t^{-,j} \}_{j\in\mfJ} )_{ {t\in[0,T]} }$. The Markov assumption will, after modifying the problem to deal with partial information, allow a dynamic programming principle (DPP) and result in a dynamic programming equation (DPE). We assume that either (i) $\sigma > 0$ or (ii) $\sigma=0$ and $A:=0$ , to prevent cases where the model is driven by a counting process but also has a continuous drift. In case (ii), the asset price may indeed drift, but the drift will be due to imbalance in intensities so that prices remain on a discrete grid. To compress notation, we define the process $A=\left(A_t\right)_{t\in[0,T]}$ where $A_t :=  A(t,F_t,\bN_t; \Theta_t)$ as well as the processes $A^j = \left( A_t^j \right)_{{t\in[0,T]}}$ where $A_t^j := A(t,F_t,\bN_t; \theta_j)$ for each $j\in\mfJ$.
Finally we make the technical assumption that
\begin{equation} \label{eq: Square Integrable A}
  \mathbb{E}\left[ \int_0^T \left( A_u \right)^2 + \left( \lambda_u^+ \right)^2
  + \left( \lambda_u^-\right)^2
   \;du \right] < \infty
  \;.
\end{equation}

This class of intensity models contains, among many others, deterministic intensities, shot-noise processes, and cross-exciting Hawkes processes with finite-dimensional Markov representations\footnote{To achieve this, we may extend $\blambda_t$ to include the state variables necessary for the model to be Markov.}, all modulated by the latent factor(s). We provide some explicit examples in Section~\ref{sec: Numerical Examples} where we also conduct numerical experiments.

The random variable $\Theta_t$ indexes the $J$ possible models for the asset's drift and the rates at which other market participants' market orders arrive. Because $\Theta$ is (potentially) stochastic, it may change over time, hence, so will the underlying model. Furthermore, because $\Theta$ is invisible to the investor, to make intelligent trading decisions, the investor must infer from observations what is the current (and future) underlying model driving asset prices.

\subsection{Cash Process}

The price at which the trader either buys or sells each unit of the asset will be denoted as $\hat S^\nu = (\hat S^\nu_t)_{ {t\in[0,T]} }$. Because there is limited liquidity at the best bid or ask price (the touch), the investor must ``walk the book'' starting at the bid (ask) and buy (sell) her assets at higher (lower) prices as she increases the size of each of her market orders. For tractability, and as \cite{frei2015optimal} (among others) note, a linear model for this `temporary price impact' fits the data well, and adding in concavity, while empirically more accurate, does not improve the $R^2$ beyond $5\%$. Hence, here we adopt a linear temporary price impact model and write the execution price as
\begin{equation}
  \hat S_t = S_t  + a\;\nu_t\;,
\end{equation}
where $a>0$ controls the asset's liquidity, and hence the impact of trades.

The investor's cash process, i.e., the accumulated funds from trading for some fixed strategy $\nu$, is denoted $X^\nu = (X^\nu_t)_{ {t\in[0,T]} }$, and is given by
\begin{equation}
X_t^\nu = X_0- \int_0^ t \nu_u \; \hat S_u^\nu \; du\;.
\end{equation}

\subsection{Objective Criterion} \label{sec: Defining the Objective}

Over the course of the trading window $t\in[0,T]$, the trader wishes to find a trading strategy $\nu\in\mcA$ which maximizes the objective criterion
\begin{equation} \label{eq: Initial Objective}
  \E \left[
    X_T^\nu + Q_T^\nu \left( S_T^\nu - \alpha\, Q_T^\nu \right)
    - \phi \int_0^T \left( Q_u^\nu \right)^2 du
  \right]
  \;,
\end{equation}
where $\mcA$ is the set of admissible trading strategies, here consisting of the collection of all $\mcF$--predictable processes such that $\mathbb E\left[\int_0^T \nu_u^2 \; du \right]< +\infty$.

The objective criterion~\eqref{eq: Initial Objective} consists of three different parts. The first is $X_T^\nu$, which represents the amount of cash  the trader has accumulated from her trading over the period $[0,T)$. Next is the amount of cash received from liquidating all remaining exposure $Q_T^\nu$ at the end of the trading horizon. The value (per share) of liquidating these shares is penalized by an amount $\alpha \,Q_T^\nu$, where $\alpha \geq 0$. The amount $\alpha \,Q_T^\nu$ represents the liquidity penalty taken by the trader if she chooses to sell or buy an amount of assets $Q_T^\nu$ all at once. We eventually take the limit $\alpha\to\infty$ to ensure that the trader ends with zero inventory. The last term $-\phi \int_0^T (Q_u^\nu)^2 \; du$ represents a running penalty that penalizes the trader for having a non-zero inventory throughout the trading horizon, and allows her to control her exposure. This penalty can also be interpreted as the quadratic variation of the book-value of the traders position (ignoring jumps in the asset price), or can be seen as stemming from model uncertainty as shown in \cite{cartea2014algorithmic}.

Note that we take trading strategies to be $\mcF$--predictable. 
$\mcF$--predictability ensures that the trader does not have access to any information regarding the path of the process $\Theta_t$, which governs the model driving the asset midprice drift and the intensities of $\bN_t$. As well, $\mcF$--predictability prevents the trader for foreseeing a jump occurring at the same instant in time -- in other words her decisions are based on the left limits of $F$, and hence also $\bN$. Because admissible controls are $\mcF$--predictable, and not $\mcG$--predictable (the full filtration), maximizing \eqref{eq: Initial Objective} is a control problem with partial information.

Solving control problems with partial information is very difficult to do directly, because most tools that are used to work with the case of complete information no longer work. The former requires an indirect approach in which, firstly, we find an alternate $\mcF$--adapted representation for the dynamics of the state variable process, and secondly, we extend the state variable process so that it becomes Markov when using Markov controls. The key step in this approach is to find the best guess for $\Theta_t$ conditional on the reduced filtration available at that time.

\section{Filtering} \label{sec: Filtering}

Because the investor cannot observe $\Theta_t$, she wishes to formulate a best guess for its value. The best possible guess for the distribution of $\Theta_t$ will be the distribution of $\Theta_t$ conditional on the information accumulated up until that time. Therefore, she wishes to compute
\begin{equation*}
  \pi_t^j = \E \left[ \left.\, \1{\Theta_t=\theta_j} \, \right | \, \mcF_t \right] ,\quad \forall j \in\mfJ
  \;.
\end{equation*}
The filter process $\bpi = (\{\pi_t^j\}_{j\in\mfJ})_{ {t\in[0,T]} }$ is  $\mcF$--adapted with initial condition $\bpi^0=\{\pi_0\}_{j\in\mfJ}$. It represents the posterior latent state distribution (given all information accumulated by the investor up until $t$).

\begin{theorem} \label{th: SDE Filter Theorem}
  Let us assume that the Novikov condition
  \begin{equation} \label{eq:NovikovCondition}
  \mathbb{E}\left[ \exp\left\{\int_0^T \left( A_u \right)^2 + \left( \lambda_u^+ \right)^2
  + \left( \lambda_u^-\right)^2
   \;du \right\} \right] < \infty
  \;
  \end{equation}
  holds.
  Then the filter $\bpi$ admits a representation with components
  \begin{equation}
    \pi_t^i = \Lambda_t^i \left/ \sum_{j=1}^J \Lambda_t^j \right.
    \;,
  \end{equation}
  where $\bLambda =(\{\Lambda_t^j\}_{j\in\mfJ})_{ {t\in[0,T]} }$. If $\sigma>0$, for each $i\in\mfJ$, $\Lambda_t^i$ solves the SDE
  \begin{equation}\label{eq:Lambda_SDE}
  \begin{split}
\frac{d\Lambda_t^i}{\Lambda_{t-}^i} =&\;
  \sigma^{-2}  A_{t-}^i \left( dF_t - b \,( dN_t^+ - dN_t^-)  \right) \\ \
  & \quad +
  (\lambda_{t-}^{+,i} -1)(dN_t^+ - dt) +
  (\lambda_{t-}^{-,i} -1)(dN_t^- - dt)
  + \sum_{j\in\mfJ} \left( \frac{\Lambda_{t-}^j}{\Lambda_{t-}^i} \right) C_{i,j} \;dt\;
  \end{split}
  \end{equation}
with initial condition $\bLambda_0 = \bpi_0$. If $\sigma = 0$ and $A_t:= 0$, for each $i\in\mfJ$, $\Lambda_t^i$ solves the SDE
  \begin{equation}\label{eq:Lambda_SDE_sigma0}
  \begin{split}
    \frac{d\Lambda_t^i}{\Lambda_{t-}^i} =
  (\lambda_{t-}^{+,i} -1)(dN_t^+ - dt) +
  (\lambda_{t-}^{-,i} -1)(dN_t^- - dt)
  + \sum_{j\in\mfJ} \left( \frac{\Lambda_{t-}^j}{\Lambda_{t-}^i} \right) C_{i,j} \;dt\;,
  \end{split}
  \end{equation}
  with the same initial condition.
  \begin{proof}
    See \ref{sec: Proof of SDE Filter Theorem}.
  \end{proof}
\end{theorem}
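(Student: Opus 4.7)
The plan is to prove this via the classical reference probability measure approach, which yields a Zakai-type SDE for an unnormalized filter and, after renormalization, the claimed representation of $\bpi$. The strategy has three ingredients: (a) change to a measure $\tilde{\P}$ under which the observations become decoupled from the latent state; (b) apply the Kallianpur--Striebel formula to express $\bpi$ as a ratio of $\tilde{\P}$-conditional expectations; (c) derive an SDE for the unnormalized quantities via It\^{o}'s formula under $\tilde{\P}$ and project cleanly onto $\mcF_t$.

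First, I would construct $\tilde{\P}\sim\P$ on $\mcG_T$ under which (i) $F-b(N^+ - N^-)$ is a Brownian motion scaled by $\sigma$, i.e.\ $dF_t = \sigma\, d\tilde{W}_t + b(dN_t^+ - dN_t^-)$ for a $\tilde{\P}$-Brownian motion $\tilde W$ obtained by stripping out the drift $A_t$ via Girsanov, and (ii) $N^\pm$ are standard (unit-intensity) Poisson processes, obtained by the Girsanov-type theorem for point processes. The latent chain $\Theta$ retains its generator $\bm{C}$ under $\tilde{\P}$ since the measure change acts only on the innovations of $F$ and $\bN$. The Radon--Nikodym derivative $Z_t = d\P/d\tilde{\P}|_{\mcG_t}$ then satisfies
\[
\frac{dZ_t}{Z_{t-}} = \sigma^{-1} A_{t-}\, d\tilde{W}_t + (\lambda_{t-}^+-1)(dN_t^+ - dt) + (\lambda_{t-}^--1)(dN_t^- - dt)\,,
\]
and the Novikov-type hypothesis \eqref{eq:NovikovCondition} is used exactly here to guarantee that $Z$ is a true $\tilde{\P}$-martingale so the measure change is valid on $[0,T]$.

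Second, by Kallianpur--Striebel (conditional Bayes), $\pi_t^i = \tilde{\E}[\1{\Theta_t=\theta_i}\,Z_t\,|\,\mcF_t] / \tilde{\E}[Z_t\,|\,\mcF_t]$; since $\sum_{j\in\mfJ}\1{\Theta_t=\theta_j}=1$, setting $\Lambda_t^i := \tilde{\E}[\1{\Theta_t=\theta_i}\,Z_t\,|\,\mcF_t]$ gives both the stated ratio representation $\pi_t^i = \Lambda_t^i / \sum_j \Lambda_t^j$ and the initial value $\Lambda_0^i = \pi_0^i$. To extract the SDE \eqref{eq:Lambda_SDE}, I would apply It\^{o}'s product rule to $\1{\Theta_t=\theta_i}\,Z_t$ under $\tilde{\P}$. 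The indicator is a pure-jump process whose compensator under $\tilde{\P}$ is a linear combination of the entries of $\bm{C}$; because $\Theta$ is independent of $\tilde W$ and (by the standard a.s.-no-simultaneous-jumps property) of $\bN$ under $\tilde{\P}$, the quadratic covariation $[\1{\Theta_\cdot=\theta_i}, Z]$ vanishes. The key algebraic simplification is $\1{\Theta_t=\theta_i}\lambda_t^{\pm} = \1{\Theta_t=\theta_i}\lambda_t^{\pm,i}$ (and analogously $\1{\Theta_t=\theta_i} A_t = \1{\Theta_t=\theta_i} A_t^i$), which collapses the intensity sums and leaves only the $i$th-regime quantities in the final expression.

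Finally, I would take $\tilde{\E}[\,\cdot\,|\,\mcF_t]$ of the resulting decomposition of $d(\1{\Theta_t=\theta_i}\,Z_t)$. Since $\tilde W$ and the compensated $N^\pm - t$ remain $\mcF$-martingales under $\tilde{\P}$ (as $F$ and $\bN$ are $\mcF$-adapted) and the integrands $A_{t-}^i,\lambda_{t-}^{\pm,i}$ are $\mcF$-predictable by assumption, the projection commutes with the stochastic integrals in the usual way; substituting $\sigma\, d\tilde{W}_t = dF_t - b(dN_t^+-dN_t^-)$ yields \eqref{eq:Lambda_SDE}. The case $\sigma=0$, $A\equiv 0$ requires only the counting-process half of the measure change and gives \eqref{eq:Lambda_SDE_sigma0} by an identical argument minus the Brownian terms. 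The main obstacle is the joint justification of the measure change---where \eqref{eq:NovikovCondition} is essential---together with the Fubini-type interchange of conditional expectation with the $\mcG$-stochastic integrals when projecting onto $\mcF_t$; both are standard but technical, and rely on the square-integrability assumption \eqref{eq: Square Integrable A}.
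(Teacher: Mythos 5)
Your proposal follows essentially the same route as the paper's proof: a Girsanov change to a reference measure under which the observations are unit-intensity/driftless and independent of $\Theta$ (justified by the Novikov condition), the Kallianpur--Striebel formula to define $\Lambda_t^i$ as an unnormalized conditional expectation, It\^{o}'s product rule applied to $\1{\Theta_t=\theta_i}$ times the density process, and a projection onto $\mcF_t$ using the square-integrability assumption to interchange conditional expectation with the stochastic integrals. The approach and the key simplification $\1{\Theta_t=\theta_i}\lambda_t^{\pm}=\1{\Theta_t=\theta_i}\lambda_t^{\pm,i}$ match the paper's argument, so the proposal is correct as a blueprint.
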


The process $\bLambda$ admits a simple closed form solution when $\bm C = \boldsymbol 0$. This case corresponds to when the latent regimes are constant over the trading period $[0,T]$ -- in others words, the case of parameter uncertainty, but the model does not switch between regimes throughout the trading horizon. When $\bm C \ne \boldsymbol 0$, solutions to the filter can be approximated reasonably well for most purposes by using methods outlined in~\cite{DiscreteWonhamApproximation}, which will be discussed further in Section~\ref{sec: Numerical Examples}.

An SDE also exists for the normalized version of the filter $\bpi_t$, however, for simplicity, we keep track of the processes $\bLambda$, and define the function (with a slight abuse of notation) $\pi^j: \mR^J_+ \mapsto [0,1]$ via
\begin{equation}
  \pi^j(\bUpsilon) = \Upsilon^j \left/\sum_{i=1}^J \Upsilon^i\right.\;,\qquad \forall \bUpsilon\in\mR^J_+,
\end{equation}
so that $\pi_t^j = \pi^j(\bLambda_t)$.
This choice of mapping $\bLambda$ into $\bpi$ guarantees that $\sum_{j=1}^J \pi_t^j=1$, even when numerically approximating \eqref{eq:Lambda_SDE}.

\section{\texorpdfstring{$\F$}--Dynamics Projection} \label{sec: Reducing the Problem}

In this section, we show there exists an $\mcF$--adapted representation for the price dynamics, and the intensity processes.
The sequence of arguments resemble those found in~\cite[Section 3]{bauerle2007portfolio}, adapted to the case where the observable process contains both jump and diffusive terms.

First, define the $\mcG$--adapted martingales $\bm M = (M_t^+,M_t^-)_{ {t\in[0,T]} }$ to be the compensated versions of the Poisson processes $\bN$, i.e.,
\begin{equation}
  M_t^\pm = N_t^\pm - \int_0^t \lambda_u^\pm\, du
  \;.
\end{equation}
The theorem below provides the necessary ingredients to provide the $\mcF$--adapted representations of the state processes.
\begin{theorem}\label{th: Predictable Representation Theorem }
If $\sigma>0$, define the processes $\widehat W=(\widehat W_t)_{ {t\in[0,T]} }$, $\widehat{\boldsymbol{M}}=({\widehat{M}}^+_t, {\widehat{M}}^-_t)_{ {t\in[0,T]} }$ by the following relations
\begin{subequations}
\begin{align}
    \widehat W_t &= W_t + \sigma^{-1} \int_0^t \left( A_u - \hA_u \right) \; du
    \;,\\
    \widehat M_t^\pm &= M_t^\pm + \int_0^t \left( \lambda_u^\pm - \hlambda^\pm_t \right)\; du \label{eqn:defMhat}
    \;,
\end{align}
\end{subequations}
where $\hA=(\hA_t)_{t\in[0,T]}$ and $\hlambda^\pm=(\hlambda^\pm_t)_{t\in[0,T]} $ are the filtered drift and intensities, defined as $\hA_t := \E\left[ A_t \lvert \mcF_t \right]$ and $\hlambda^\pm_t := \E\left[ \lambda^\pm_t \lvert \mcF_t \right]$.
Then,
  \begin{enumerate}[(A)]
    \item the process $\widehat W$ is an $\mcF$--adapted $\mathbb{P}$--Brownian motion;
    \item the process $\widehat{\boldsymbol{M}}$ is an $\mcF$--adapted $\mathbb{P}$--martingale; and
    \item $[\widehat W,\widehat M^\pm]_t = 0$ and $[\widehat M^+,\widehat M^-]_t = 0$, $\mathbb{P}$--almost surely.
    \item $N^\pm$ are $\mathcal{F}$--adapted doubly stochastic Poisson processes with $\mathbb{P}$-intensities $\hlambda^\pm$.
  \end{enumerate}

  If $\sigma = 0$ and $A:=0$, define $\widehat{\boldsymbol{M}}=({\widehat{M}}^+_t, {\widehat{M}}^-_t)_{ {t\in[0,T]} }$ as in \eqref{eqn:defMhat}. Then, (B) and (D) hold and $[\widehat M^+,\widehat M^-]_t = 0$, $\mathbb{P}$--almost surely.

  \begin{proof}
    See \ref{sec: Proof of Predictable Representation Theorem}.
  \end{proof}
\end{theorem}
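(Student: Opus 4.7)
The plan is to apply innovations-style filtering arguments: first establish $\mathcal{F}$-adaptedness of the candidate processes, then verify the martingale property in the smaller filtration $\mathcal{F}$ by combining the tower property with the fact that $\hat A$ and $\hat\lambda^\pm$ are the optional $\mathcal{F}$-projections of $A$ and $\lambda^\pm$, and finally upgrade ``$\mathcal{F}$-martingale'' to the structural statements (A), (C), and (D) via L\'evy's characterization and Watanabe's theorem.

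For part (A), I would first note that $\widehat W$ is $\mathcal{F}$-adapted because rearranging the SDE for $F$ yields $\sigma\,\widehat W_t = F_t - F_0 - b(N_t^+ - N_t^-) - \int_0^t \hat A_u\,du$, a functional of $(F,\bN,\hA)$ all of which live in $\mathcal{F}$. To check the $\mathcal{F}$-martingale property, for $s<t$ I would write
\begin{equation*}
\mathbb{E}\!\left[\widehat W_t - \widehat W_s \mid \mathcal{F}_s\right]
= \mathbb{E}\!\left[W_t - W_s \mid \mathcal{F}_s\right]
+ \sigma^{-1}\!\int_s^t \mathbb{E}\!\left[A_u - \hA_u \mid \mathcal{F}_s\right] du.
\end{equation*}
The first term vanishes by the tower property since $W$ is a $\mathcal{G}$-Brownian motion and $\mathcal{F}_s\subseteq\mathcal{G}_s$; the second vanishes by another application of the tower property ($\mathcal{F}_s\subseteq\mathcal{F}_u$) together with $\mathbb{E}[A_u\mid \mathcal{F}_u]=\hA_u$. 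Fubini is justified by the Novikov-type bound~\eqref{eq:NovikovCondition}, which trivially implies~\eqref{eq: Square Integrable A}. Continuity of $\widehat W$ is inherited from $W$ plus the absolutely continuous drift correction, and for the same reason $[\widehat W]_t = [W]_t = t$; L\'evy's characterization then gives (A).

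For part (B), I would observe that $\widehat M_t^\pm = N_t^\pm - \int_0^t \hat\lambda_u^\pm\,du$ (the $\lambda^\pm - \hat\lambda^\pm$ drift cancels against the $\mathcal{G}$-compensator of $N^\pm$), which is manifestly $\mathcal{F}$-adapted. The $\mathcal{F}$-martingale property follows from the identical two-tower computation: $\mathbb{E}[M_t^\pm - M_s^\pm\mid\mathcal{F}_s]=0$ because $M^\pm$ is a $\mathcal{G}$-martingale, and $\mathbb{E}[\int_s^t(\lambda_u^\pm - \hat\lambda_u^\pm)\,du\mid \mathcal{F}_s]=0$ by Fubini and the defining property of $\hat\lambda^\pm$ as the $\mathcal{F}$-optional projection. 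Part (D) is then a direct consequence via Watanabe's theorem: $N^\pm$ are $\mathcal{F}$-adapted simple counting processes with continuous $\mathcal{F}$-compensator $\int_0^t \hat\lambda_u^\pm\,du$, and Watanabe's characterization of the Poisson process (in its doubly stochastic form, as in Br\'emaud) identifies them as $\mathcal{F}$-doubly stochastic Poisson processes with intensities $\hat\lambda^\pm$.

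For part (C), the bracket $[\widehat W,\widehat M^\pm]$ vanishes because $\widehat W$ is continuous while $\widehat M^\pm$ is purely discontinuous (a compensated counting process), and the bracket of a continuous local martingale with a purely discontinuous one is zero. The identity $[\widehat M^+,\widehat M^-]_t = \sum_{u\le t}\Delta N_u^+\,\Delta N_u^-$ vanishes because the doubly stochastic Poisson processes $N^+, N^-$ are driven by independent (in fact orthogonal) random measures and almost surely have no common jump times; this is inherited from their construction in the original filtration $\mathcal{G}$. The reduced case $\sigma=0$, $A\equiv 0$ is literally the same argument with the $\widehat W$-piece omitted. The main technical hurdle throughout is justifying the Fubini exchanges and identifying $\hat A,\hat\lambda^\pm$ as honest optional projections with $\mathcal{F}$-predictable progressive versions; both are underwritten by the Novikov hypothesis~\eqref{eq:NovikovCondition}, and everything else is bookkeeping via the tower property.
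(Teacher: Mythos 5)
Your proposal is correct and follows essentially the same route as the paper's proof: $\mathcal{F}$-adaptedness by rewriting $\widehat W$ and $\widehat M^\pm$ in terms of $(F,\boldsymbol{N},\widehat A,\widehat\lambda^\pm)$, the martingale property via the double tower-property/Fubini computation, L\'evy's characterization for (A), Watanabe's characterization for (D), and the continuous-versus-jump structure of the processes for (C). The only cosmetic difference is that you invoke the Novikov bound to justify Fubini where the paper cites the square-integrability assumption~\eqref{eq: Square Integrable A} directly, which is immaterial since the former implies the latter.
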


Theorem~\ref{th: Predictable Representation Theorem } tells us that $N^\pm$, in addition to being viewed as a $\mcG$--adapted doubly stochastic Poisson process with $\PP$--intensity of $\lambda^\pm$, can be viewed as an $\mcF$--adapted doubly stochastic process with $\PP$-intensity $\wlambda^\pm$. That is, $N^\pm$ is a doubly stochastic Poisson process with respect to both the $\mcF$ and $\mcG$ filtrations, but with differing intensities.

Theorem~\ref{th: Predictable Representation Theorem } allow us to represent the dynamics of $F$ in their $\mcF$--predictable form as
\begin{equation} \label{eq: F predictable dynamics}
  dF_t = \left( \hA_t + b \, ( \widehat \lambda_t^{+} - \widehat \lambda_t^{-} ) \right) dt + b \left(  d{\widehat{M}}_t^+ - d{\widehat{M}}_t^- \right) + \sigma \;d\widehat W_t
  \;.
\end{equation}
Let us also note that because $A_t = \sum_{j\in\mfJ} \1{\Theta_t = \theta_j} \, A_t^j$ and $\lambda_t^\pm = \sum_{j\in\mfJ} \1{\Theta_t = \theta_j} \, \lambda_t^{j,\pm}$, because $\{A^j:j\in\mfJ\}$ are $\mcF$--adapted, we may take a conditional expectation with respect to $\mcF_t$ to yield that $\hA_t = \sum_{j\in\mfJ} \pi_t^j A_t^j$ and $\hlambda_t^\pm = \sum_{j\in\mfJ} \pi_t^j \, \lambda_t^{j,\pm}$. Therefore we may define the functions, $\hA:\mR_+\times \mR\times\mZ_+^2\times \mR^J_+\mapsto\mR$ and $\wlambda^\pm:\mR_+^{2J}\times \mR^J_+\mapsto\mR_+$ as
\begin{equation}
\label{eq:Def-hA-hlambda}
\hA(t,F,\bN,\bLambda) := \sum_{j\in\mfJ} \pi^j(\bLambda) \; A(t,F,\bN,\theta_j) \quad
 \text{and} \quad
 \hlambda^\pm(\bm\lambda,\bm\Lambda) := \sum_{j\in\mfJ} \pi^j(\bLambda) \; \lambda^{\pm,j}\;,
\end{equation}
so that $\hA_t = \hA(t,F_t,\bN_t,\bLambda_t)$ and $\hlambda_t^\pm = \hlambda^\pm(\bm\lambda_t,\bm\Lambda_t)$.

Hence, the collection of processes $(F,\bN,\blambda,\bLambda)$ are $\mcF$--adapted. The optimal control problem corresponding to maximizing~\eqref{eq: Initial Objective}, within the admissible set, can therefore be regarded as a problem with complete information with respect to the extended state variable process $(S^\nu,F,\bN,X^\nu,Q^\nu,\blambda,\bLambda)$. The joint dynamics of this state process are all $\mcF$--adapted and do not depend on the process $\Theta$. Therefore, the dynamics of the extended state process are completely visible to the investor, which reduces the control problem with partial information, in which we did not know the dynamics of the state variables, into a control problem with full information.

In the next section, we solve this control problem by using the fact that the extended state variable dynamics are $\mcF$--adapted for each $\nu\in\mcA$. Hence, the dynamic programming principle can be applied to the optimization problem~\eqref{eq: Initial Objective} and we derive a dynamic programming equation for the new problem.

\section{Solving the Dynamic Programming Problem} \label{sec: Solving the DPP }

\subsection{The Dynamic Programming Equation}

Using the definitions for $S_t^\nu$ and $Q_t^\nu$ in \eqref{eq: Permanent Impact} and~\eqref{eq: Inventory Definition}, we can write $S_t^\nu$ as
\begin{equation}
  S_t^\nu = F_t + \beta \left( Q_t^\nu - \mfN \right)
  \;,
\end{equation}
as well we can write
\begin{equation} \label{eq: alternative X definition}
  dX_t^\nu = -\,\nu_t \left( F_t + \beta \left( Q_t^\nu - \mfN \right) - a\,\nu_t \right) \, dt
  \;,
\end{equation}
which allows $X^\nu$ to be defined independently of $S^\nu$. Hence, the trader's objective criterion~\eqref{eq: Initial Objective} becomes
\begin{equation} \label{eq: Initial Objective alternative definition}
  \E \left[
    X_T^\nu + Q_T^\nu \left( F_T + \beta \left( Q_T^\nu - \mfN \right) - \alpha\, Q_T^\nu \right)
    - \phi \int_0^T \left( Q_u^\nu \right)^2 du
  \right]
  \;.
\end{equation}
With $X$  given by \eqref{eq: alternative X definition}, the trader's objective function does not depend on the value of the process $S^\nu$. For the remainder of this section, we will use the above definition for the trader's objective criterion.

To optimize the objective criterion~\ref{eq: Initial Objective alternative definition}, we use the fact that $\forall \;\nu\in\mcA$, the $(3J+5)$-dimensional state variable process ${\boldsymbol {Z}}^\nu = (F,\bN,X^\nu,Q^\nu,\blambda,\bLambda)$ is $\mcF$--adapted and, hence, has dynamics visible to the trader. First, let us define the functional
\begin{equation} \label{eq: Value Function Definition}
  H^\nu(t,\bZ)
  = \underset{t,{\bZ}}{\mathbb{E}} \left[ X_T^\nu + Q_T^\nu \left( F_T + \beta \left( Q_T^\nu - \mfN \right)  - \alpha Q_T^\nu \right)
    - \phi \int_t^T \left( Q_u^\nu \right)^2 du \right]\,,
\end{equation}
and the value function
\begin{equation} \label{eq: Optim Value Function Definition}
  H(t,\bZ) = \sup_{\nu\in\mcA} H^\nu(t,\bZ)
  \;,
\end{equation}
where we use $\underset{t,{\bZ}}{\mathbb{E}}[\;\bm\cdot\;]$ to represent the expected value given the initial conditions ${\mathcal{\boldsymbol Z}}_{t^-}^\nu = {\bZ}=(F,\bN,X,Q,\blambda,\bLambda)\in \mathcal D$, where $\mathcal D=\mathds R \times \mathds Z_+^2 \times \mR\times \mR \times \mR_+^{2J} \times R_+^{J}$. The definition of $H^\nu$ implies that $H^\nu(0,\bZ_0)$, where $\bZ_0=(F,\bm 0 , X , \mfN , \blambda, \bm \pi_0)$, is the objective criterion defined in equation~\eqref{eq: Initial Objective alternative definition}. Furthermore, a control $\nu^\star\in\mcA$ is optimal and solves the optimization problem described in Section~\ref{sec: Defining the Objective} if it satisfies
\begin{equation}
  H^{\nu^{\star}}(0,\bZ_0) = H(0,\bZ_0)
  \;.
\end{equation}

Given the $\mcF$--adapted version of the dynamics of the state variables, for any Markov admissible control $\nu\in \mcA$, there exists some function $g:\mR_+ \times \mathcal D$, such that $\nu_t = g(t,\bZ_t^\nu)$. For such controls, the function $H$ must satisfy the Dynamic Programming Principle and the Dynamic Programming Equation (DPE) (see, e.g., \cite[Chapter 3]{PhamControlBook}) applies. The DPE for our specific problem suggests that $H$ satisfies the PDE
\begin{equation} \label{eq: unsuped HJB}
\left\{
  \begin{aligned}
    -\phi\, q^2 + \sup_{\nu\in\mathds{R}} \left\{
      (\partial_t + \mathcal{L}^{\nu})\,H(t,\bZ)
    \right\} &=0\;,\\
    H(T,\bZ) &= X + Q\,( F + \beta \left( Q - \mfN \right) - \alpha \,Q)\;,
  \end{aligned}
\right.
\end{equation}
where $\mathcal{L}^\nu$ is the infinitesimal generator for the state process ${\mathcal{\boldsymbol Z}}^\nu$ using the predictable representation for the dynamics of $F$ and the intensity of $\bN$, given a fixed control $\nu$. Furthermore, the operator $\mathcal{L}^\nu$ acts on functions $f:\mR_+ \times \mathcal D\mapsto\mR$, once differentiable in $t$, twice differentiable in $F,\blambda,\bLambda$ and all (componentwise) cross-derivatives, and once differentiable in $X,Q$, as follows
\begin{align*} \label{eq: L Bar Differential Operator}
  \mathcal{L}^{\nu} f &= \nu\,\partial_Q f - \nu\,( F + \beta \left( Q - \mfN \right) + a \,\nu )\,\partial_X f + \mathcal{\bar L}f \;,
\end{align*}
where $\mathcal{\bar L}$ is the infinitesimal generator of the process $(F,\bN,\blambda,\bLambda)$ using its $\F$--predictable representation, which is independent of the control $\nu$. This portion of the generator can be fairly generic because we have not specified the precise nature of the dynamics of the intensity processes -- which is the impetus for separating this portion of the generator.

\subsection{Dimensional Reduction}

The Dynamic Programming Equation~\eqref{eq: unsuped HJB} can be simplified by introducing the ansatz
\begin{equation*}
  H(t,\bZ) = X + Q\,\left( F + \beta (Q - \mathfrak N ) \right) + h(t, \bell(\bZ)) \;,
\end{equation*}
where for $\bZ=(F,\bN,X,Q,\blambda,\bLambda)\in\mathcal D$, we write $\bell(\bZ) = (F,\bN,
Q,\blambda,\bLambda)\in\mR \times \mathds Z_+^2 \times \mR \times \mR^{2J}_+\times \mR^J$. The
PDE~\eqref{eq: unsuped HJB} then simplifies significantly to a PDE for $h$,
\begin{equation}
\label{eq: Reduced Ansatz 1 HJB PDE}
\left\{
  \begin{array}{rl}
    0 =&\!\! -\phi \, Q^2 + \left( \partial_t + \mathcal{\bar L} \right) h(t,\bell)
    + Q\,\left( \hA(t,F,\bN,\bLambda)
    + b \left( \widehat \lambda^+(\blambda,\bLambda ) -
    \widehat \lambda^-(\blambda,\bLambda ) \right)  \right)
    \\&\!\!+ \displaystyle\sup_{\nu \in \mathds{R}} \left\{ (\beta\,Q + \partial_Q h)\,\nu - a\, \nu^2 \right\} \\
    h(T,\bell) =&\!\! -\alpha\, Q^2
    \;,
  \end{array}
  \right.
\end{equation}
where the functions $\hA$ and $\hlambda^\pm$ are defined in equation~\eqref{eq:Def-hA-hlambda}. This PDE implies that the feedback control for this problem should be
\begin{equation} \label{eq: Feedback Control}
\nu^\star(t,\bZ) = \tfrac{1}{2a} \left(\; \beta \,Q + \partial_Q h(t,\bell(\bZ))\; \right){}
  \;.
\end{equation}
In other words, the second line of the PDE~\eqref{eq: Reduced Ansatz 1 HJB PDE} attains its supremum at $\nu^\star$ defined above.

\subsection{Solving the DPE}
The ansatz provided above permits us to indeed find a solution to the PDE~\eqref{eq: unsuped HJB} which is presented in the proposition that follows.
\begin{proposition}[Candidate Solution]\label{prop: Candidate Solution}
  The PDE~\eqref{eq: unsuped HJB}, admits the classical solution $H$
  \begin{equation*}
  H(t,\bZ)
  =
  X + Q\,\left( F + \beta\,(Q-\mfN) \right) +
  h_0(t,\bchi(\bZ)) + Q\: h_1(t, \bchi(\bZ))
  + Q^2\: h_2(t) \;,
  \end{equation*}
  where $\bchi(\bZ) = (F,\bN, \blambda, \bLambda)$. Let, $\underset{t,\bchi}{\E} [\;\bm\cdot\;]$ denote expectation conditional on the initial conditions $(F_{t^-},\bN_{t^-},\blambda_{t^-},\bLambda_{t^-})=\bchi$, and define the constants
$\gamma = \sqrt{\phi/a\,}$ and
$\zeta = \tfrac{\alpha - \tfrac{1}{2}\beta  + a \gamma}{\alpha - \tfrac{1}{2} \beta - a \gamma}$. We have that
\\
(i) if $\alpha - \tfrac{1}{2}\beta \neq \sqrt{a\phi}$, then
  \begin{subequations}
  \begin{align}
    h_2(t) &= -a\,\gamma \left( \frac{\zeta e^{\gamma\left( T-t \right)}  + e^{-\gamma\left( T-t \right)}}{\zeta e^{\gamma\left( T-t \right)} - e^{-\gamma\left( T-t \right)}} \right) + \tfrac{1}{2} \beta \\
    h_1(t,\bchi) &=
    \int_t^T
    \underset{t,\bchi}{\mathbb{E}}
    \left[
    \hA_u + b\,(\wlambda_u^{+} - \wlambda_u^{-})
    \right]
    \, \left( \frac{\zeta e^{\gamma\left( T-u \right)}  - e^{-\gamma\left( T-u \right)}}{\zeta e^{\gamma\left( T-t \right)} - e^{-\gamma\left( T-t \right)}} \right) du \label{eq:h1_solution} \\
    h_0(t,\bchi) &= \tfrac{1}{4a} \;
    \underset{t,\bchi}{\E} \left[ \int_t^T \left( h_1(u,\bchi_u) \right)^2 \; du \right].
  \end{align}
\end{subequations}
(ii) if $\alpha - \tfrac{1}{2}\beta = \sqrt{a\phi}$, then
\begin{subequations}
  \begin{align}
    h_2(t) &= -a\,\gamma + \tfrac{1}{2} \beta \\
    h_1(t,\bchi) &=
    \int_t^T
    \underset{t,\bchi}{\mathbb{E}}
    \left[
    \hA_u + b\,(\wlambda_u^{+} - \wlambda_u^{-})
    \right]
    \, e^{-\gamma (u-t)} du \label{eq:h1_solution_2} \\
    h_0(t,\bchi) &= \tfrac{1}{4a}\;
    \underset{t,\bchi}{\E} \left[ \int_t^T \left( h_1(u,\bchi_u) \right)^2 \; du \right]\,,%
  \end{align}%
\end{subequations}%
where $\bchi_t=\bchi(\bZ_t)$.
\end{proposition}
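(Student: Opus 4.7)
The plan is to insert the ansatz $H = X + Q(F + \beta(Q - \mfN)) + h_0(t, \bchi) + Q\, h_1(t, \bchi) + Q^2 h_2(t)$ directly into the reduced PDE~\eqref{eq: Reduced Ansatz 1 HJB PDE}. Since $\partial_Q h = h_1 + 2Q h_2$, the interior supremum is attained at $\nu^\star = (h_1 + (\beta + 2 h_2) Q)/(2a)$ with maximum value $(h_1 + (\beta + 2 h_2) Q)^2/(4a)$, and $\bar{\mathcal L}$ passes through the factors of $Q$ while annihilating $h_2(t)$. What remains is a polynomial identity of degree two in $Q$. Matching coefficients of $Q^2$, $Q^1$, $Q^0$ decouples the problem into a cascade: a Riccati ODE for $h_2(t)$, a linear backward PDE with source $\hA + b(\wlambda^+ - \wlambda^-)$ and deterministic discount rate $(\beta + 2h_2)/(2a)$ for $h_1(t,\bchi)$, and a linear backward PDE with source $h_1^2/(4a)$ for $h_0(t,\bchi)$. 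The terminal conditions $h_2(T) = -\alpha$, $h_1(T,\cdot) = 0$, $h_0(T,\cdot) = 0$ are read off from the terminal data of $H$.

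For the $Q^2$-equation, the substitution $u = \beta + 2h_2$ turns the Riccati ODE into $u' = (4a\phi - u^2)/(2a)$, whose right-hand side factors as $-(u - 2a\gamma)(u + 2a\gamma)/(2a)$ with $\gamma = \sqrt{\phi/a}$. Separation of variables gives $\frac{u - 2a\gamma}{u + 2a\gamma} = C e^{-2\gamma t}$, and the terminal value $u(T) = \beta - 2\alpha$ pins $C$ down and introduces the parameter $\zeta$, covering case (i) when $u(T) \neq \pm 2a\gamma$. In the exceptional case $\alpha - \beta/2 = \sqrt{a\phi}$, the terminal value coincides with a stationary point of the Riccati, so $u$ is constant, yielding case (ii) immediately.

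For $h_1$ and $h_0$, I would invoke the Feynman-Kac formula on the linear backward equations generated by $\bar{\mathcal L}$. Writing $\mu_t := \hA_t + b(\wlambda_t^+ - \wlambda_t^-)$, this produces
\[
h_1(t, \bchi) = \int_t^T \underset{t, \bchi}{\E}[\mu_u]\,\exp\!\left( \int_t^u \tfrac{\beta + 2h_2(s)}{2a}\,ds \right) du,
\]
so the closed-form claim reduces to evaluating the exponential discount factor. The key identity is $\frac{d}{ds}\ln|\zeta e^{\gamma(T-s)} - e^{-\gamma(T-s)}| = (\beta + 2h_2(s))/(2a)$ in case (i), with the constant rate $-\gamma$ in case (ii). Integrating and exponentiating returns exactly the ratios in \eqref{eq:h1_solution} and \eqref{eq:h1_solution_2}. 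Feeding this back into Feynman-Kac for $h_0$ then gives the displayed expectation of $\int_t^T h_1(u,\bchi_u)^2\,du/(4a)$.

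The main technical obstacle is justifying Feynman-Kac in this jump-diffusion setting with the filter $\bLambda$ coupled to the intensities of $\bN$: one needs integrability of the sources $\mu$ and $h_1^2$ along paths of $(F,\bN,\blambda,\bLambda)$, which follows from the Novikov hypothesis~\eqref{eq:NovikovCondition} and the square-integrability assumption~\eqref{eq: Square Integrable A}. A secondary subtlety is verifying that $\zeta e^{\gamma(T-s)} - e^{-\gamma(T-s)}$ keeps constant sign on $[0,T]$ (so the exponential of the log-derivative recovers the correct signed ratio) and that the resulting $h_0$, $h_1$ have enough regularity in $\bchi$ to make $H$ a classical solution; the exceptional case (ii) can then be recovered either by direct substitution or by a limiting argument $\zeta \to \infty$ in the formulas of case (i).
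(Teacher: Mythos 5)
Your proposal follows essentially the same route as the paper's proof: substitute the ansatz into the reduced DPE, evaluate the interior supremum by completing the square, match coefficients of $Q^2$, $Q$, and $1$ to obtain the Riccati ODE for $h_2$ and the linear Feynman--Kac problems for $h_1$ and $h_0$, with the same appeal to \eqref{eq: Square Integrable A} for the Fubini exchange and the boundedness of $h_0$. The only difference is one of detail rather than method — you spell out the separation-of-variables solution of the Riccati equation and the log-derivative identity for the discount factor, which the paper simply asserts as standard — so the argument is correct and matches the paper's.
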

\begin{proof}
See \ref{proof: Prop Candidate}.
\end{proof}

For the remainder of the paper, we will concern ourselves with the case where $\alpha - \tfrac{1}{2}\beta > \sqrt{a\phi}$, because in most applications the trader wishes to completely liquidate by the end of the trading horizon, and so $\alpha\gg1$, while $\sqrt{a\phi}$ is comparatively small.

The above proposition and equation~\eqref{eq: Feedback Control} suggest that the optimal trading speed the investor should employ is
\begin{align} \label{eq: Candidate optimal cnotrol}
  \nu^\star_t
  = \tfrac{1}{2a}\, \left( \, 2\,h_{2}(t) + \beta \, \right) \,Q_t^{\nu^\star} +
  \tfrac{1}{2a}\;h_{1}(t,\bchi_t)
  \;.
\end{align}
This optimal trading strategy is a combination of two terms (i) the classical Almgren-Chriss (AC) liquidation strategy represented by $\tfrac{1}{2a}  \left( \; 2h_{2}(t) + \beta \; \right)\,Q_t^{\nu^\star}$; and (ii) a term which adjusts the strategy based on expected future midprice movements, represented by $\frac{1}{2a}\,h_1(t,\bchi_t)$. From the representation of $h_1$ in \eqref{eq:h1_solution} (or \eqref{eq:h1_solution_2}), this latter term is the weighted average of the expected future drift of the asset's midprice. Therefore if, based on her current information, the trader believes that the asset midprice drift will remain largely positive for the remainder of the trading period, she will buy more of the asset relative to the AC strategy. This is reasonable, because she knows she will be able to sell the asset at a higher price once asset prices have risen. The exact opposite occurs when she expects the asset price drift to remain mostly negative over the rest of the trading period.

The result in \eqref{eq: Candidate optimal cnotrol} illustrates how the investor uses the filter $\bpi_t$ for the posterior probability of what latent state is currently prevailing, to consistently update her strategy based on her predictions of the future path of the asset midprice. Moreover, the solution here closely resembles the result obtained by~\cite{cartea2016incorporating}, however, it explicitly incorporates latent information and jumps in the asset price.

Computing the expectation appearing in $h_1$ directly is not easy. There is, however, an alternate representation of this expectation. For any $u \geq t$, we have
\begin{equation}
\underset{t,\bchi}{\mathbb{E}}
    \left[
    \hA_u + b(\wlambda_u^{+} - \wlambda_u^{-})
    \right]
 =
    \sum_{j\in\mfJ} \pi^j(\bLambda) \underset{t,\bchi,\theta_j}{\mathbb{E}}
    \left[
    A_u + b(\lambda_u^{+} - \lambda_u^{-})
    \right]
    \;,
\end{equation}
where $\underset{t,\bchi,\theta_j}{\mathbb{E}}[\;\bm\cdot\;]$ denotes expectation conditioning on the initial condition $(F_{t^-},\bN_{t^-},\blambda_{t^-},\bLambda_{t^-})=\bchi$ and $\Theta_t = \theta_j$. The alternative form in the rhs above is almost always easier to compute than a direct computation of the lhs.

Next, we provide a verification theorem showing that the candidate solution in Proposition~\ref{prop: Candidate Solution} is exactly equal to the value function $H$ defined in equation~\eqref{eq: Optim Value Function Definition}.
\begin{theorem}[Verification Theorem] \label{theorem: Verification Theorem}
    Suppose that $h$ is the solution to the PDE~\eqref{eq: Reduced Ansatz 1 HJB PDE}, and that $\alpha-\frac{1}{2}\beta \neq a \gamma$. Let $\widehat H(t,\bZ) = X + Q\,\left( F + \beta (Q - \mfN) \right) + h(t,\bell(\bZ))$, where $\bell(\bZ) = (F,\bN,Q,\blambda,\bLambda)$.\\
	Then $\widehat H$ is equal to the value function $H$ defined in~\eqref{eq: Optim Value Function Definition}. Furthermore the control
\begin{align}
  \nu_t^{\star}
  = \tfrac{1}{2a} \left( \; 2\,h_{2}(t) + \beta \; \right)  \, Q_t^{\nu^\star} +
  \tfrac{1}{2a}\;h_{1}(t,\bell(\bZ_t))
\end{align}
is optimal and satisfies
\begin{equation}
  H(t,\bZ) = H^{\nu^\star}(t,\bZ)
  \;.
\end{equation}
\end{theorem}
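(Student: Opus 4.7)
The plan is to carry out the standard verification argument in jump--diffusion form: apply It\^o's formula to the candidate $\widehat H(t,\bZ_t^\nu)$ along an arbitrary admissible control, invoke the DPE \eqref{eq: unsuped HJB} that $\widehat H$ satisfies (which follows from Proposition~\ref{prop: Candidate Solution} together with the ansatz used to reduce \eqref{eq: unsuped HJB} to \eqref{eq: Reduced Ansatz 1 HJB PDE}), and thereby sandwich $\widehat H$ between $H^\nu$ and $H^{\nu^\star}$.

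Concretely, I would first check that $\widehat H$ has the regularity required for It\^o on the state $\bZ^\nu = (F,\bN,X^\nu,Q^\nu,\blambda,\bLambda)$. Smoothness in $F$, $X$, $Q$ is immediate from the polynomial structure $X + Q(F+\beta(Q-\mfN)) + h_0 + Q\,h_1 + Q^2 h_2$, while smoothness of $h_0,h_1$ in $(\blambda,\bLambda)$ follows from their representation as conditional expectations of Markovian functionals of $(F,\bN,\blambda,\bLambda)$. Applying It\^o along any $\nu\in\mcA$ yields
\begin{equation*}
\widehat H(T,\bZ_T^\nu) - \widehat H(t,\bZ_t^\nu) = \int_t^T (\partial_s + \mathcal L^\nu)\widehat H(s,\bZ_s^\nu)\,ds + \mathcal{M}_T^\nu - \mathcal{M}_t^\nu,
\end{equation*}
where $\mathcal{M}^\nu$ collects the stochastic integrals against $\widehat W$ and the compensated $\widehat M^\pm$. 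Because $\widehat H$ solves \eqref{eq: unsuped HJB}, we have $(\partial_s + \mathcal L^\nu)\widehat H - \phi Q^2 \le 0$ pointwise, with equality exactly at $\nu=\nu^\star$. Using the terminal condition $\widehat H(T,\bZ) = X + Q(F+\beta(Q-\mfN) - \alpha Q)$ and taking the conditional expectation given $\bZ_{t^-}^\nu=\bZ$ --- once the stochastic integrals are shown to be true martingales --- gives $\widehat H(t,\bZ) \ge H^\nu(t,\bZ)$, and taking the supremum over $\mcA$ yields $\widehat H \ge H$. Re-running the argument along $\nu^\star$, provided $\nu^\star\in\mcA$, turns the inequality into an equality, delivering simultaneously $\widehat H = H$ and the optimality of $\nu^\star$.

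The main obstacles are technical integrability checks rather than algebraic manipulation. First, one must establish $\nu^\star\in\mcA$: the feedback \eqref{eq: Candidate optimal cnotrol} is affine in $Q_t^{\nu^\star}$ with $\mcF$-adapted coefficients $h_2(t)$ and $h_1(t,\bchi_t)$, so $Q^{\nu^\star}$ solves a linear SDE whose drift coefficient $h_2$ is bounded on $[0,T]$ --- the hypothesis $\alpha-\tfrac12\beta \ne a\gamma$ is precisely what excludes the blow-up of the $\coth$-like factor defining $h_2$ --- and whose forcing $h_1$ satisfies $\E[\int_0^T h_1(u,\bchi_u)^2\,du]<\infty$ by Cauchy--Schwarz and \eqref{eq: Square Integrable A}, so a Gronwall estimate delivers $\E[\int_0^T (\nu_u^\star)^2\,du]<\infty$. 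Second, one must upgrade the local martingales $\mathcal{M}^\nu$, $\mathcal{M}^{\nu^\star}$ to genuine martingales: the integrands are polynomial in $(F,Q,\blambda,\bLambda)$, so a localization argument combined with uniform-in-stopping-time $L^2$ bounds on the state --- obtained from the Novikov condition \eqref{eq:NovikovCondition}, the moment hypothesis \eqref{eq: Square Integrable A}, and $\nu\in\mcA$ --- allows passage to the limit by dominated convergence. This bookkeeping, especially ensuring that the conditional expectations defining $h_0$ and $h_1$ inherit sufficient moments so that the expectation of $\mathcal{M}$ genuinely vanishes, is the step I expect to require the most care.
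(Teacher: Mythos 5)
Your proposal is correct and follows essentially the same route as the paper's proof: verify admissibility of $\nu^\star$ by bounding $h_2$, $h_1$, and $Q^{\nu^\star}$ in $L^2$ using the hypothesis $\alpha-\tfrac12\beta\neq a\gamma$ and condition~\eqref{eq: Square Integrable A}, then apply It\^o's formula to $\widehat H$ along the $\mcF$--predictable dynamics, use the DPE inequality with equality at $\nu^\star$, and take conditional expectations to sandwich $\widehat H$ between $H^\nu$ and $H^{\nu^\star}$. The only cosmetic differences are that you invoke Gronwall where the paper solves the linear ODE for $Q^{\nu^\star}$ explicitly, and you are somewhat more explicit about the localization needed to make the stochastic integrals true martingales, which the paper handles by appealing to the martingale representation theorem with square-integrable integrands.
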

\begin{proof}
  See \ref{sec: Proof Verification Theorem}.
\end{proof}

The theorem above guarantees that the control provided above indeed solves the optimization problem presented in Section~\ref{sec: Defining the Objective}. In retrospect, the optimal control to the trader's optimization problem with partial information is a Markov control. The key steps were to introduce the predictable representation for the dynamics of the process $F$, and to extend the original state process to include the unnormalized posterior distribution $\bLambda$ of the latent states $\Theta$.

\subsection{Zero Terminal Inventory}

A useful limiting case is when the trader is forced to eliminate her market exposure before time $T$. This corresponds to taking the limit $\alpha \rightarrow \infty$ and the resulting optimal control simplifies to
\begin{equation} \label{eq:Optimal Control Alpha Limit}
\begin{split}
    \lim_{\alpha \rightarrow \infty } \nu_t^\star =
  & - \gamma \,\coth \left( \gamma (T-t) \right)\;Q_t^{\nu^\star}
  \\
  &+ \frac{1}{2a}\sum_{j\in\mfJ} \pi^j(\bLambda_t) \int_t^T \underset{t,\bchi_t, \theta_j}{\mathbb{E}}
    \left[
    A_u + b(\lambda_u^{+} - \lambda_u^{-})
    \right]
    \left(\frac{\sinh\left(\gamma(T-u)\right)}{\sinh\left(\gamma(T-t)\right)}\right) du
  \;.
\end{split}
\end{equation}
A second interesting case is to additionally take the limit of no running inventory penalty, in which case the optimal strategy results in
\begin{equation}
\begin{split}
    \lim_{\phi \rightarrow 0} \lim_{\alpha \rightarrow \infty } \nu_t^\star =  &- \frac{1}{T-t} \; Q_t^{\nu^\star}
    \\
  &+\frac{1}{2a}\sum_{j=1}^J \pi^j(\bLambda_t) \int_t^T \underset{t,\bchi_t, \theta_j}{\mathbb{E}}
    \left[
    A_u + b(\lambda_u^{+} - \lambda_u^{-})
    \right]
    \left(\frac{T-u}{T-t}\right) du\,.
\end{split}
\end{equation}
This strategy corresponds to a time weighted average price (TWAP) strategy plus an adjustment for the weighted expected future drift of the asset's midprice.

All of the expressions above for the optimal control can be computed in closed form for a large variety of models. In the next section we provide two explicit, and useful, examples together with numerical experiments to illustrate the strategies dynamic behaviour.

\section{Numerical Examples} \label{sec: Numerical Examples}

In this section we will carry out some numerical experiments to test the performance of the optimal trading algorithms developed in Section~\ref{sec: Solving the DPP }. The examples show how the optimal trading performs using situations for two model set-ups.

\subsection{Mean-Reverting Diffusion}

This section investigates the case where the trader wishes to liquidate her inventory before some specified time $T$. The asset price is assumed to be a pure diffusive Ornstein-Uhlenbeck process -- alternatively, one can think of this midprice as the number of long-short position in a pairs trading strategy. The trader knows the volatility and rate of mean reversion, but does not know the level at which prices revert to. In this example, the mean-reversion level will remain constant over the course of the trading period $[0,T]$. More specifically, we assume that the asset midprice in USD has the dynamics
\begin{equation}\label{eq:OU-model}
  dF_t = \kappa\, ( \Theta - F_t ) \; dt + \sigma\, dW_t
  \;,
\end{equation}
where $\Theta$ is a random variable taking values in the set $\{ \theta_j \}_{j\in\mfJ}$ with probabilities $\{\pi_0^j\}_{j\in\mfJ}$. It remains constant over time but its value is hidden from the trader. This model does not contain any jumps so we can ignore the variables $\bN$ and $\blambda$.

As mentioned in Section~\ref{sec: Filtering}, there exists an exact closed form for the filter when $\Theta_t$ is constant in time. For the regime switching OU model in \eqref{eq:OU-model}, the exact solution for the un-normalized filter is
\begin{equation}
  \Lambda_t^j =
  \pi_0^j \;\exp\left\{\sigma^{-2} \left( \textstyle\int_0^t \kappa (\theta_j - F_u)\; dF_u - \tfrac{1}{2} v\int_0^t \kappa^2 ( \theta_j - F_u )^2 \;du \right)\right\}
  \,, \;\;\forall j\in\mfJ.
\end{equation}
Because, in practice, $F$ is observed only discretely, the integrals above are approximated using the appropriate Riemann sums. The more frequently the trader observes $F$, the more accurate the filter will be.

The solution to the optimal control when $\alpha \rightarrow \infty$ can be computed exactly as
\begin{equation*}
\begin{split}
  \nu^\star (t,F,Q,\bLambda)= &  -\gamma \coth \left( \gamma \, (T-t) \right)\,Q  \\
  &+\sum_{j=1}^J
  \pi^j(\bLambda)\;\kappa\,(F-\theta_j)
  \int_t^T
     e^{-\kappa(u-t)}
     \frac{\sinh\left(\gamma(T-u)\right) }{\sinh\left(\gamma(T-t)\right)} du
  \;.
\end{split}
\end{equation*}

For the simulations, here, we assume there are two possible values the asset price mean-reverts to, so that $J=2$ and we set $\theta_1 = \$ 4.85$ and $\theta_2= \$ 5.15$. Furthermore, we assume the investor has an equal prior on the two possibilities, so that $\pi_0^1 = \pi_0^2 = 0.5$. The remaining parameters used in the simulation are provided in Table~\ref{tbl:OU-params}.
\begin{table}[H]
  \centering
  \begin{tabular}{llll}
  $\mfN=10^4$, & $F_0 = \$ 5$, & $\sigma=0.15$, &$\beta = \$ 10^{-3}$, \\
  $\kappa = 2$, & $a=\$ 10^{-5}$, & $\phi=2\times10^{-5}$.
\end{tabular}
\caption{The parameters in the OU model. All of the time-sensitive parameters are defined on an hourly scale. \label{tbl:OU-params}}
\end{table}
When simulating sample paths, we generate paths using $\Theta=\$ 5.15$. The trader will need to detect this value as she observes the price path.
\begin{figure}[t!]
    \centering
    \begin{subfigure}[t]{0.48\textwidth}
        \centering
        \includegraphics[width=\textwidth]{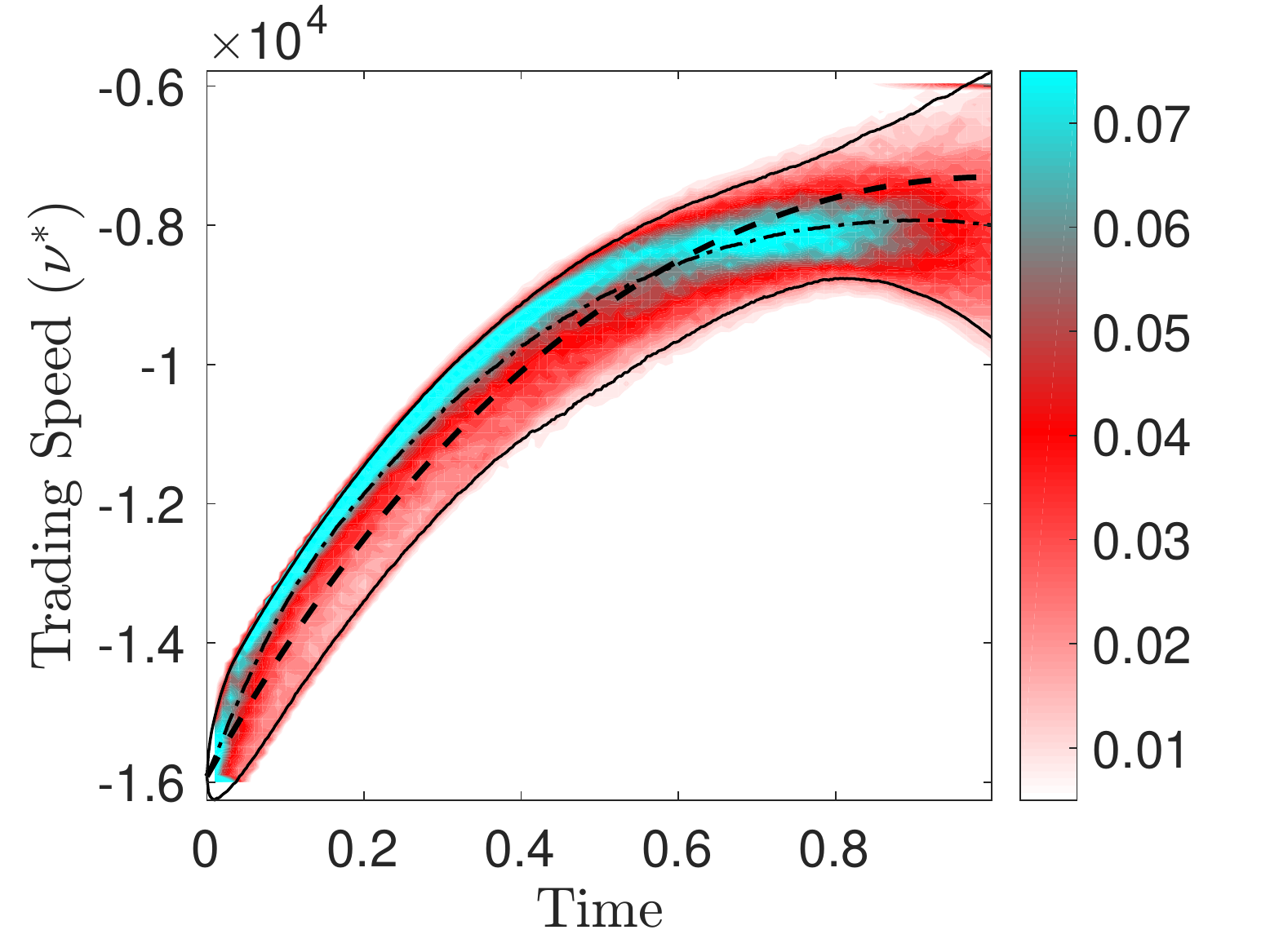}
    \end{subfigure}%
    ~
    \begin{subfigure}[t]{0.48\textwidth}
        \centering
        \includegraphics[width=\textwidth]{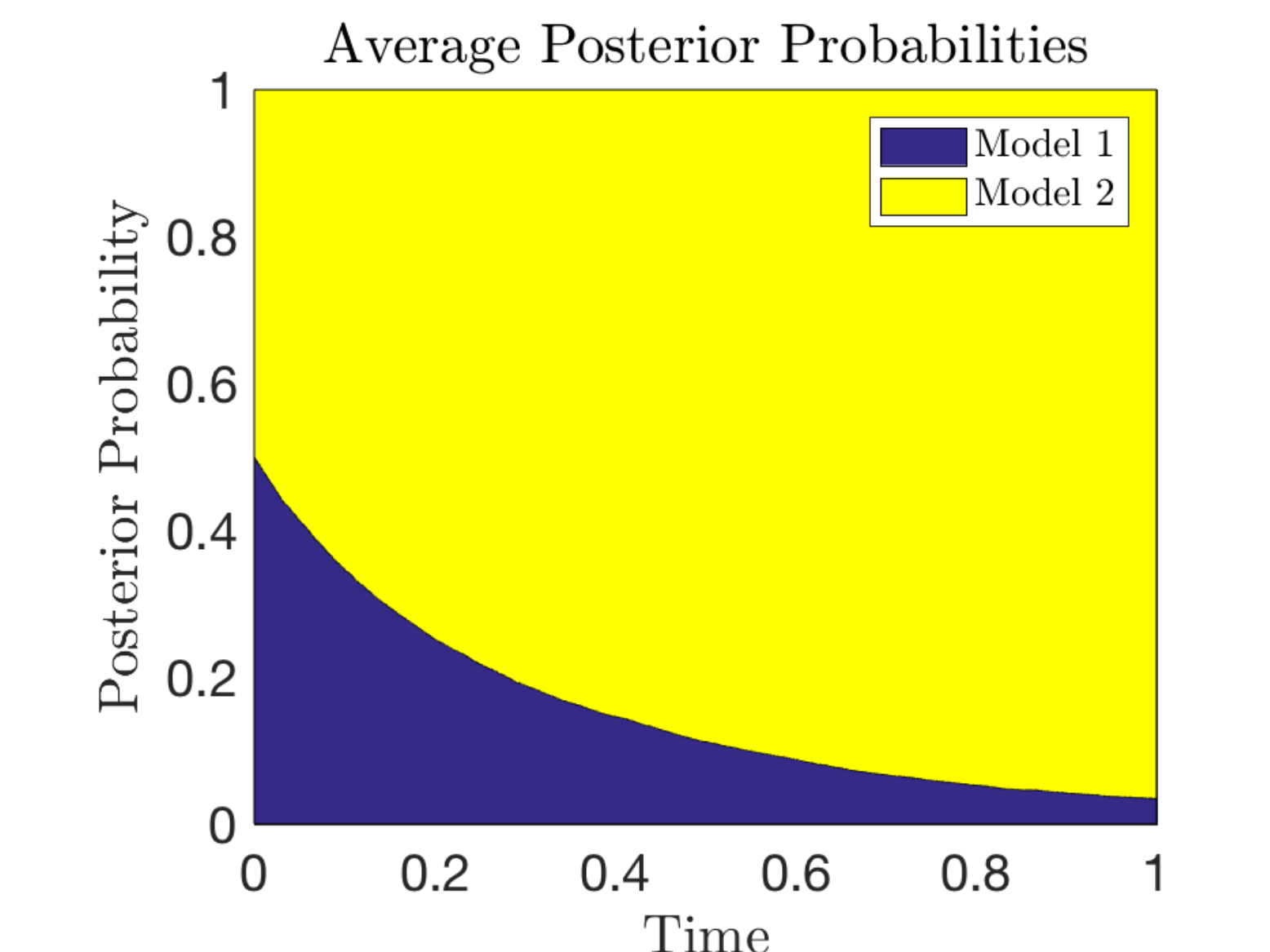}
    \end{subfigure}
    \\
    \centering
    \begin{subfigure}[t]{0.48\textwidth}
        \centering
        \includegraphics[width=\textwidth]{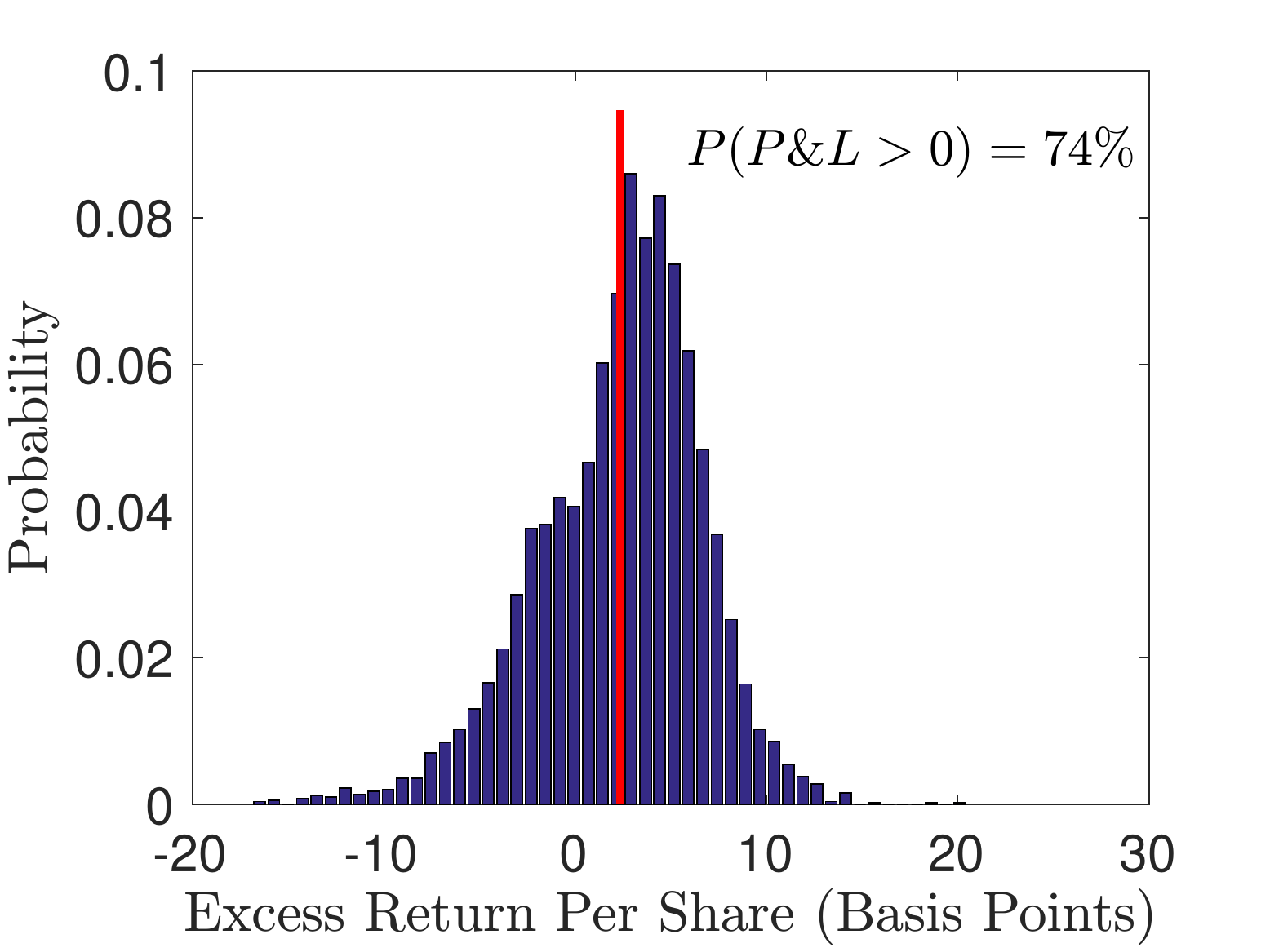}
    \end{subfigure}
     \begin{subfigure}[t]{0.48\textwidth}
        \centering
        \includegraphics[width=\textwidth]{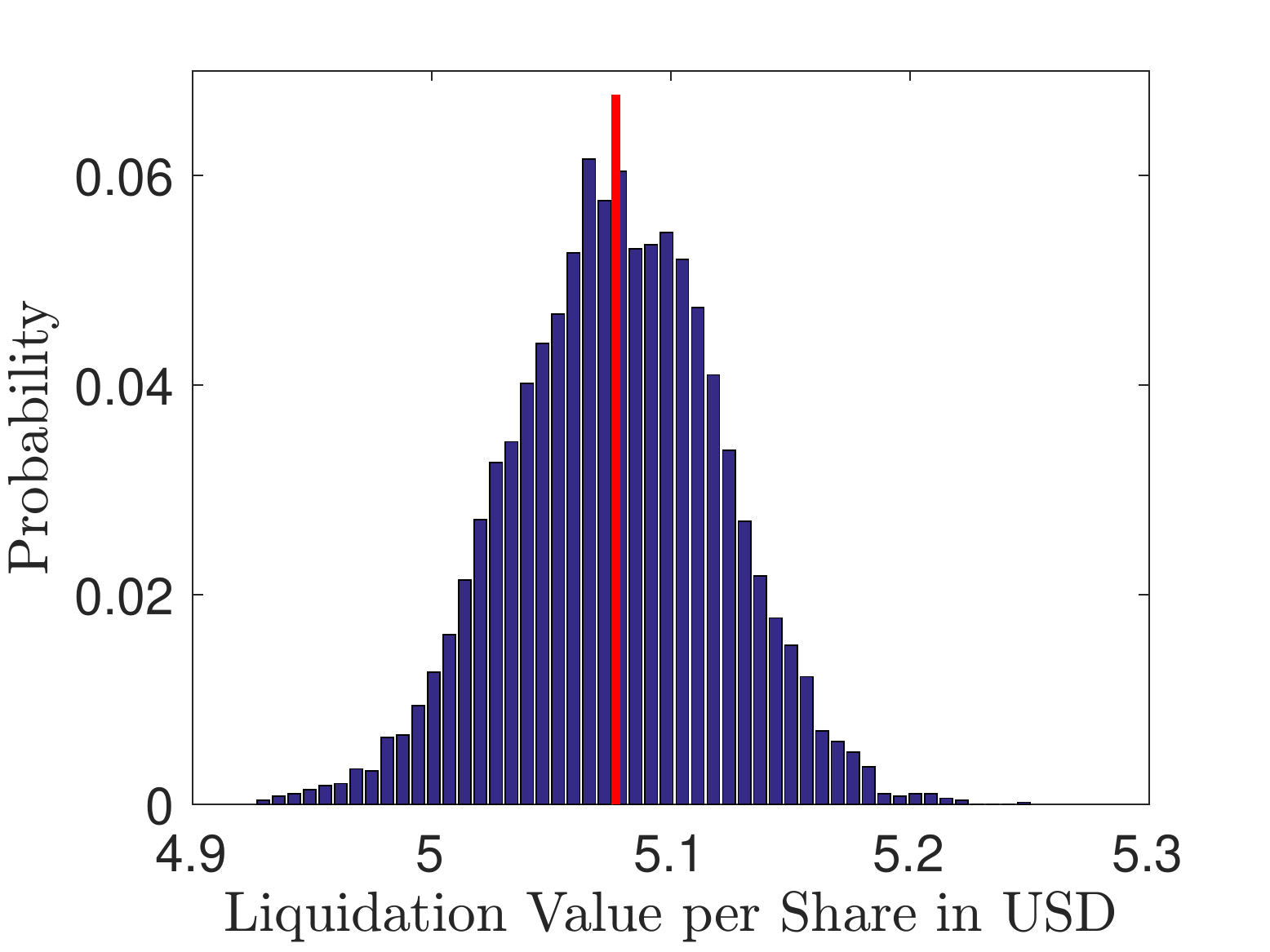}
    \end{subfigure}
    \caption{Simulation Results with an Ornstein-Uhlenbeck process}
    \label{fig: Simulation OU Results 1}
\end{figure}

Figure~\ref{fig: Simulation OU Results 1} shows the results of the simulation. The top right panel contains a heat map of the posterior probability of the two models. It shows, as time advances the trader on average will detect that the true rate of mean reversion is $\theta_2=\$ 5.15$. Moreover, by the end of the trading period, she is on average at least $97\%$ confident that model 2 is the true model governing asset prices. The top left graph in figure~\ref{fig: Simulation OU Results 1} shows a heat-map of the trading speed for the investor, where the dashed line represents the classical AC strategy. The dotted-and-dashed line represents the median of the traders's strategy. The heat-map shows how the trader adjusts her positions in a manner consistent with her predictions: as the investor discovers that $\Theta= \$ 5.15$, she expects the asset price to rise over time. Because of this she slows down her rate of liquidation initially, so that she can sell her asset at a higher price towards the end of the trading period. She then must speed up trading towards the end in order to unwind her position. The bottom left panel shows the histogram of the excess return per share of the optimal control over the AC control,
where the excess return is defined as
$
\frac{X_T^{\nu^\star}-X_T^{\nu^{\text{AC}}}}{ X_T^{\nu^{\text{AC}}}} \times 10^{4}
$
and $X_T^{\nu^{\text{AC}}}$ is the total cash the trader earns using the AC liquidation strategy. As the histogram shows, the filtered strategy outperforms the AC strategy during at least $73\%$ of the simulations. Lastly, the bottom right panel shows the trader's liquidation value per share over the trading period.

\begin{figure}[t!]
    \centering
        \includegraphics[width=0.31\textwidth]{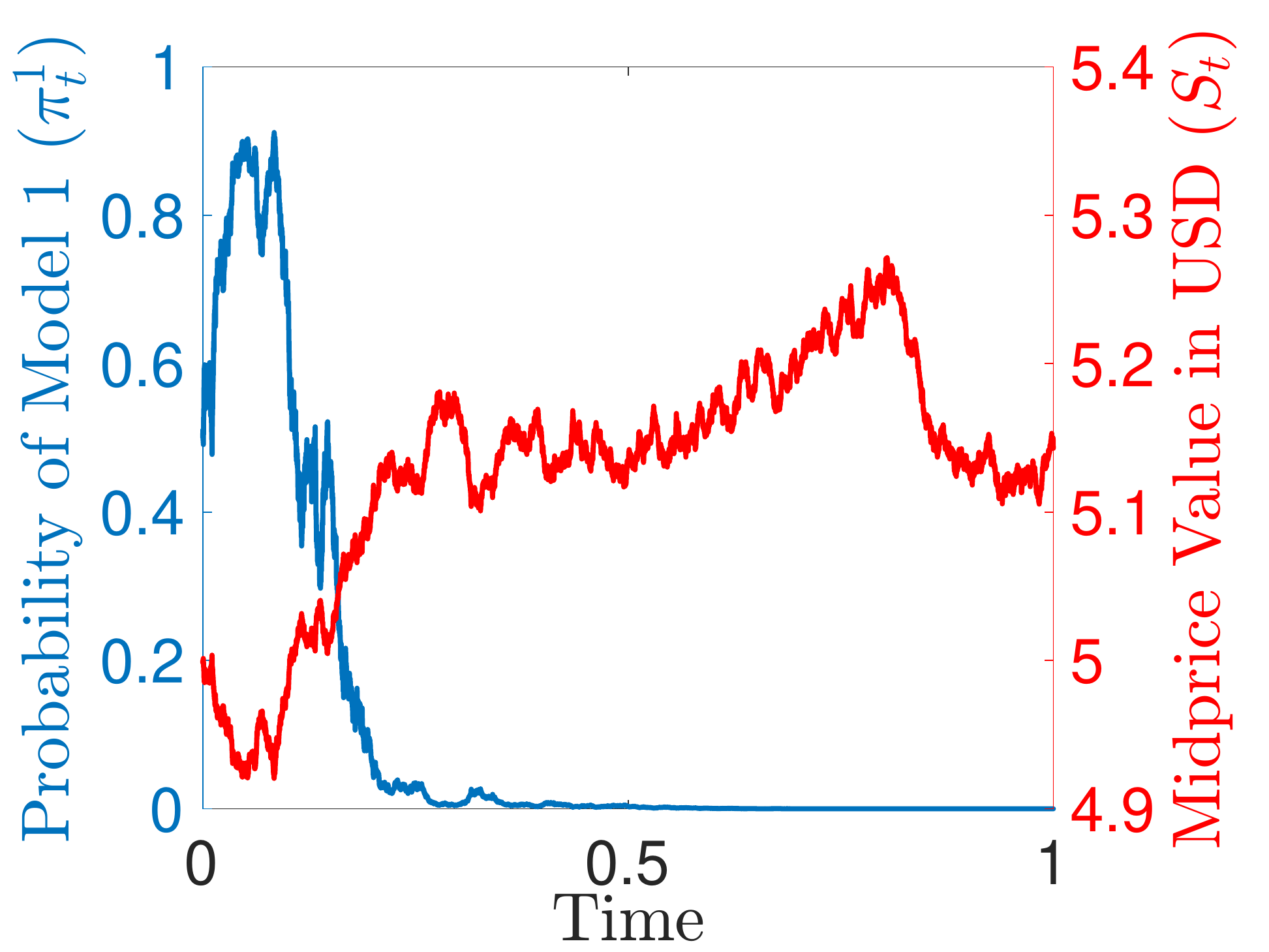}
    ~
        \includegraphics[width=0.31\textwidth]{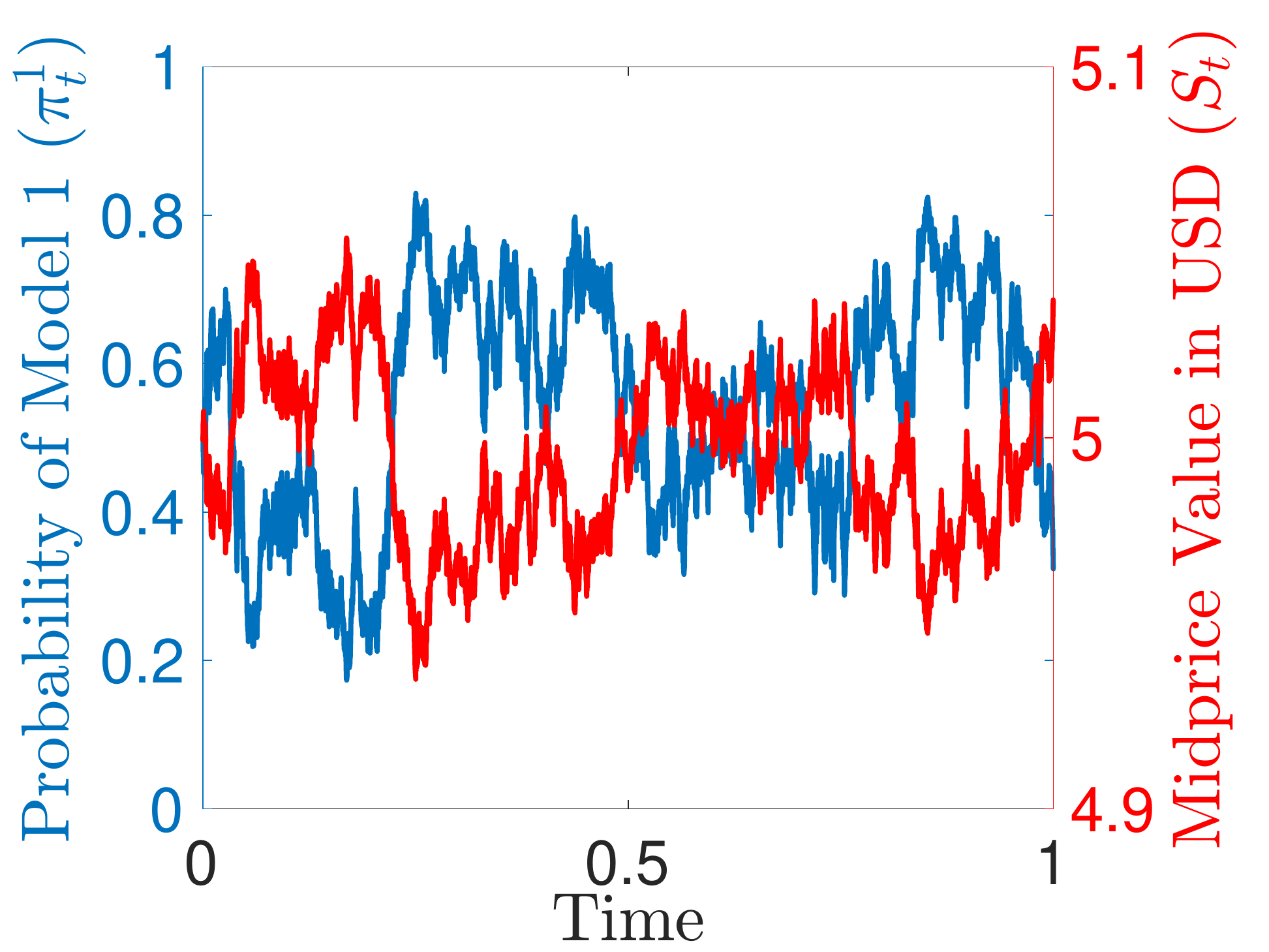}
    ~
        \centering
        \includegraphics[width=0.31\textwidth]{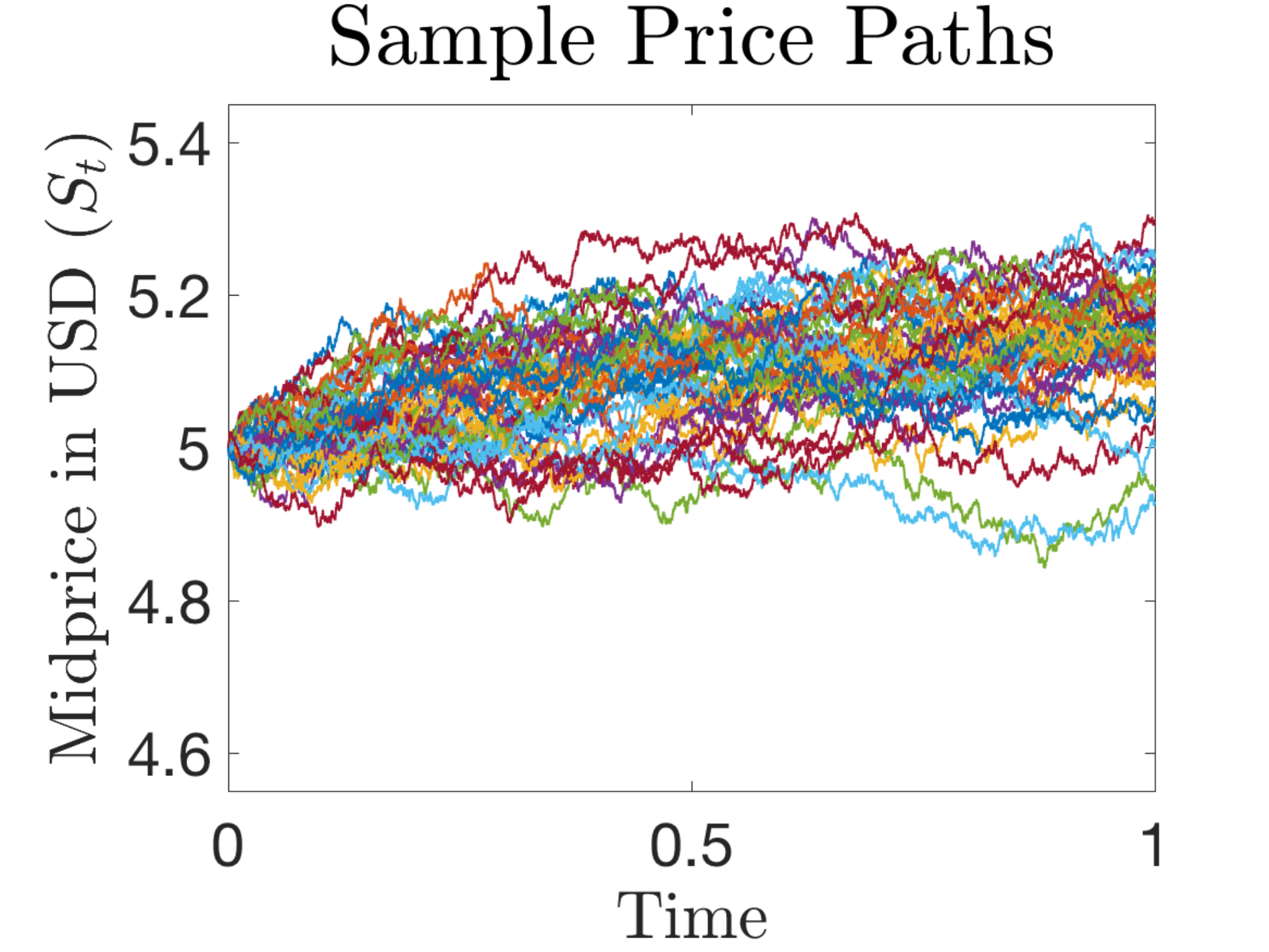}
    \caption{Sample simulation paths with an Ornstein-Uhlenbeck process.}
    \label{fig: Simulation OU Results 2}
\end{figure}
Figure~\ref{fig: Simulation OU Results 2} displays sample paths of the asset price and the filter. The top left plot demonstrates how the trader quickly detects the correct model based on the asset trajectory. In this simulation, the asset price initially drops, but then increases consistently. The posterior probability of model 1 adjusts accordingly, and initially rises, but then quickly drops and remains low. In the simulated path in the middle panel, the path of the midprice fluctuates around $\$5$ over the entire time period. The trader's estimate for the posteriori probability of model 1 varies according to the price movements she is observing. The resulting strategy induced by the filter fluctuating is an advantage to the trader, because the fluctuating filter more accurately reflects the actual behaviour of the asset price path -- as opposed to being certain that the true model is model 2. Finally, the bottom panel of the figure shows a collection of $40$ sample midprice path trajectories.

\subsection{Mean-Reverting Pure Jump Process} \label{sec: Second Example}

In this section, we investigate the case where the trader begins with no inventory and aims to gain profits from her alpha model through the use of a round-trip trading strategy. The asset price is assumed to be completely driven by the market order-flow, so that there is no diffusion or drift in the unaffected midprice. We assume the asset price mean-reverts to some unknown level $\Theta_t$ which the trader must detect. More specifically, the asset midprice in USD satisfies the SDE
\begin{equation}
  dF_t = b\;(dN_t^+ - dN_t^-)
  \;,
\end{equation}
where $N_t^+$ and $N_t^-$ are doubly stochastic Poisson processes with intensities $\lambda^+_t$ and $\lambda_t^-$ defined by
\begin{equation}
  \lambda_t^+ = \mu + \kappa \left( \Theta_t - F_t \right)_+
  \, \text{and } \,
  \lambda_t^- = \mu + \kappa \left( \Theta_t - F_t \right)_-
  \;,
\end{equation}
where $(x)_+$ and $(x)_-$ denote the positive and negative parts of $x$, respectively.

We assume $\Theta_t$ is a Markov chain with generator matrix $\bm C$, specified in Table~\ref{tbl:jump-params}. The filter for $\Theta_t$ cannot be computed explicitly, but it may be approximated via a Euler-Maruyama scheme of the SDE for the logarithm of the filter (see the SDE in Theorem~\ref{th: SDE Filter Theorem}). The resulting approximation for the value of the filter, given that the values of $\bN$ have been observed at times $\{t_k\}_{k=1}^K$, where $t_0=0$ and $t_K = T$ is obtained via the recursive formula
 \begin{subequations} \label{eq: SDE filter pure jump approximation}
   \begin{equation}
    \Lambda_{t_0}^j = \pi_0^j\;,
    \end{equation}
and
\begin{equation}
\begin{split}
     \Lambda_{t_{k+1}}^j =& \Lambda_{t_{k}}^j
     \exp\left\{
     2 \left(  1 - \mu - \tfrac{\kappa}{2} \left\lvert \theta_j - F_{t_k} \right\rvert \right) \Delta_{k+1}
     +
     \sum_{i=1}^J \left( \frac{\Lambda_{t_{k}}^i}{\Lambda_{t_{k}}^j} \right) C_{j,i} \;\Delta_{k+1}
    \right\}
        \\
    &\qquad
    \times \left( \mu + \kappa \left( \theta_j - F_{t_k} \right)_+ \right)^{\Delta N_{t_{k+1}}^+ }
    \times \left( \mu + \kappa \left( \theta_j - F_{t_k} \right)_- \right)^{\Delta N_{t_{k+1}}^-}\,,
\end{split}
\end{equation}%
 \end{subequations}%
$\forall k \geq 1$, where $\Delta N_{t_{k}}^\pm = N_{t_{k}}^\pm - N_{t_{k-1}}^\pm$ and $\Delta_k = t_k = t_{k-1}$. Alternatively, the trader's filter for $\Theta$ may be found using the forward equations, which are discussed in more detail in~\ref{sec: Forward Backward Section} as well as in Section~\ref{sec: EM Algorithm}. The forward equation approach is recommended if the time-steps $\Delta_k$ are relatively large, which introduces inaccuracies into the above Euler-Maruyama approximation to the solution of the filtering SDE of Theorem~\ref{th: SDE Filter Theorem}.

We assume the trader has a one hour trading horizon, and completely unwinds her positions by the end of the trading period. The optimal control in this set-up can be found in closed form. Let us define the constant $\kp = b \, \kappa$ and the $J\times J$ matrix $\bCp = \bm C + \kp  \eyeM $, where $\eyeM$ represents the $J \times J$ identity matrix. Let us also define the functions $\bm \Psi_1: [0,T]\times\Real^{J \times J} \rightarrow \Real^{J \times J}$, $\bm \Psi_2: [0,T]\times\Real^{J\times J} \rightarrow \Real^{J\times J}$, $\psi_1: [0,T]\times\Real \rightarrow \Real$ and $\psi_2: [0,T]\times\Real \rightarrow \Real$ where
\begin{align*}
  \bm \Psi_1(\tau,\; \bm Y)
  &= \left( e^{\tau \gamma} - e^{-\tau\gamma} \right)^{-1}
  e^{\tau  \bm Y}
  \left( \,
  \bm \Psi_2(\tau,\; \gamma \, \eyeM -  \bm Y )
  +
  \bm \Psi_2(\tau,\; -\gamma \, \eyeM -  \bm Y )
  \,
  \right)
  \;,
  \\
  \bm \Psi_2(\tau,\;\bm Y)
  &= \bm Y^{-1} \left( e^{\tau
  \bm Y} - \eyeM \right)
  \;,
\end{align*}
and where $\psi_1$ and $\psi_2$ are the scalar versions of the functions above, defined as
\begin{align*}
  \psi_1(\tau, \; y ) &=
  \left( e^{\tau \gamma} - e^{-\tau\gamma} \right)^{-1}
  e^{\tau y}
  \left( \,
  \psi_2(\tau,\; \gamma  -  y )
  +
  \psi_2(\tau,\; -\gamma -  y )
  \right)
  \, , \text{ and}
  \\
  \psi_2(\tau,\; y)
  &= y^{-1} \left( e^{\tau
  y} - 1 \right)
  \;.
\end{align*}
The optimal trading speed for this set-up, letting $\alpha \uparrow \infty$ is
\begin{align*}
  \nu^\star (t,F,Q,\bLambda)=  &-\gamma \coth (\gamma \, (T-t) )\,Q  \\
  &+
  \kp \Big [
  - F \;\psi_1(T-t, \, \kp )
  + \bpi^\T(\bLambda) \, \Psi_1(T-t, \, \bm C ) \, \bm \theta
  \\
  & \qquad\quad - \kp \bpi^\T(\bLambda) \, {\bCp}^{-1} \left(
  \Psi_1(T-t, \, \bCp)
  - \psi_1(T-t, \, \kp ) \eyeM
   \right) \bm \theta
  \Big]
  \;,
\end{align*}
where in the above, $\bm \theta = (\theta_j)_{j\in\mfJ}$ is a column vector containing all of the possible values that $\Theta_t$ can take.

In the numerical experiments, we assume two possible latent states for $\Theta_t$: $\theta_1 = \$ 4.9$ and $\theta_2= \$ 5.1$, and that the investor has an uninformed prior: $\pi_0^1=\pi_0^2=0.5$. The remaining parameters are provided in Table \ref{tbl:jump-params}. Note that we assume the latent process's generator matrix is symmetric so the trader has no a priori preference for the asset's price trajectory. The parameters for $\mu$, $\kappa$ and $\bm C$ are all taken to be in the same range as the calibrated parameters for the 2-state model fit to Intel Corporation (INTC) stock data. These calibrated parameters are found in Section~\ref{sec: Calibration Results}, and are discussed in more detail in Section~\ref{sec: EM Algorithm}. The parameters found in Section~\ref{sec: Calibration Results} are on a per-second scale and those here are on a per-hour scale, so we must multiply them by 3600 to get those found in Table~\ref{tbl:jump-params}.

 \begin{table}[h!]
  \centering
  \begin{tabular}{lllll}
  $\mfN=0$, & $F_0 = \$ 5$, & $b= \$ 0.01$
  & $\phi= 3 \times10^{-6},$ &
  $\bm C= \left[ \begin{smallmatrix} -10 & 10 \\ 10 & -10  \end{smallmatrix} \right]$, \\
  $\kappa = 1077$, & $\mu = 481$, & $a=\$ 10^{-5}$, & $\beta = \$ 10^{-3}$.
  \end{tabular}%
  \caption{The parameters in the pure jump mean-reverting model. All of the time-sensitive parameters are defined on an hourly scale. \label{tbl:jump-params}}
\end{table}

\begin{figure}[h!]
    \centering
    \begin{subfigure}[t]{0.48\textwidth}
        \centering
        \includegraphics[width=\textwidth]{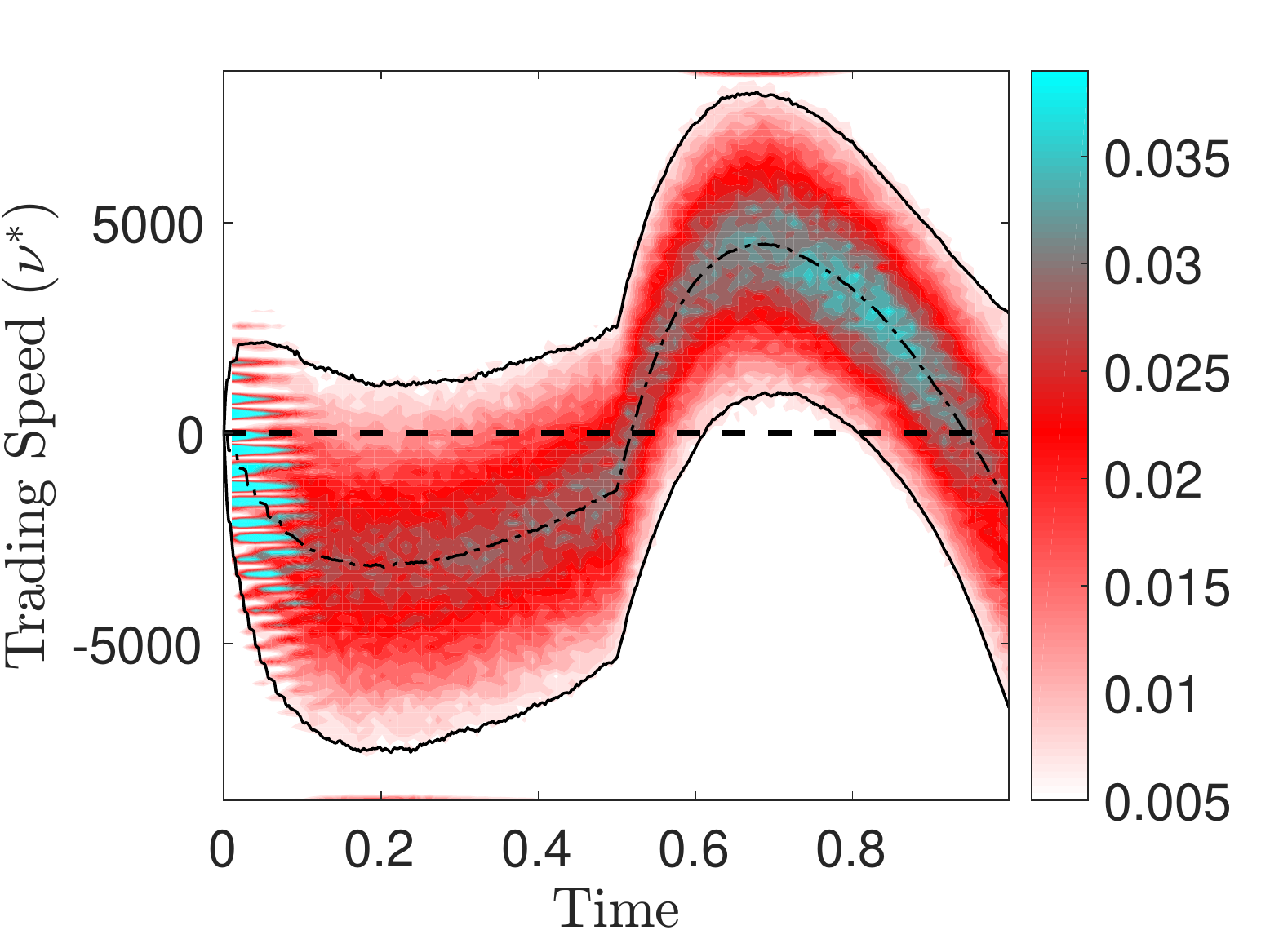}
    \end{subfigure}%
    ~
    \begin{subfigure}[t]{0.48\textwidth}
        \centering
        \includegraphics[width=\textwidth]{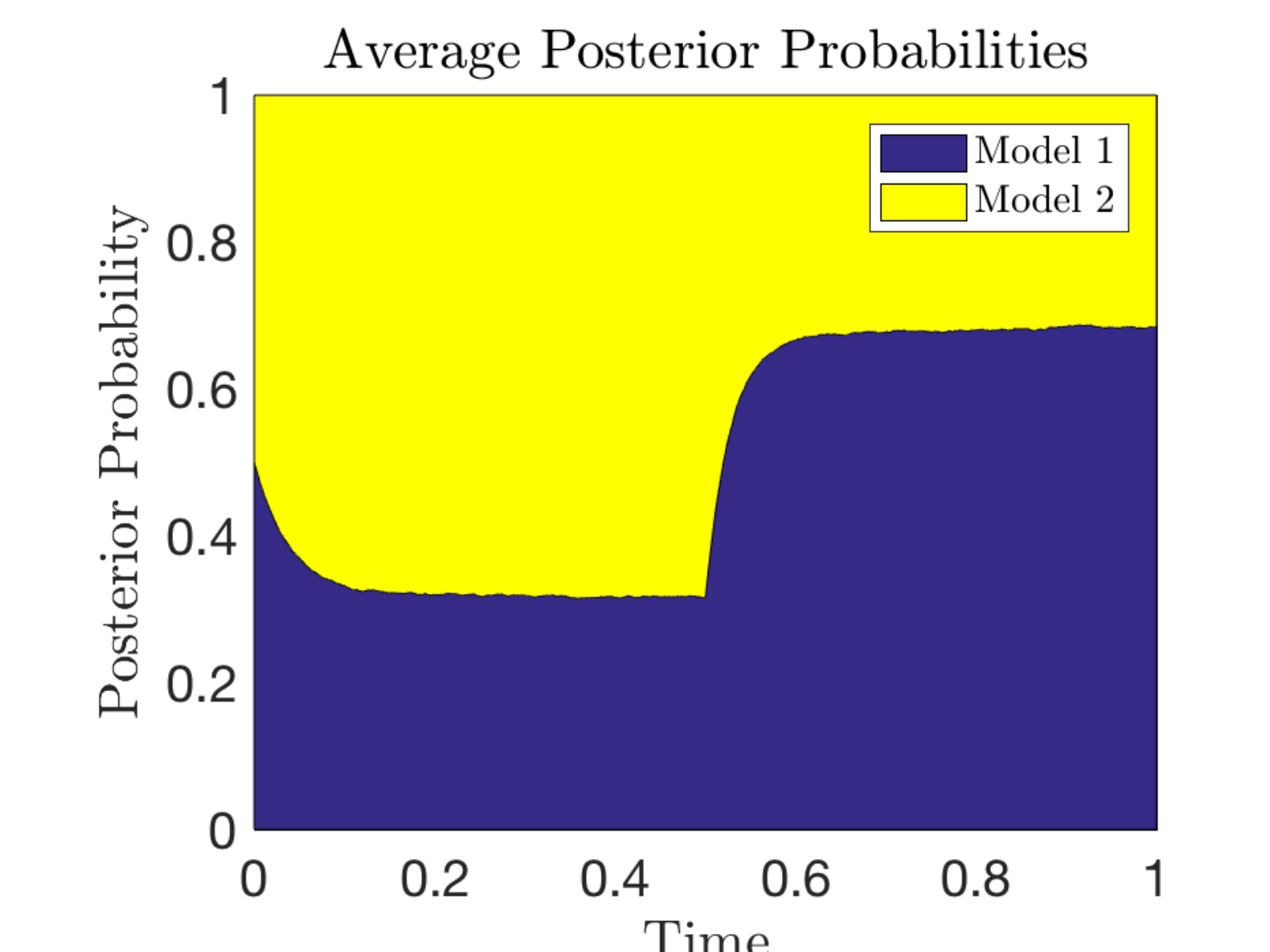}
    \end{subfigure}
    \\
    \begin{subfigure}[t]{0.48\textwidth}
        \centering
        \includegraphics[width=\textwidth]{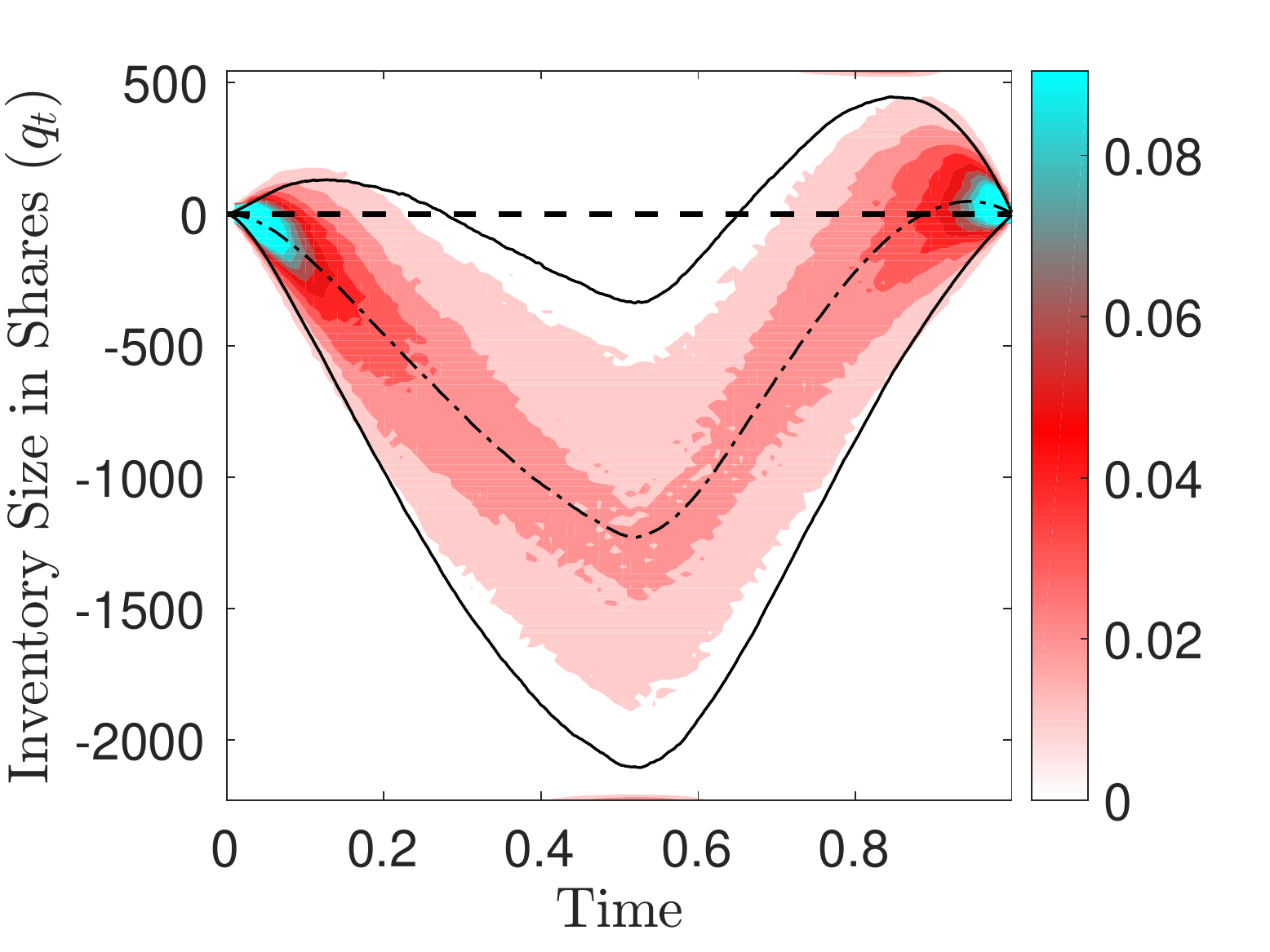}
    \end{subfigure}
	~
    \begin{subfigure}[t]{0.48\textwidth}
        \centering
        \includegraphics[width=\textwidth]{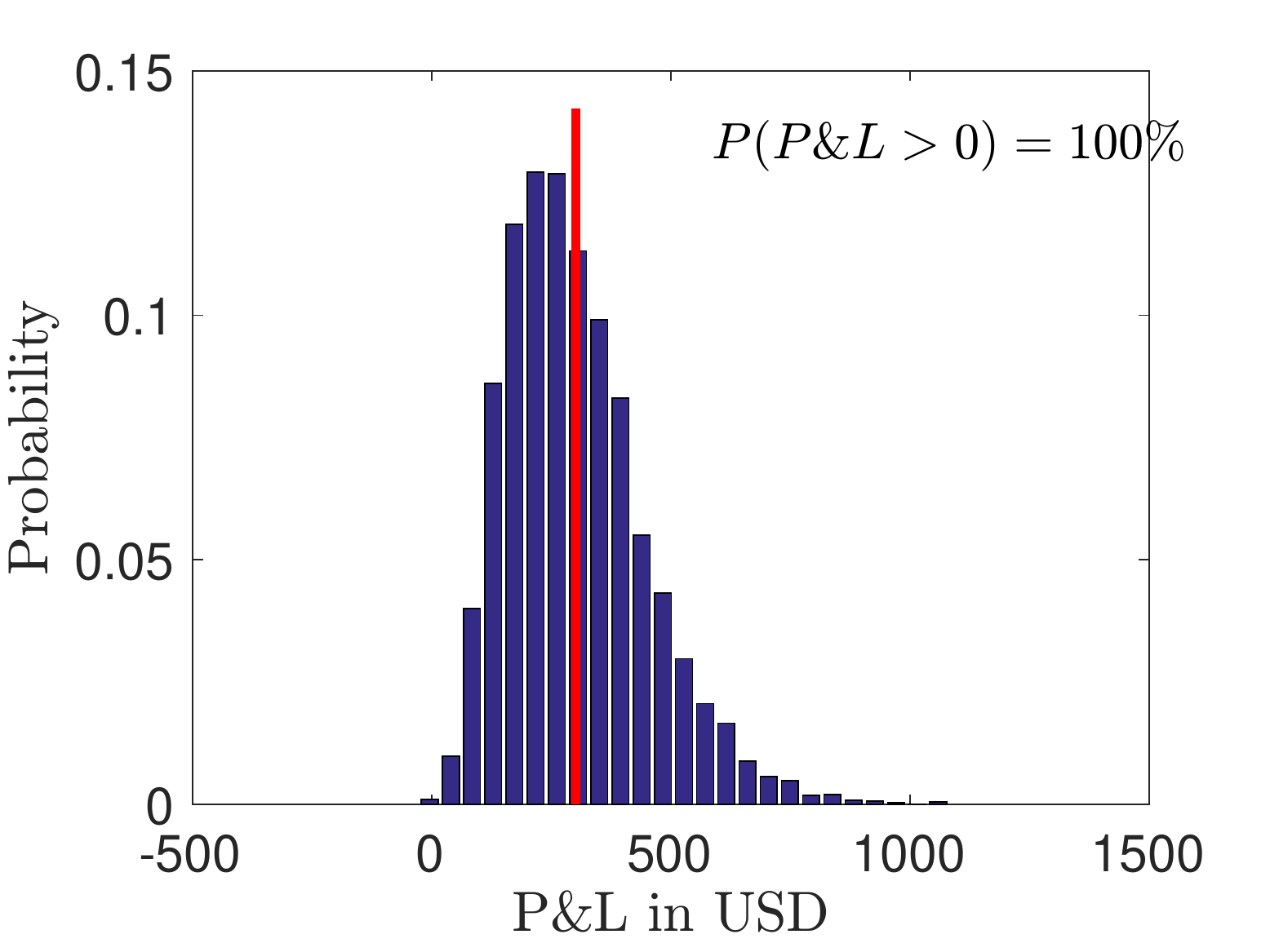}
    \end{subfigure}
    \caption{Simulation Results with a Pure-Jump Mean-Reverting Process}
    \label{fig: Simulation PureJump Results 1}
\end{figure}
For this simulation we fix a path for the latent process so that the value of $\Theta_t$ stays at $ \$ 5.1$ over the period $t\in[0,0.5]$ after which it jumps down to $\$ 4.9$ and remains there until the end of the trading period. This set-up will put the trader at a disadvantage. With the generator matrix defined in Table~\ref{tbl:jump-params}, the trader expects the latent process to jump an average of 10 times during the hour, whereas the path we fix for $\Theta$ jumps only once. Figure~\ref{fig: Simulation PureJump Results 1} shows the performance of the trader over the course of the hour. This implies that the trader will be acting based on the assumption that the latent process is on average 10 times more active than it is in the simultaion.

The top right portion of figure~\ref{fig: Simulation PureJump Results 1} demonstrates that on average, the filter detects the jump from $\Theta_t=\$ 5.1$ to $\Theta_t = \$ 4.9$. Furthermore, the bottom right panel shows the trader made a positive profit in $100\%$ of simulated scenarios. The left panels show heat maps of the trading speed (top left) and inventory (bottom left), respectively. Just as the strategy was constructed, the agent unwinds inventory by the end of the trading horizon. Moreover, she attains a short position initially, because she detects that $\Theta_t = \$ 5.1$ in the first half of the strategy, but expects the regime to switch causing a drop in asset price.

\section{Model Calibration}

This section shows how to calibrate the model in Section~\ref{sec: The Model} to market data using maximum likelihood estimation (MLE) through the expectation-maximization (EM) algorithm. This procedure will simultaneously enable us to classify hidden states within historical data and determine what kind of dynamics governs the behaviour of the midprice and order-flow within each of the hidden states. The EM algorithm is based on \cite{dempster1977maximum} and the Baum-Welch algorithm \citep{baum1970maximization}. Once the general procedure is described, we apply it to calibrate a generalized version of the pure-jump mean-reverting model in Section~\ref{sec: Second Example}.

\subsection{The EM Algorithm} \label{sec: EM Algorithm}

This section presents the discretized version of the asset and market dynamics in Section~\ref{sec: The Model} and its associated calibration algorithm. We assume we observed $D$ independent paths (e.g., from several different days of trading within the same trading hour) of the process $\left( F, \bN, \blambda \right)$ at discrete, uniformly spaced, times $\mathcal{T}=\left\{ t_k = \Dt \times k : k=0,\dots, K \right\}$, where $\Dt>0$ is the time interval in between observations.

Define the the discrete time processes $Y_k = \left( F_{t_k}, \bN_{t_k}, \blambda_{t_k} \right)$ and $Z_k = \Theta_{t_k}$ for each $t_k\in\mathcal{T}$. Given the assumptions in Section~\ref{sec: The Model},  $(Y,Z)=\left\{ ( Y_k, Z_k ) \right\}_{k=0 \dots K}$ is a Markov process. Furthermore, the transition matrix of the process $Z_k$ and the distribution of $Z_0$ are known and can be written as
\begin{align}
\bm P &= \left[ \mathbb{P} \left( Z_k = \theta_j \mid Z_{k-1} = \theta_i  \right) \right]_{i,j=1}^J
= e^{\Dt \bm C}, \qquad \text{and} \\
\bm{\pi}_0 &= \left[ \mathbb{P} \left( Z_0 = \theta_i  \right) \right]_{i=1}^J
= \left( \pi_0^i \right)_{ {i=1} }^J \,.
\end{align}
Moreover, by the definitions of the processes $F$, $N^+$, $N^-$ and $\Theta$, the processes $Y$ and $Z$ inherit the conditional independence structure depicted in the directed graph shown in Figure \ref{fig:HMM-graph}. It is important to note that here we allow for dependence between the visible states $Y$ even when conditioned on the latent states (i.e., there are connections between subsequent $Y$). This differs from the usual HMM setup where $Y$ are conditionally independent when conditioned on $Z$.
\tikzstyle{state}=[shape=circle,draw=blue!50,fill=blue!20]
\tikzstyle{observation}=[shape=rectangle,draw=orange!50,fill=orange!20]
\tikzstyle{lightedge}=[<-,dotted]
\tikzstyle{mainstate}=[state,thick]
\tikzstyle{mainedge}=[<-,thick]

\begin{figure}[h!]
\begin{center}
\begin{tikzpicture}[]
\node[state] (s1) at (0,2) {$Z_0$};

\node[state] (s2) at (2,2) {$Z_1$}
    edge [<-] node[auto,swap] {$\boldsymbol{P}$} (s1);

\node[state] (s3) at (4,2) {$Z_2$}
    edge [<-] node[auto,swap] {$\boldsymbol{P}$} (s2);

\node[state] (s4) at (6,2)  {$Z_3$}
    edge [<-] node[auto,swap] {$\boldsymbol{P}$} (s3);

\node[observation] (y1) at (0,0) {$Y_0$};

\node[observation] (y2) at (2,0) {$Y_1$}
    edge [<-] node[align=left,   below] {$f_\psi$} (s1)
    edge [<-] (y1);
    
\node[observation] (y3) at (4,0) {$Y_2$}
edge [<-] node[align=left,   below] {$f_\psi$} (s2)
edge [<-] (y2);

\node[observation] (y4) at (6,0) {$Y_3$}
edge [<-] node[align=left,   below] {$f_\psi$} (s3)
edge [<-] (y3);

\end{tikzpicture}
\end{center}
\caption{Directed graphical representation of the price and order-flow model. \label{fig:HMM-graph}}
\end{figure}
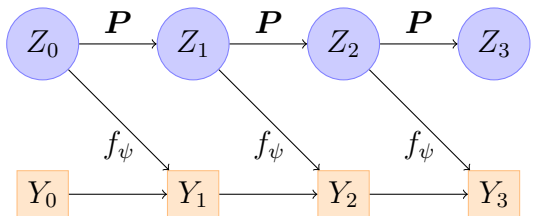

More specifically, we have
\begin{equation}
	\P \left( \, Y_k \in \mcB , Z_k = \theta_i \mid Y_{k-1}, Z_{k-1} \, \right)
	= \P \left( Y_{k} \in \mcB \mid Y_{k-1}, Z_{k-1} \, \right)
	\P \left( Z_{k} \mid Z_{k-1} \right)
	\;,
\end{equation}
for eack $k=1\dots K$ and any Borel measurable set $\mcB$. Let us also assume the transition density for $Y_k$ conditional on $(Y_{k-1},Z_{k-1})$ is known for each $k=1,\dots,K$, and that given a set of suitable parameters $\psi$ we can write
\begin{equation}
	\mathbb{P}\left( Y_{k} \in dy \mid Y_{k-1} = y_0 ,Z_{k-1} = \theta_i \right) =
	f_\psi (t_{k},\,y\, ; t_{k-1}, y_0 , \theta_i ) \, \bmu(dy) ,
\end{equation}
where for each fixed $(y_0,\theta_j)\in\mathds{R}^{2J+3} \bigtimes \left\{ \theta_j \right\}_{j\in\mfJ}$, and fixed times $0\leq t_{k-1} < t_{k} \leq T$, $f_\psi$ is a probability density function, and where $\bmu$ is the Lebesgue measure. In machine learning language, the function $f_\psi$ is often called the \textit{emission probability}. It may be found by solving the appropriate Kolmogorov/Fokker-Planck forward equation. If we consider $f^i_\psi(t,y) = f_\psi (t,\,y\, ; s , y_0 , \theta_i )$ for fixed values of $y_0$ and for $0\leq s\leq t\leq T$, then $f^i_\psi$ is the solution to the PDE
\begin{equation} \label{eq: Kolmogorov Backward}
\begin{cases}
	(\, \partial_t + \sum_{j\in\mfJ} \left(e^{t\,\bm{C}}\right)_{i,j} \widehat{\mathcal{L}}^\dag_j \;)f^i_\psi (t,y) = 0 \\
	f^i_\psi (s,y) = \delta(y-y_0)\,,
\end{cases} \;,
\end{equation}
where $\delta(y)$ is the dirac delta function, and the operator $\widehat{\mathcal{L}}^\dag_j$ is the adjoint infinitesimal generator of the process $Y_t = \left( F_t, \bN_t, \blambda_t \right)$ conditional on the event $\{ \omega : Z_t(\omega) = \theta_j , \forall t\in[0,T] \}$, and where $\left(e^{t\,\bm{C}}\right)_{i,j}$ is element $(i,j)$ of the matrix exponential of $t\, \bm C$. In practice, it is possible to use any number of approximations to obtain the emission probability density.

In the remainder of this section we use the notation $Y_k^d$ and $Z_k^d$ to denote the processes $Y$ and $Z$, observed at time $t_k$ in the $d^{\text{th}}$ independently observed path. We also introduce the notation $\mathcal{Y}_{m:n}^d \triangleq \bigcap_{m\leq k \leq n} \left\{ \omega: Y_k^d \in dy_k^d \right\}$ where each $\left\{ \omega: Y_k^d \in dy_k^d \right\}$, represents the event in which we observe the visible process $Y_k^d$ having taken the value $y_k^d$. 


The objective is to find the set of parameters $\Gamma = ( \pi_0 , \bm P , \psi )\in\mathfrak{G}$ which maximize the likelihood of having made observations $\{ y_k^d \}_{k,d=1}^{K,D}$ of the process $Y$, where $\mathfrak{G}$ is the set of allowable parameters, defined as
\begin{equation}
\begin{split}
	\mathfrak{G} =&
	\left\{ \bm\pi = \left( \pi_i \right)_{ {t\in[0,T]} }^J \in \mathds{R}^J \,:\, \bm{1}_{J}^\T\bm\pi = 1  , \pi_i \in [0,1] \right\}
	\\  & \quad \bigtimes
	\left\{ \bm P = \left( P_{i,j} \right)_{ {t\in[0,T]} }^J \in \mathds{R}^{J\times J} \,:\,
	\bm{1}^\T_J \bm P = \bm{1}_J , P_{i,j} \in [0,1]
	\right\}
	\bigtimes \mathfrak{G}_\psi
	\;,
\end{split}
\end{equation}
where $\bm{1}_J$ is a column vector of ones, and $\mathfrak{G}_{\psi}$ is the set on which we restrict $\psi$. The first two sets in the definition of $\mathfrak G$ ensure the entries of $\bm \pi_0$ sum to one, and $\bm P$ is a valid Markov chain transition matrix, respectively.

Due to the presence of the unobserved (latent) states $Z$, the number of terms in the log-likelihood of the discrete observations of the process $Y$ grows exponentially with the number of observations. It is, therefore, necessary to use the EM algorithm which provides a sequence of improving parameter estimates $\{ \Gamma_{k} \}_{k=1}^{\infty}$. Each term in the sequence has larger likelihood than the previous.

The EM algorithm proceeds as follows. Begin with an initial guess (e.g., as estimated from an equivalent model with no latent states) for the model parameters, $\Gamma_0 \in \mathfrak{G}$. Generate a recursive sequence of parameters $\{\Gamma_n\}_{n=1}^\infty$, defined by the relationship
\begin{equation} \label{eq: EM Algorithm Sequence}
  \Gamma_{n+1} = \arg \sup_{\Gamma \in \mathfrak G}\; \mathbb{E}^{\mathbb{P}^{\Gamma_{n}}}
  \left[ \; \left. \log L^\Gamma \; \right| \; \{ \mathcal{Y}_{0:K}^d\}_{d=1}^D \right]
  \;,
\end{equation}
where $\log L^\Gamma$ is the joint log-likelihood of having observed $D$ independent paths of the process $(Y,Z)$ given the parameter set $\Gamma$, and where $\mathbb{P}^{\Gamma_n}$ is the probability measure conditional on the dynamics of $(Y,Z)$ having parameters $\Gamma_n$. The joint log-likelihood for our model yields the decomposition
\begin{equation}
\begin{split}
 \log L^\Gamma =&\phantom{+} \sum_{d=1}^D \sum_{i=1}^J \log\left( \pi_0^i \right) \, \1{\Znd{0}{i}} \\
  &+ \sum_{d=1}^D \sum_{k=0}^{K-2} \sum_{i,j=1}^J \log\left( P_{i,j} \right)
  \, \1{\Znd{k}{i} , \, \Znd{k+1}{j}} \\
  &+ \sum_{d=1}^D \sum_{k=0}^{K-1} \sum_{i=1}^J \log \left( f_\psi(t_{k+1},y_{k+1}^d ; t_k, y_{k}^d,\theta_i) \right) \, \1{\Znd{k}{i}}
  \;.
\end{split}
\end{equation}
We then have
\begin{equation} \label{eq: EM algorithm E step}
\begin{aligned}
	\mathbb{E}^{\mathbb{P}^{\Gamma_{n}}}
  \left[ \log L^\Gamma \mid \mathcal{Y}_{0:K}^1 , \dots, \mathcal{Y}_{0:K}^D  \right] = &
   \phantom{+}\sum_{d=1}^D \sum_{i=1}^J \log\left( \pi_0^i \right) \, \gamma_0^{i,d} \\
  &+ \sum_{d=1}^D \sum_{k=0}^{K-2} \sum_{i,j=1}^J \log\left( P_{i,j} \right) \, \xi_{k}^{i,j,d} \\
  &+ \sum_{d=1}^D \sum_{k=0}^{K-1} \sum_{i=1}^J \log\left( f_\psi(t_{k+1},y_{k+1}^d ; t_k, y_{k}^d,\theta_i) \right) \, \gamma_{k}^{i,d}
  \;,
\end{aligned}
\end{equation}
where   the \textit{smoother} $\gamma_k^{i,d}$ and \textit{two-slice marginal} $\xi_k^{i,j,d}$ are defined as
\begin{align}
	\gamma_k^{i,d} = \P^{\Gamma_n} \!\left( \Znd{k}{i} \,| \,\Y_{0:K}^d \right)\; \quad \text{ and } \quad
	\xi_k^{i,j,d} = \P^{\Gamma_n} \!\left( \Znd{k}{i}, \Znd{k+1}{j} \,| \,\Y_{0:K}^d \right)
	\;.
\end{align}
These coefficients can be computed by using the \textit{forward-backward algorithm}, which, for our model, is provided in~\ref{sec: Forward Backward Section}. Next, because each line of equation~\eqref{eq: EM algorithm E step} depends only on one of $\bm\pi_0$, $\bm P$, and $\psi$, the updated estimates for the parameters can be obtained independently. The resulting update rules (which maximize \eqref{eq: EM algorithm E step}) are
\begin{subequations}
\begin{align}
	\pi_0^{j\,\star} &= \tfrac{1}{D}\sum_{d=1}^D \gamma_0^{j,d},&  j\in\mfJ,
	\\
	P_{i,j}^\star &= \frac{\sum_{d=1}^D \sum_{k=0}^{K-2} \xi_{k}^{i,j,d} }{
	\sum_{d=1}^D \sum_{k=0}^{K-2} \sum_{j=1}^J \xi_{k}^{i,j,d}
	},   & i,j\in\mfJ,
	\\
	\psi^\star &= \arg \max_{\psi\in\mathfrak{G}_\psi}
			\left\{ \sum_{d=1}^D \sum_{k=0}^{K-1} \sum_{i=1}^J \log\left( f_{\psi}(t_{k+1},y_{k+1}^d ; t_k, y_{k}^d,\theta_i) \right) \, \gamma_{k}^{i,d} \right\}. \label{eqn:psi-update}&
\end{align}%
\end{subequations}%
Hence, $\Gamma_{n+1} = \left( \bm\pi_0^\star , \bm P^\star, \psi^\star \right)$. The updated estimates for the emission probabilities \eqref{eqn:psi-update}, may in some models be analytically tractable (e.g., gaussian mixtures, or discrete distributions), while in others one may have to resort to numerical optimization schemes.

More details on the maximization of equation~\eqref{eq: EM Algorithm Sequence} and of various other related computational concerns are outlined in~\cite{rabiner1989tutorial}. The one subtle difference between the model presented in this section and the classical Hidden Markov Model, such as the one found in~\cite{rabiner1989tutorial} is that the visible process $Y$ exhibits temporal correlation even when conditioning on the path of the hidden process $Z$.

\subsection{Mean-Reverting Pure-Jump Model} \label{sec: Censored Pure Jump}

In this section, we show how to implement the EM algorithm for the model presented in section \ref{sec: Second Example}. Assuming that the observations are frequent enough to observe one or no jumps in the midprice, we adopt a censored version of the model. As in Section~\ref{sec: Second Example}, we assume the increments of the midprice $F_t$ in the interval $[t_n,t_{n+1})$ satisfy
\begin{equation}
  F_{t_{n+1}} - F_{t_n} = b \left( \Delta N_{t_n}^+ - \Delta N_{t_n}^{-} \right),
\end{equation}
where each of the $\Delta N_t^{\pm} \in \left\{0,1\right\}$ are censored, conditionally independent Poisson random variables with respective stochastic rate parameters $\lambda_t^+\Dt $ and $\lambda_t^+\Dt$, where
\begin{equation}
  \lambda_t^\pm = \sum_{j=1}^J \lambda_t^{\pm, j} \1{\Theta_t^j = \theta_j} \;,
\end{equation}
and we assume $\lambda_t^{\pm,j}$ have constant paths over each observation period (this can easily be relaxed). Moreover,
\begin{equation}
  \lambda_{t_n}^{+,j} = \mu_j + \kappa_j \left( \theta_j - F_{t_n} \right)_+ \quad \text{and} \quad
  \lambda_{t_n}^{-,j} = \mu_j + \kappa_j \left( \theta_j - F_{t_n} \right)_-
  \;.
\end{equation}

This generalizes the model in Section~\ref{sec: Second Example} as all parameters $\mu$, $\theta$ and $\kappa$ may vary according to the state of the hidden process $\Theta_t$. The emission probabilities have parameters $\psi = \left\{ \mu_j\, \kappa_j, \theta_j \right\}_{j=1}^J $, which represent (within each state) the base noise level, the mean reversion rate, and the mean reversion level of the mid-price, respectively. With these ingredients, we can now state the emission probability as follows
\begin{equation}
\begin{split}
& f_\psi( t_{k+1} , Y_{k+1} ; t_{k} , Y_{k} , \theta_j  ) \\
& =
  \begin{cases}
    e^{- \Dt \lambda^{-,j}_{t_k}} \left( 1- e^{- \Dt \lambda^{+,j}_{t_k}} \right)
    ,& \text{ if } F_{t_{n+1}} > F_{t_k} \\
    e^{- \Dt \lambda^{+,j}_{t_k}} \left( 1- e^{- \Dt \lambda^{-,j}_{t_k}} \right)
    ,& \text{ if } F_{t_{n+1}} < F_{t_k} \\
    e^{- \Dt \left( \lambda^{+,j}_{t_k} + \lambda^{-,j}_t \right) }
    + \left( 1- e^{- \Dt \lambda^{+,j}_{t_k}} \right) \left( 1- e^{- \Dt \lambda^{-,j}_{t_k}} \right) ,& \;\text{otherwise.}
  \end{cases}
\end{split}
\end{equation}

Armed with this expression, we can apply the EM algorithm from Section~\ref{sec: EM Algorithm} to obtain parameter estimates. In this model, we cannot obtain explicit updates for $\psi$ from \eqref{eqn:psi-update}, and instead use a numerical optimization.

\subsection{Example Fit to INTC stock Data} \label{sec: INTC Fit Discussion}

In this section, we fit the model presented in Section~\ref{sec: Censored Pure Jump} to intraday price data. We fit this model to the midprice of INTC stock on the NASDAQ exchange taken at each second between the hours of 10:00 and 11:00 on each business day during 2014. To normalize this data between days, we subtract the price at 10:00 on the same day from each data point, so that we are fitting the model from Section~\ref{sec: Censored Pure Jump} to the price change in INTC since 10:00 on each day. Moreover, we assume that price paths are independent across days. From this dataset, we set $\Dt=1$, so that all of the parameter estimates can be interpreted on a per-second scale. In \ref{sec: Calibration Results} we show the parameter estimates for the model with $1$ to $6$ latent states. All of these parameter estimates were obtained using the EM algorithm described in Section~\ref{sec: EM Algorithm} applied to the censored pure jump model in Section~\ref{sec: Censored Pure Jump}.

An important part of calibrating models with latent states is to estimate the possible number of states that a model should have. To choose an `optimal' number of latent states we use two information criteria: the Bayesian Information Criterion (BIC), and the integrated completed likelihood (ICL). For the purposes of our paper, we use an approximation to the ICL which can be found in~\cite{biernacki2000assessing}. The definitions of both of these criteria and some related computational details can be found in~\ref{sec: Calibration Results}. It is well known that ICL typically under-estimates the true number of latent states, whereas the BIC typically over estimates the number of latent states. For INTC, BIC and ICL estimate 5 and 1 latent states, respectively. This implies that the true number of of states should be somewhere within that range.

Recall that $\mu_i$, $\kappa_i$ and $\theta_i$ can be interpreted as the noise level, the mean-reversion rate, and the mean-reversion level, respectively. We find that the mean-reversion level does not vary much (all less than $3\cent$ in absolute value) in comparison to realized daily stock price movements which can range anywhere in between $-50\cent$ to $50\cent$. This indicates the daily asset prices typically mean-revert back to their initial values no matter the state of the latent process. The distinction between states occurs in the strength of the mean-reversion through $\kappa_i$, and the base noise level $\mu_i$.

The estimates for the transition probabilities shown in~\ref{sec: Calibration Results} indicate there is persistence in the latent process regardless of the number of latent states. This implies the trader is able to detect market states and to act on them before the states switch again. From the estimated generator matrices in~\ref{sec: Calibration Results}, the mean time spent in a given state can range from a few seconds to over $800$ seconds, depending on the number of allowed latent states.

\section{Conclusion}

In this paper, we solved a finite horizon optimal trading problem in which the midprice of the asset contains a latent alpha component stemming from a diffusive drift as well as a pure jump component. We obtain the solution in closed form, up to the computation of an expectation which depends on the class of potential models. The optimal trading speed is found to be a combination of the classical AC trading strategy plus a modulating term that incorporates the trader's estimate of the latent factors and its forecast. The form of the solution is similar in spirit to the results in ~\cite{cartea2016incorporating} where the authors have an alpha component stemming from market order-flow, but in that work viewed as visible. We presented two examples one where the trader wishes to completely liquidate a large position (the optimal execution problem) and the other where the trader uses the latent states to generate a statistical arbitrage trading strategy.  Both examples show that there is significant value in incorporating latent states. Finally, we presented a method for obtaining parameter estimates from data using the EM algorithm and applied it to an example stock.

There are many potential future directions left open for investigation. One direction which we have already been investigated is to generalize the analysis to trading multiple assets, as well as incorporating multiple latent factors. In this work, the trader is assumed to execute trades continuously and uses market orders which walk the limit order book (LOB) and hence obtain a temporary price impact. It would be interesting to analyze the case of executing market orders at discrete times, and hence recast the problem as an impulse control problem with latent alpha factors. Along similar lines, the agent may wish to use limit orders to squeeze even more profits out of the strategy. Combining market and limit orders, along the lines of \cite{CarteaJaimungal2014} and \cite{Huitema}, but including latent alpha factors would also be quite interesting.

\newpage

\appendix

\footnotesize

\section{Proof of Theorem~\ref{th: SDE Filter Theorem}}\label{sec: Proof of SDE Filter Theorem}
\begin{proof}
  We present here the proof of the case where $\sigma>0$. The case where $\sigma=0$ and $A:=0$ can be derived in the same fashion by excluding all of the diffusive terms.

  Due to the Novikov Condition~\eqref{eq:NovikovCondition}, we may define the measure $\mathbb Q$ through the Radon-Nikodym derivative
\begin{align*}
  \left( \frac{d\mathbb{Q}}{d\mathbb{P}}\right)_t &=
  \exp\left\{ -\sigma^{-1} \int_0^t A_{u-} \, dW_u - \frac{\sigma^{-2}}{2}\int_0^t \left( A_{u-} \right)^2 du  \right\} \\
  &\times
  \exp\left\{
  \int_0^t (\lambda_{u-}^+ -1) du - \int_0^t \log(\lambda_{u-}^+ ) dN^+_u
  \right\} \\
  &\times
  \exp\left\{
  \int_0^t (\lambda_{u-}^- -1) du - \int_0^t \log(\lambda_{u-}^- ) dN^-_u
  \right\}
  \;,
\end{align*}
which is defined so that under the measure $\mathbb{Q}$, the process $ \sigma^{-1} (F_t - b (N_t^+ - N_t^-) ) $ is a Brownian Motion and both $N_t^{+}$ and $N_t^-$ have intensity process equal to 1. This measure is deliberately chosen so that $F_t$, $N_t^+$ and $N_t^-$ are $\mathbb Q$-independent of $\Theta_t$. Additionally, the inverse of this Radon-Nikodym derivative is
\begin{equation}
  \begin{aligned} \label{eq: R-N Derivative Q Expression}
    \left( \frac{d\mathbb{P}}{d\mathbb{Q}} \right)_t =
    \zeta_t &= \exp\left\{ {\sigma^{-2} \int_0^t  A_{u-} (dF_u - b\: ( dN_t^+ - dN_t^-)  )  - \frac{\sigma^{-2}}{2}\int_0^t \left( A_{u-} \right)^2 du } \right\} \\
    &\times
    \exp\left\{
    \int_0^t (1-\lambda_{u-}^+ ) du + \int_0^t \log(\lambda_{u-}^+ ) dN^+_u
    \right\} \\
    &\times
    \exp\left\{
    \int_0^t (1-\lambda_{u-}^- ) du + \int_0^t \log(\lambda_{u-}^- ) dN^-_u
    \right\}
    \;.
  \end{aligned}
\end{equation}
Using these last two expressions, we can re-write the filter in terms of $\mathbb{Q}$ expected values to obtain
\begin{align}
  \pi_t^j &= \frac{
  \E^{\mathbb{Q}} \left[ \1{\Theta_t=\theta_j} \zeta_t \mid \mcF_t \right]}
  {\E^{\mathbb{Q}} \left[ \zeta_t \mid \mcF_t \right]}
  \\ &=
  \frac{
  \E^{\mathbb{Q}} \left[ \1{\Theta_t=\theta_j} \zeta_t \mid \mcF_t \right]}
  {\sum_{i=1}^J \E^{\mathbb{Q}} \left[ \1{\Theta_t=\theta_i} \zeta_t \mid \mcF_t \right]}
  \\ &=
  \frac{\Lambda_t^j}{\sum_{i=1}^J \Lambda_t^i}
  \;.
\end{align}

Next, we will attempt to find an SDE for each $\Lambda_t^j$ term. This can be done by first defining the process $\delta_t^j = \1{\Theta_t = \theta_j}$, which satisfies the SDE
\begin{equation*}
  d\delta_t^j = \sum_{i=1}^J \delta_{t-}^i  C_{j,i} \: dt + d\mathcal{\tilde M}_t^j
  \;,
\end{equation*}
under the measure $\mathbb{Q}$, where $\mathcal{\tilde M}_t^j$ is a square-integrable, $\mcG_t$--adapted, $\mathbb{Q}$-martingale and $\bm C$ is the generator matrix for $\Theta_t$. All that's left to compute the dynamics of the $\Lambda_t^j$ is to figure out the dynamics of $\zeta_t \delta_t^j$ and then take the appropriate $\mathbb{Q}$ expected value while conditioning on $\mcF_t$. The process $\delta_t^j \zeta_t$ satisfies the SDE
\begin{align*}
  d\left( \zeta_{t} \delta_{t}^j \right) &=
  \left( \zeta_{t-} \delta_{t-}^j \right) \Bigg(
  \sigma^{-2}  A_{t-}^j \; ( dF_t - b (dN_t^+ - dN_t^-) )
  \\ &\hspace{1.5cm}+
  (\lambda_{t-}^+ -1)(dN_t^+ - dt) +
  (\lambda_{t-}^- -1)(dN_t^- - dt)
  \Bigg)
  \\
  &\hspace{1.5cm}
  + \sum_{i=1}^J \zeta_{t-} \delta_{t-}^i C_{j,i} \;dt + d\mathcal{M}_t^{j}
  \\
  &=
  \left( \zeta_{t-} \delta_{t-}^j \right) \Bigg(
  \sigma^{-2}  A_{t-}^j \; ( dF_t - b (dN_t^+ - dN_t^-) ) \\&\hspace{1.5cm}+
  (\lambda_{t-}^{+,j} -1)(dN_t^+ - dt) +
  (\lambda_{t-}^{-,j} -1)(dN_t^- - dt)
  \Bigg)
  \\
  &\hspace{1.5cm}
  + \sum_{i=1}^J \zeta_{t-} \delta_{t-}^i C_{j,i} \;dt + d\mathcal{M}_t^{j}
  \;,
\end{align*}
where $\mathcal{M}_t^{j}$ is another square-integrable, $\mcG_t$--adapted, $\mathbb{Q}$-martingale.

Now we re-write the expression for $\Lambda_t^j$ as the expected value of a stochastic integral

\begin{align}
  \Lambda_t^j = \E^{\mathbb Q} \Bigg[ &\delta_t^j \zeta_t \mid \mcF_t \Bigg]
  \\ = \E^{\mathbb Q} \Bigg[ &\left( \zeta_{0} \delta_{0}^j \right) +
  \int_0^t \left( \zeta_{u -} \delta_{u -}^j \right) \bigg(
  \sigma^{-2}  A_{u-}^j \; ( dF_u  - b (dN_u ^+ - dN_u ^-) ) \bigg) \label{eq: Proof Fubini Expression} \\
  +&\int_0^t \left( \zeta_{u -} \delta_{u -}^j \right) (\lambda_{u-}^{+,j} -1)(dN_u^+ - du)
  +\int_0^t \left( \zeta_{u -} \delta_{u -}^j \right) (\lambda_{u-}^{-,j} -1)(dN_u^- - du) \nonumber
  \\ +&\int_0^t  \left( \sum_{i=1}^J \zeta_{u-} \delta_{u-}^i C_{j,i} \;du + d\mathcal{M}_u^{j} \right)
  \Biggr\rvert \mcF_t \Bigg] \nonumber
  \;.
\end{align}
At this point we will need conditions guaranteeing that we can exchange the order of integration, in the expression above. First, by the definition of $\Lambda_t^j$,
\begin{equation}
	\mathbb{E}^{\mathbb{Q}}\left[
	\zeta_0 \delta_0^j
	\mid \mcF_t\right]
	= \pi_0^j
	\;,
\end{equation}
which allows us to replace the first term in \eqref{eq: Proof Fubini Expression}. We can use the fact that $F_t - F_0 - b(N_t^+ - N_t^-)=W_t^{\mathbb{Q}}$ is a $\mathbb{Q}$-Brownian motion to write
\begin{equation} \label{eq: Filter Theorem Formula 1}
	\mathbb{E}^{\mathbb{Q}}\left[
	\int_0^t \left( \zeta_{u -} \delta_{u -}^j \right)
  \sigma^{-2}  A_{u-}^j \; dW_u^{\mathbb Q}
	\mid \mcF_t\right]
	\;.
\end{equation}

Applying \cite[Chapter VI, Lemma 3.2]{Wong1985}, whose conditions are met with the bound \eqref{eq:NovikovCondition}, we can exchange the order of integration in the above term to get
\begin{equation}
	\int_0^t \Lambda_{u-}^j
  \sigma^{-2}  A_{u-}^j \; dW_u^{\mathbb Q}
  \;.
\end{equation}

For the third and fourth terms, let us note that if we let $\mathcal{U}^\pm_t = \left\{ u\in[0,t] : N_u^\pm > N_{u-}^\pm \right\}$, then by condition~\eqref{eq:NovikovCondition} we know $\mathcal{U}^\pm$ is almost surely finite and $\mcF_t$-measurable. Therefore
\begin{align}
	\mathbb{E}^{\mathbb{Q}}\left[
	\int_0^t \left( \zeta_{u -} \delta_{u -}^j \right) (\lambda_{u-}^{\pm,j} -1) dN_u^\pm
	\mid \mcF_t\right]
	&= \mathbb{E}^{\mathbb{Q}}\left[
	\sum_{u\in\mathcal{U}^\pm_t} \left( \zeta_{u -} \delta_{u -}^j \right) (\lambda_{u-}^{\pm,j} -1) ( N_u^\pm - N_{u-}^\pm )
	\mid \mcF_t\right]
	\\ &=
	\sum_{u\in\mathcal{U}^\pm_t} \Lambda_{u-}^j (\lambda_{u-}^{\pm,j} -1) ( N_u^\pm - N_{u-}^\pm )
  \\ &=
  \int_0^t \Lambda_{u-}^j (\lambda_{u-}^{\pm,j} -1) dN_t^{\pm}
	\;.
\end{align}
We can apply Fubini's theorem on the remaining Riemann integral because the integrands are square integrable to exchange the order of integration and get
\begin{align*}
	&\mathbb{E}^{\mathbb{Q}}\left[
	\int_0^t  \left( \sum_{i=1}^J \zeta_{u-} \delta_{u-}^i C_{j,i} -
	\left( \zeta_{u -} \delta_{u -}^j \right) (\lambda_{u-}^{+,j} + \lambda_{u-}^{-,j}-2)
	\right)\;du
	\mid \mcF_t\right]
	\\&\hspace{3.25cm}=
	\int_0^t  \left( \sum_{i=1}^J \Lambda_{u-}^i C_{j,i} -
	\Lambda_{u-}^j(\lambda_{u-}^{+,j} + \lambda_{u-}^{-,j}-2)
	\right)\;du
	\;.
\end{align*}

 By using the last steps to exchange the order of integration for each part, we can let the martingale ($\mathcal{M}_t^j$) portion vanish to obtain
\begin{align*}
  \Lambda_t^j = \;\pi_0^j \; + &\int_0^t \Lambda_{u-}^j \bigg(
  \sigma^{-2}  A_{u-}^j \; ( dF_u  - b (dN_u ^+ - dN_u ^-) ) \bigg) \\ +
  &\int_0^t \Lambda_{u-}^j (\lambda_{u-}^{+,j} -1)(dN_u^+ - du)
  + \int_0^t \Lambda_{u-}^j (\lambda_{u-}^{i,j} -1)(dN_u^- - du)
  + \sum_{i=1}^J \int_0^t \Lambda_{u-}^i C_{j,i} du
  \;,
\end{align*}
which is the desired result.
\end{proof}

\section{Proof of Theorem~\ref{th: Predictable Representation Theorem }}
\label{sec: Proof of Predictable Representation Theorem}
\begin{proof}

First assuming that $\sigma>0$, we shall prove the claims of Theorem~\ref{th: Predictable Representation Theorem } in order. First of all, it is clear in the definition of the process $\widehat W_t$ that it is a $\PP$-almost-surely continuous process satisfying $[\widehat W]_t = t$ because it is the sum of a $\mathbb{P}$-Brownian Motion and a process of finite variation. Moreover, by the definition of the process $F_t$, we can write $W_t$ as
\begin{equation}
    W_t = \sigma^{-1} \left( (F_t-F_0)
    - \int_0^t  A_u \; du
    - b(N_t^+ - N_t^-)
    \right)
    \;.
\end{equation}
Hence we can insert this last formula into the definition of $\widehat W_t$ to yield,
\begin{equation}
   \widehat W_t = \sigma^{-1}\left( (F_t-F_0)
    - \int_0^t   \hA_u  \; du
    - b(N_t^+ - N_t^-) \right)
    \;,
\end{equation}
which demonstrates that the process $\widehat W_t$ is $\mcF_t$--adapted. Next, we will show that $\widehat W_t$ is a $\PP$-martingale with respect to the filtration $\mcF_t$. By taking the conditional expectation of $\widehat W_{t+h}$ for $h\geq0$ and by using the properties of $W$, we get
\begin{align*}
  \mathbb E \left[ \widehat W_{t+h} \mid \mcF_t \right]
  &=
  \widehat W_t + \mathbb E \left[ \widehat W_{t+h} - \widehat W_t \mid \mcF_t \right]
  \\ &=
  \widehat W_t + \sigma^{-1} \mathbb E \left[ \int_t^{t+h}  \left( dF_u - \hA_u \, du - b \, (dN_u^+ - dN_u^-) \right) \, \mid \mcF_t \right]
  \\ &=
  \widehat W_t
  -  \sigma^{-1} \mathbb E \left[ \int_{t}^{t+h} \left( \hA_u -  A_u\right) \; du \mid \mcF_t \right]
  \\ &= \widehat W_t -
  \sigma^{-1} \mathbb E \left[ \int_{t}^{t+h}  \mathbb E \left[
   \hA_u -  A_u
   \mid \mcF_u \right] \; du \mid \mcF_t \right]
  \\ &= \widehat W_t
  \;,
\end{align*}
where in the above, the use of Fubini's theorem is allowed due to equation~\eqref{eq: Square Integrable A}. We have shown that $\widehat W$ is a $\mathbb{P}$-a.s. continuous martingale with quadratic variation equal to $t$. Therefore by the L\'evy characterization of Brownian Motion, the process $\widehat W$ is an $\mcF_t$--adapted $\mathbb{P}$-Brownian Motion.

Next we need to verify the claims made about the processes $\widehat M_t^\pm$. By the definitions of $\widehat M_t^\pm$ and of $M_t^\pm$,
\begin{equation} \label{eq: Theorem - Mhat N def}
  \widehat M_t^\pm = N_t^\pm - \int_0^t \widehat \lambda_u^\pm du
  \;,
\end{equation}
where the shorthand $\widehat \lambda_t^\pm$ is described after equation~\eqref{eq: F predictable dynamics}. Because the processes $\widehat \lambda_t^\pm$ are $\mcF_t$--adapted, we get that $\widehat M_t^\pm$ must also be $\mcF_t$--adapted processes. The processes $\widehat M_t^\pm$ are $\mcF_t$-martingales because for any $h>0$,
\begin{align}
  \mathbb{E} \left[ \widehat M_{t+h}^\pm \mid \mcF_t \right]
  &=
  \widehat M_t^\pm + \mathbb{E} \left[ M_{t+h}^\pm - M_{t}^\pm + \int_t^{t+h}
  ( \lambda_u^\pm - \widehat \lambda_u^\pm) \; du \mid \mcF_t \right]
  \\ &=
  \widehat M_t^\pm
  + \mathbb{E} \left[ \int_t^{t+h}
  \mathbb{E} \left[
  ( \lambda_u^\pm - \widehat \lambda_u^\pm)
  \mid \mcF_u \right]
  \;du
  \mid \mcF_t \right]
  \\ &= \widehat M_t^\pm
  \;.
\end{align}

By the definition of $\widehat M^\pm$ in equation~\eqref{eq: Theorem - Mhat N def}, $\widehat M^\pm$ is the sum of a process with an almost-surely finite number of jumps in the interval $[0,T]$ and a process of finite variation. From its definition, $\widehat W$ is the sum of a Brownian Motion and and a process of finite variation. The last two remarks imply that $[\widehat W, \widehat M^\pm]=0$. Lastly, because $dM_t^\pm = dN_t^\pm - \widehat \lambda_t^\pm dt$, and $d[N^+,N^-]_t =0$ almost surely, we get that $[\widehat M^+,\widehat M^-]_t = 0$ almost surely.

Because $N_t^\pm$ are a counting processes and $N_t^\pm - \int_0^t \hlambda_u \, du$ are $\mcF$--adapted $\mathbb{P}$ martingales, by Watanabe's characterization theorem, $N_t^\pm$ must be $\mcF$--adapted doubly stochastic Poisson processes with respective $\mathbb{P}$-intensities $\hlambda^\pm$, demonstrating claim (D).

If $\sigma=0$ and $A:=0$ the results concerning $\widehat{M}^\pm$ and $N^\pm$ still hold, and thus the statements $(B)$ and $(D)$ are true. By the same logic we also find that $[\widehat M^+,\widehat M^-]_t = 0$ almost surely.

\end{proof}

\section{Proofs related to the DPE}

\subsection{Proof of Proposition~\ref{prop: Candidate Solution}}\label{proof: Prop Candidate}
\begin{proof}
  Let us begin with the PDE~\eqref{eq: unsuped HJB} for $H(t,\bZ)$, where $\bZ=(F,\bN,X,Q,\blambda,\bLambda)$,
  \begin{equation}
    \begin{cases}
      0 = -\phi Q^2 + \sup_{\nu\in\mathds{R}} \left\{
      (\partial_t + \mathcal{ \bar L})H
      +\nu\,\partial_Q  H - \nu\,( F + \beta \left( Q - \mfN \right) + a \nu )\partial_X  H
    \right\} \\
    H(T,\bm Z) = X + Q( F + \beta \left( Q - \mfN \right)  - \alpha Q)
    \end{cases}
    \;.
  \end{equation}
  Because the term inside of the curly brackets is quadratic in $\nu$, we can complete the square and simplify the supremum expression. This yields
  \begin{equation}
     0 = -\phi Q^2 +
      (\partial_t + \mathcal{ \bar L})H
      + \frac{1}{4a}\frac{ \left( \partial_Q H - \left( F + \beta \left( Q - \mfN \right)  \right) \partial_X H  \right)^2}{\partial_X H}
    \;.
  \end{equation}
  Next, we can insert the ansatz
  \begin{equation}
    H(t,\bm Z)
    = X + Q \left( F + \beta \left( Q - \mfN \right) \right) + h(t,\bell(\bZ))
    \,
  \end{equation}
  where $\bell (\bZ) = (F,\bN,Q,\blambda,\bLambda)$, into the last PDE to yield another PDE in terms of $h$,
  \begin{equation}
    \begin{cases}
      0 = - \phi Q^2 + (\partial_t + \mathcal{\bar L})h
      + Q \left( \hA (t,F,\bN,\bLambda)
      + b ( \widehat \lambda^+(\blambda,\bLambda) -
            \widehat \lambda^-(\blambda,\bLambda) \right)
      + \frac{1}{4a} \left( \beta Q + \partial_Q h \right)^2
      \\
      h(T,\bell(\bZ)) = -\alpha Q^2
    \end{cases}
    \;.
  \end{equation}
  In the above equation, the term $\hA (t,F,\bN,\bLambda)
      + b ( \widehat \lambda^+(\blambda,\bLambda) -
            \widehat \lambda^-(\blambda,\bLambda) )$ appears in the PDE due to the mean drift of the process $F$, found in equation~\eqref{eq: F predictable dynamics}.
  If we assume that $h$ is quadratic in the variable $Q$ so that
  \begin{equation}
    h(t,\bell(\bZ)) =
    h_0(t,\bchi(\bZ))
  + Q\: h_1(t,\bchi(\bZ))
  + Q^2\: h_2(t)
  \;,
  \end{equation}
  where $\bchi(\bZ) = (F,\bN,\blambda,\bLambda)$, then the PDE further simplifies down to
  \begin{equation} \label{eq: Proof PDE Q Poly}
    \begin{cases}
      0 = \hspace{0.2cm} \left\{ (\partial_t + \mathcal{\bar L}) h_0 + \tfrac{1}{4a} h_1^2 \right\}
      \\
      \hspace{0.25cm} + Q \left\{ (\partial_t + \mathcal{\bar L})h_1
      + \left( \hA (t,F,\bN,\bLambda)
      + b ( \widehat \lambda^+(\blambda,\bLambda) -
            \widehat \lambda^-(\blambda,\bLambda) \right)
      +   \frac{1}{2a} (\beta + 2h_2) h_1
      \right\}
      \\
      \hspace{0.25cm} + Q^2 \left\{ \partial_t h_2 - \phi + \frac{1}{4a}\left(\beta + 2h_2\right)^2 \right\}

      \\
      h(T,\bell(\bZ)) = -\alpha Q^2
    \end{cases}
    \;.
  \end{equation}
  The above PDE must be satisfied for all values of $Q\in\mathds{R}$. Because $h_0$ $h_1$ and $h_2$ are independent of $Q$, each of the terms inside curly brackets in~\eqref{eq: Proof PDE Q Poly} must be equal to zero independently of $Q$. This yields the system of PDEs for $h_0$, $h_1$ and $h_2$
  \begin{align}
  &
    \begin{cases}
      0 = \partial_t h_0 + \frac{1}{4a} h_1^2
      \\ h_0(T,\bchi(\bZ)) = 0
    \end{cases}
  \\&
    \begin{cases} \label{eq: Prop h_1 PDE}
      (\partial_t + \mathcal{\bar L})h_1
      + \left( \hA (t,F,\bN,\bLambda)
      + b ( \widehat \lambda^+(\blambda,\bLambda) -
            \widehat \lambda^-(\blambda,\bLambda) \right)
      +   \frac{1}{2a} (\beta + 2 h_2) h_1
      \\ h_1(T,\bchi(\bZ)) = 0
    \end{cases}
  \\&
    \begin{cases}
      \partial_t h_2 - \phi + \frac{1}{4a}\left(\beta + 2 h_2\right)^2 = 0
      \\
      h_2(T) = -\alpha
    \end{cases}
  \end{align}
  which, due to their dependence on one another, can be solved in the order $h_2 \rightarrow h_1 \rightarrow h_0$. The ODE for $h_2$ is a standard Riccati-type ODE which admits the unique solution defined in the statement of the proposition. Next, the PDE for $h_1$ is linear and depends only on the solution for $h_2$. Therefore we can use the Feynman-Kac formula to write the solution PDE~\eqref{eq: Prop h_1 PDE} as
  \begin{equation}
    h_1(t,\bchi(\bZ))
    =
    \underset{t,\bchi(\bZ)}{\mathbb{E}}
    \left[
    \int_t^T
    \left(
    \hA_u
    + b \left( \widehat \lambda_u^+ - \widehat \lambda_u^- \right)
    \right)
    e^{\frac{1}{2a} \int_t^u (\beta + 2 h_2(\tau)) \; d\tau}
    \;du
    \right]
    \;,
  \end{equation}
  where the $\hA_u$ and the $\hlambda_u^{\pm}$ are defined in Section~\ref{sec: Reducing the Problem}. When plugging in the solution for $h_2$, we get the exact form presented in the statement of the proposition. Furthermore, condition~\eqref{eq: Square Integrable A} and the fact that the term $e^{\frac{1}{2a} \int_t^u (\beta + 2 h_2(\tau)) \; d\tau}$ is bounded for all $0\leq t\leq u \leq T$ allow us to use Fubini's theorem to yield
    \begin{equation}
    h_1(t,\bchi(\bZ))
    =
    \int_t^T
    \underset{t,\bchi(\bZ)}{\mathbb{E}}
    \left[
    \hA_u
    + b \left( \widehat \lambda_u^+ - \widehat \lambda_u^- \right)
    \right]
    e^{\frac{1}{2a} \int_t^u (\beta + 2 h_2(\tau)) \; d\tau}
    \;du
    \;,
  \end{equation}
  which in combination with the solution for $h_2$, gives us the form presented in the statement of the proposition.

  Lastly, the PDE for $h_0$ is also linear, and we can therefore use the Feynman-Kac formula once more for a representation of the solution. This representation gives us the expression for $h_0$ that is present in the statement of the proposition. We can also guarantee that the solution for $h_0$ provided in proposition~\ref{prop: Candidate Solution} is bounded because $\expe{}{\int_0^T (h_{1,u})^2 du}<\infty$. This bound is shown in equations~\ref{eq: Proof h1 bounded 1}--\ref{eq: Proof h1 bounded 2} in the proof of theorem~\ref{theorem: Verification Theorem}.

\end{proof}

\subsection{Proof of Theorem~\ref{theorem: Verification Theorem}}
\label{sec: Proof Verification Theorem}
\begin{proof}

    {\bf Showing the control $\nu^\star$ is admissible.}\\
  The candidate optimal control $\nu^\star$ is defined as
  \begin{equation}
    \nu^\star_t = \frac{1}{2a} \left( {Q_t^{\nu^\star}\left(\beta + 2 h_2(t) \right) + h_1(t,\bchi(\bZ_{t-})) } \right)
    \;,
  \end{equation}
  where $\bchi(\bZ) = (F,\bN,\blambda,\bLambda)$, and $\bZ = (F,\bN,X,Q,\blambda,\bLambda)$.
  It is clear from the definition above that the control is $\mcF$--adapted, because it is a continuous function of $\mcF$--adapted processes. To guarantee that the control $\nu_t^\star$ is admissible we must show that
  \begin{equation} \label{eq: Admissibility Criterion Proof}
    \expe{}{\int_0^T (\nu_u^\star)^2 du} < \infty
    \;.
  \end{equation}
  By expanding the expression for $(\nu_u^\star)^2$ and by using Young's inequality twice, we can write an upper bound for $(\nu_u^\star)^2$ as
  \begin{equation}
    (\nu_u^\star)^2 \leq \left(\frac{1}{2 a^2}\right) \left( (Q_u^{\nu^\star})^2 + (\beta + 2 h_2(u))^2 + (h_{1,u})^2  \right)
    \;,
  \end{equation}
  where $h_{1,u} =  h_1(u,\bchi(\bZ_{u-}))$. This last inequality shows that equation~\eqref{eq: Admissibility Criterion Proof} holds if each of $\expe{}{\int_0^T (\beta + 2 h_2(u))^2 du}$, $\expe{}{\int_0^T (Q_u^{\nu^\star})^2 du}$,
  and $\expe{}{\int_0^T (h_{1,u})^2 du}$ are bounded.

  Using the definition of $h_2$ in proposition~\ref{prop: Candidate Solution}, we can integrate the first term directly to obtain
  \begin{equation}
    \expe{}{\int_0^T (\beta + 2 h_2(u))^2 du}
    =
    a^2 \gamma^2 \left( T + \frac{2}{\gamma} \right) \left( \frac{1}{1-\zeta^\prime e^{2T\gamma}}
    - \frac{1}{1-\zeta^\prime} \right)
    \;,
  \end{equation}
  where $\zeta^\prime=\frac{\alpha - \frac{1}{2}\beta }{a \gamma}$. This last expression is bounded because $\alpha-\frac{1}{2}\beta \neq a \gamma$.

  Next, we can use the definition of $h_{1,u}$ provided in proposition~\ref{prop: Candidate Solution} to write
  \begin{equation} \label{eq: Proof h1 bounded 1}
    \expe{}{\int_0^T (h_{1,t})^2 dt} =
    \frac{1}{16 a^2} \expe{}{\int_0^T
    \left(
    \int_t^T
    \underset{t,\bchi_t}{\mathbb{E}}
    \left[
    \hA_u + b(\wlambda_u^{+} - \wlambda_u^{-})
    \right]
    \, \left(\frac{\zeta e^{\gamma\left( T-u \right)}  - e^{-\gamma\left( T-u \right)}}{\zeta e^{\gamma\left( T-t \right)} - e^{-\gamma\left( T-t \right)}}\right) du \right)^2 \, dt}
    \;,
  \end{equation}
  where $\bchi_t = \bchi(Z_t)$.
  Now if we notice that because $\gamma\geq0$, $\left(\frac{\zeta e^{\gamma\left( T-u \right)}  - e^{-\gamma\left( T-u \right)}}{\zeta e^{\gamma\left( T-t \right)} - e^{-\gamma\left( T-t \right)}}\right)^2 \leq 1 $ and that
  \begin{equation}
    \underset{t,\bchi_t}{\mathbb{E}}
    \left[
    \hA_u + b(\wlambda_u^{+} - \wlambda_u^{-})
    \right]
    =
    \underset{t,\bchi_t}{\mathbb{E}}
    \left[
     A_u + b(\lambda_u^{+} - \lambda_u^{-})
    \right]
    \;,
  \end{equation}
  then we can apply Jensen's inequality and Fubini's theorem, followed by Young's inequality to obtain
  \begin{align}
    \expe{}{\int_0^T (h_{1,t})^2 dt} &\leq
    \frac{1}{4 a^2} \int_0^T
    \int_t^T
    \mathbb{E}
    \left[
     A^2_u + b^2 \left( (\lambda_u^{+})^2 + (\lambda_u^{-})^2 \right)
    \right]
    \,  du  \, dt
    \\ &\leq
        \frac{T}{4 a^2} \int_0^T
    \mathbb{E}
    \left[
     A^2_u + b^2 \left( (\lambda_u^{+})^2 + (\lambda_u^{-})^2 \right)
    \right]
    \,  du
    \label{eq: Proof h1 bounded 2}
    \;.
  \end{align}
  By the condition of equation~\ref{eq: Square Integrable A}, this last term is bounded.

  By the definition of $Q_t^{\nu^\star}$ and of $\nu^\star$, we have that
  \begin{equation}
    dQ_t^{\nu^\star} = \frac{1}{2a}  \left( {Q_t^{\nu^\star}\left(\beta + 2 h_2(t) \right) + h_{1,t} } \right)\, dt
    \;, Q_0^{\nu^\star}=\mfN\;.
  \end{equation}
  The above SDE has the solution
  \begin{equation}
     Q_t^{\nu^\star} = \mfN + \frac{1}{2a} \int_0^t h_{1,u} \left(\frac{\zeta e^{\gamma\left( T-u \right)}  - e^{-\gamma\left( T-u \right)}}{\zeta e^{\gamma\left( T-t \right)} - e^{-\gamma\left( T-t \right)}}\right) \; du
     \;.
  \end{equation}
  By using Young's inequality and Jensen's inequality again, and by using the fact that ${\left(\frac{\zeta e^{\gamma\left( T-u \right)}  - e^{-\gamma\left( T-u \right)}}{\zeta e^{\gamma\left( T-t \right)} - e^{-\gamma\left( T-t \right)}}\right)^2 \leq 1 }$, then we can write
  \begin{align}
    (Q_t^{\nu^\star})^2 \leq \frac{1}{a} \left( \mfN^2 + \int_0^t (h_{1,u})^2 du \right)
    \;.
  \end{align}
  Now by taking the expectation and the integral of this last expression, we get
  \begin{align}
    \expe{}{\int_0^T (Q_u^{\nu^\star})^2 \; du}
    &\leq
    \frac{1}{a} \left( T\,\mfN^2 + \expe{}{\int_0^T \int_0^t (h_{1,u})^2 du\,dt} \right)
    \\ &\leq
    \frac{1}{a} \left( T\,\mfN^2 + T \expe{}{\int_0^T (h_{1,u})^2 du } \right)
    \;.
  \end{align}
  Because the term $\expe{}{\int_0^T (h_{1,u})^2 du }$ has already been shown to be bounded, we can conclude that $\expe{}{\int_0^T (Q_u^{\nu^\star})^2 \;du } < \infty$.

$\nu_t^{\star}$ is $\mcF_t$--adapted and satisfies $\expe{}{\int_0^T (\nu_u^\star)^2 du} < \infty$, therefore it is an admissible control. \\

  {\bf Showing $H\leq\widehat H$. }
  By applying It\^o's lemma to the function ${\widehat H(t,\bm Z) = X + Q \left( F + \beta(Q-\mfN) \right) + h(t,\bell(\bZ))}$ with an arbitrary control $\nu_t\in\mcA$ and the $\mcF_t$--predictable dynamics, we get
  \begin{align*}
    \widehat H_T
    = \widehat H(t,\bm Z)
    &+
    \int_t^T \left\{
    Q_u^\nu \left( \hA_u + b ( \wlambda_u^+ - \widehat \lambda_u^- ) \right)
    - a \nu_u^2 + (\beta + \partial_Q h_u ) \nu_u
    + \left( \partial_t + \mathcal{\bar L} \right) h_u
    \right\} du
    \\&+
    \int_t^T \eta_u^W \;d \widehat W_u
    + \int_t^T \eta_u^+ \; d \widehat M_u^+
    + \int_t^T \eta_u^- \; d \widehat M_u^-
    \;,
  \end{align*}
  where in the above, we use the notation $\widehat H_t = \widehat H(t,\bm Z_t)$, $h_t = h(t,\bZ_t)$ and we let $\eta_u^W$, $\eta_u^+$ and $\eta_u^-$ be the square-integrable $\mcF_t$--predictable processes obtained by the martingale representation theorem.

  By taking the conditional expected value of both sides, the Martingale portions vanish and we are left with
  \begin{align} \label{eq: Proof Verification Ito Expectation}
    \expe{t,\bZ}{\widehat H_T}
    &= \widehat H(t,\bZ)
    \\&+
    \expe{t,\bZ}{\int_t^T \left\{
    Q_u^\nu \left( \hA_u + b ( \wlambda_u^+ - \widehat \lambda_u^- ) \right)
    - a \nu_u^2 + (\beta + \partial_Q h_u ) \nu_u
    + \left( \partial_t + \mathcal{\bar L} \right) h_u
    \right\} du } \nonumber
    \;.
  \end{align}
  From the PDE~\eqref{eq: Reduced Ansatz 1 HJB PDE}, we get that for all $\nu\in\mathds{R}$,
  \begin{equation}
    0 \geq -\phi Q^2 +
    Q\,\left( \hA(t,F,\bN_t,\bLambda)
    + b (\widehat \lambda^+(\blambda,\bLambda ) -
    \widehat \lambda^-(\blambda,\bLambda )  \right)
    + (\beta Q + \partial_Q h)\nu - a \nu^2
    + \left( \partial_t + \mathcal{\bar L} \right) h
    \;.
  \end{equation}
  Therefore, by plugging in the above, as well as the boundary condition for $\widehat H_T$,
  \begin{align}
    \expe{t,\bZ}{\widehat H_T - \phi \int_t^T \left( Q_u^\nu \right)^2 du }
    =
    \expe{t,\bZ}{X_T^\nu + Q_t^\nu( F_T + \beta ( Q_T^\nu - \mfN ) -  \alpha Q_T^\nu) - \phi \int_t^T \left( Q_u^\nu \right)^2 du }
    \leq \widehat H(t,\bZ)
    \;.
  \end{align}
  By the definition of $H^\nu(t,\bZ)$ in equation~\eqref{eq: Value Function Definition}, and because the above holds for an arbitrary $\nu_t\in\mcA$, we obtain
  \begin{equation}
   H(t,\bZ) \leq \widehat H(t,\bZ)
   \;.
  \end{equation}

  {\bf Showing $H\geq H^{\nu^\star} \geq \widehat H$. }
  Next let us note that if we let $\nu^\star = \frac{\beta Q + \partial_Q h}{2a}$, then by equation~\eqref{eq: Reduced Ansatz 1 HJB PDE}, $\forall \varepsilon > 0$,
  \begin{equation}
    -\varepsilon < -\phi Q^2 +
    Q\,\left( \hA(t,F,\bN_t,\bLambda)
    + b (\widehat \lambda^+(\blambda,\bLambda ) -
    \widehat \lambda^-(\blambda,\bLambda )  \right)
    + (\beta Q + \partial_Q h)\nu^\star - a {\nu^\star}^2
    + \left( \partial_t + \mathcal{\bar L} \right) h
    \;.
  \end{equation}
  Using this last inequality with equation~\eqref{eq: Proof Verification Ito Expectation} and the definition of $H^\nu$ gives
  \begin{align*}
    H^{\nu^\star}(t,\bZ)
     &\geq \expe{t}{X_T^{\nu^\star} + Q_t^{\nu^\star}(S_T^{\nu^\star} -  \alpha Q_T^{\nu^\star}) - \phi \int_t^T Q_u^{\nu^\star} du } - \varepsilon
    \\&> \widehat H(t,\bZ)
    \;,
  \end{align*}
  Because $H\geq H^\nu$, $\forall\nu\in\mcA$ we get
  \begin{equation}
    H(t,\bZ)
    \geq
    H^{\nu^\star}(t,\bZ)
    \geq
    \widehat H(t,\bZ)
    \;.
  \end{equation}
  Therefore we obtain the desired result that
  \begin{equation}
    H = H^{\nu^\star} = \widehat H
  \end{equation}

\end{proof}

\section{Derivation of Forward-Backward Algorithm} \label{sec: Forward Backward Section}

This section provides further details on the forward-backward algorithm which allows computation of the smoother and two-slice marginal. The forward-backward algorithm presented here differs from what is usually found in the literature due to the fact that $Y$ may take continuous values, and that the process $Y$ is not conditionally independent in the usual way.  As Figure \ref{fig:HMM-graph} shows, there is also dependence between $Y$ even when conditioned on $Z$.

\subsection{Recursive Discrete Filter.}

To begin, define the sequence $\left\{ \alpha_n^{j,d} \right\}_{n=0}^{K-1}$ for each $j=1\dots J$ as $\alpha_n^{j,d} = \mathbb{P}\left( Z_n^d =\theta_j \mid \mathcal{Y}_{0:n}^d \right)$ -- the so-called \textit{forward-filter}. These filters satisfy a recursive relationship which we establish below. First note
\begin{equation}
	\alpha_0^{j,d} = \pi_0^j
	\;.
\end{equation}
Next, we can derive the recursion structure for this sequence by using applying Bayes' rule. Starting with the definition,
\begin{equation}
	 \alpha_n^{j,d} = \mathbb{P}\left( Z_n^d =\theta_j \mid \mathcal{Y}_{0:n}^d \right) =
	 \frac{ \mathbb{P}\left( Z_n^d =\theta_j , \mathcal{Y}_{0:n}^d \right)}{ \sum_{i=1}^J \mathbb{P}\left( Z_n^d =\theta_i , \mathcal{Y}_{0:n}^d \right)}
	 \;. \label{eqline: alpha 1}
\end{equation}
The numerator can be written recursively as
\begin{subequations}
\begin{align}
 \mathbb{P}\left( Z_n^d =\theta_j , \mathcal{Y}_{0:n}^d \right)
  &= \sum_{i=1}^J \mathbb{P}\left( \Znd{n}{j} , \Znd{n-1}{i} , \mathcal{Y}_{0:n}^d \right) \\
  &= \sum_{i=1}^J \mathbb{P}\left( \Znd{n}{j}  , \Ynd{n} \mid \Znd{n-1}{i}, \mathcal{Y}_{0:n-1}^d \right) \,
  \mathbb{P}\left( \Znd{n-1}{i} , \mathcal{Y}_{0:n-1}^d \right) \\
  &= \sum_{i=1}^J \mathbb{P}\left( Z_n^d =\theta_j  , \Ynd{n} \mid Z_{n-1}^d = \theta_i, \Ynd{n-1} \right) \,
  \P \left( \mathcal{Y}_{0:n-1}^d \right)
  \alpha_{n-1}^{i,d} \\
\begin{split}
  &= \sum_{i=1}^J \mathbb{P}\left( \Ynd{n} \mid \Znd{n-1}{j} , \Ynd{n-1}  \right)
  \\ &\phantom{\sum_{i=1}^J \mathbb{P}} \times \mathbb{P}\left( \Znd{n}{j} \mid \Znd{n-1}{i}, \Ynd{n-1} \right) \P \left( \mathcal{Y}_{0:n-1}^d \right) \,  \alpha_{n-1}^{i,d}
\end{split}
  \\ &=
  \P \left( \mathcal{Y}_{0:n}^d \right) \sum_{i=1}^J \bm P_{i,j} \, f_{\psi}(t_{n}\,y_{n}^d;t_{n-1},\theta_i,y_{n-1}^d\,)\, d\bm\mu(y_{n}^d) \,  \alpha_{n-1}^{i,d}
  \;.
\end{align}
\end{subequations}
 Therefore, by using the above result in equation~\eqref{eqline: alpha 1} and by canceling $\P \left( \mathcal{Y}_{0:n-1}^d \right)$ and $d\bm\mu(y_{n}^d)$ terms appearing in the numerator and in the denominator, we obtain
 \begin{align}
 	\alpha_n^{j,d} = \frac{\hat\alpha_n^{j,d}}{c_n^d}
 	\;,
 \end{align}
 where
 \begin{align}
 	\hat\alpha_n^{j,d} = \sum_{i=1}^J \bm P_{i,j} \, f_{\psi}(t_{n}\,y_{n}^d;t_{n-1},\theta_i,y_{n-1}^d\,)\,  \alpha_{n-1}^{i,d}\,,
 	\qquad \text{and} \qquad  c_n^d = \sum_{j=1}^J \hat\alpha_n^{j,d} \;.
 \end{align}

The normalization factor $c_n^d$ has the additional property that $c_n^d \, \bm\mu(dy_{n}^d) = \P \left( \Ynd{n} \mid \mathcal{Y}_{0:n-1}^d \right)$. This  can be seen by using the definition of the $\alpha_n^{j,d}$ and making use of the Markov property of $(Y,Z)$ as follows
\begin{subequations}
\begin{align}
	c_n^d \, \bm\mu(dy_{n}^d) &= \sum_{j=1}^J \sum_{i=1}^J \bm P_{i,j} \, f_{\psi}(t_{n},y_{n}^d;t_{n-1},\theta_j,y_{n-1}^d)\,
	d\bm\mu(y_{n}^d) \, \alpha_{n-1}^{i,d}
	\\&=
	\sum_{j=1}^J \sum_{i=1}^J \P\left( \Znd{n}{j} , \Ynd{n} \mid \Znd{n-1}{i} , \Ynd{n-1} \right)
	\mathbb{P}\left( Z_{n-1}^d =\theta_i \mid \mathcal{Y}_{0:n-1}^d \right)
	\\&=
	\sum_{j=1}^J \sum_{i=1}^J \P\left( \Znd{n}{j} , \Ynd{n} \mid \Znd{n-1}{i} , \mathcal{Y}_{0:n-1}^d \right)
	\mathbb{P}\left( Z_{n-1}^d =\theta_i \mid \mathcal{Y}_{0:n}^d \right)
	\\ &=
	\sum_{j=1}^J \sum_{i=1}^J \P\left( \Znd{n}{j} , \Znd{n-1}{i} , \Ynd{n} \mid
	\mathcal{Y}_{0:n-1}^d \right)
	\\&=
	\P\left( \Ynd{n} \mid \mathcal{Y}^d_{0:n-1} \right)
	\;.
\end{align}
\end{subequations}

\subsection{Recursive Backward Discrete Filter.}

Here, we derive the recursion for the \textit{backward-filter} $\{ \beta_n^{j,d} \}_{n=0}^{K-1}$ for each $j=1\dots J$, defined as
\begin{equation}
	\beta_n^{j,d} = \frac{ \P\left( \Y_{n+1:K}^d \mid \Znd{n}{j} , \Ynd{n} \right) }{ \P\left( \Y_{n+1:K}^d \mid \Y_{0:n}^d \right) }
	\;.
\end{equation}

Just as with the forward-filter, the backward-filter can be obtained recursively. First note that
\begin{equation}
	\beta_n^{j,d} = \frac{ \P\left( \Y_{n+1:K}^d \mid \Znd{n}{j} ,\Ynd{n} \right) }{ \P\left( \Y_{n+1:K}^d \mid \Y_{0:n}^d \right) }
=
	\frac{ \P\left( \Y_{n+1:K}^d \mid \Znd{n}{j} ,\Ynd{n} \right) }{
	\sum_{i=1}^J \P\left( \Y_{n+1:K}^d \mid \Znd{n}{i} ,\Ynd{n} \right) \alpha_{n}^i
	} \label{eqline: beta 1}
	\;,
\end{equation}
which can be computed at time $n=K-1$ as
\begin{equation}
	\beta_n^{j,d} =
	\frac{ f_{\psi}(t_{K}, y^d_{K} ; t_{K-1}, \theta_j, y^d_{K-1}) }{
	\sum_{i=1}^J f_{\psi}(t_{K}, y^d_{K} ; t_{K-1}, \theta_i, y^d_{K-1}) \alpha_{K-1}^i }
	\;.
\end{equation}
Continuing with expression for the numerator in equation~\eqref{eqline: beta 1}, we find that
\begin{subequations}
\begin{align}
& \P\left( \Y_{n+1:K}^d \mid \Znd{n}{j} ,\Ynd{n} \right)
	\\
	&\quad=
	\sum_{i=1}^J \P\left( \Znd{n+1}{i}, \Y_{n+1:K}^d \mid \Znd{n}{j} ,\Ynd{n} \right)
	\\
\begin{split}
&\quad=
	\sum_{i=1}^J \P\left( \mathcal{Y}^d_{n+2:K} \mid \Znd{n+1}{i} ,
	\Znd{n}{j} , \Y_{n:n+1}^d \right)
	\\&\qquad\qquad\times\P\left( \Znd{n+1}{i} , \Ynd{n+1} \mid Z_n^d = \theta_j, \Ynd{n} \right)
\end{split}
	\\
\begin{split}
&\quad=
	\sum_{i=1}^J \P\left( \mathcal{Y}^d_{n+2:K} \mid \Znd{n+1}{i} , \Ynd{n+1}  \right)
	\\&\qquad\qquad\times \bm P_{j,i} \, f_{\psi}(t_{n+1}, y^d_{n+1} ; t_{n}, \theta_j, y^d_{n}) \,d\bm\mu(y_{n+1}^d)
\end{split}
	\\ &\quad=
	\P\left( \Y_{n+2:K}^d \mid \Y_{0:n+1}^d \right)
	f_{\psi}(t_{n+1}, y^d_{n+1} ; t_{n}, \theta_j, y^d_{n}) \,d\bm\mu(y_{n+1}^d)
	\sum_{i=1}^J \beta_{n+1}^{i,d} \bm P_{j,i}
	\;.
\end{align}
\end{subequations}
Plugging this last result back into equation~\eqref{eqline: beta 1}, and canceling $d\bm\mu(y_{n+2}^d)$ and $\P\left( \Y_{n+3:K}^d \mid \Y_{0:n+2}^d \right)$ terms, we obtain
\begin{align}
	\beta_n^{j,d} = \frac{\hat\beta_n^{j,d}}{\sum_{i=1}^J \hat\beta_n^{i,d} \alpha_{n}^{i,d} }
	\;,
\end{align}
where
\begin{align}
	\hat\beta_n^{j,d} = f_{\psi}(t_{n+1}, y^d_{n+1} ; t_{n}, \theta_j, y^d_{n}) \,
	\sum_{i=1}^J \beta_{n+1}^{i,d} \bm P_{j,i}
	\;.
\end{align}

Furthermore, because $(Y,Z)$ is a markov process, by the Markov property
\begin{equation}
	\beta_n^{j,k} =
	P\left( \Y_{n+1:K}^d \mid \Znd{n}{j} ,\mathcal{Y}_{0:n} \right)
	=
	P\left( \Y_{n+1:K}^d \mid \Znd{n}{j} ,\Ynd{n} \right)
	\;,
\end{equation}
a fact which will be used a number of times in the next part.

\subsection{Expressions for the Discrete Smoother}

The main objective of this section is to compute the smoother and two-slice marginal, $\{ \gamma_n^{j,d}\}_{n=0}^{K-1}$ and $\{ \xi_n^{i,j,d}\}_{n=0}^{K-2}$. For convenience, we repeat their definition here
\begin{align*}
	\gamma_n^{j,d} = \P\left( \Znd{n}{j} \mid \mathcal{Y}^d_{0:K} \right),
	\qquad \text{and}\qquad
	\xi_n^{i,j,d} = \P\left( \Znd{n}{i} , \Znd{n+1}{j} \mid \mathcal{Y}^d_{0:K} \right)
\end{align*}
for all allowed values of $n$, and for each $i,j=1\dots J$.

To this end, note that
\begin{subequations}
\begin{align}
	\gamma_n^{j,d}= \P\left( \Znd{n}{j} \mid \mathcal{Y}^d_{0:K} \right)
	&= \frac{ \P\left( \Znd{n}{j} , \mathcal{Y}^d_{0:K} \right) }
	{ \P\left( \mathcal{Y}^d_{0:K} \right)}
	\\ &=
	\frac{ \P\left( \mathcal{Y}^d_{n+1:K} \mid \Znd{n}{j} , \mathcal{Y}^d_{0:n} \right) \,
	\P\left( \Znd{n}{j} \mid \mathcal{Y}^d_{0:n} \right) }
	{ \P\left( \mathcal{Y}^d_{n+1:K} \mid \mathcal{Y}^d_{0:n} \right)
	}
	\\ &=
	\alpha_n^{j,d}\;\beta_n^{j,d}
	\;.
\end{align}%
\end{subequations}%
Next,
\begin{subequations}
\begin{align}
	\xi_n^{i,j} &=
	\P\left( \Znd{n}{i} , \Znd{n+1}{j} \mid \mathcal{Y}^d_{0:K} \right)
	\\ &=
	\frac{\P\left( \Znd{n+1}{j} , \mathcal{Y}^d_{n+1:K} \mid
	\Znd{n}{i}, \mathcal{Y}^d_{0:n} \right)
	\P\left( \Znd{n}{i} \mid \mathcal{Y}^d_{0:n} \right)}
	{\P\left( \mathcal{Y}^d_{n+1:K} \mid \mathcal{Y}^d_{0:n} \right)}
	\\ &=
	\alpha_n^{i,d} \left(
	\frac{\P\left( \mathcal{Y}^d_{n+2:K} \mid \Znd{n+1}{j} , \mathcal{Y}^d_{0:n+1} \right)
	\P\left( \Ynd{n+1} , \Znd{n+1}{j} \mid \Znd{n}{i} , \mathcal{Y}^d_{0:n} \right) }
	{\P\left( \mathcal{Y}^d_{n+2:K} \mid \mathcal{Y}^d_{0:n+1} \right)
	\P\left( \Ynd{n+1} \mid \mathcal{Y}^d_{0:n} \right)}
	\right)
	\\ &=
	\frac{\alpha_n^{i,d}\;\beta_{n+1}^{j,d}}{c_{n+1} \;d\bm\mu(y_{n+1}^d)}\;
	\P\left( \Ynd{n+1} , \Znd{n+1}{j} \mid \Znd{n}{i} , \mathcal{Y}^d_{0:n} \right)
	\\ &=
	\frac{\alpha_n^{i,d}\;\beta_{n+1}^{j,d}}{c_{n+1}^d} \; \bm P_{i,j} \;f_\psi(t_{n+1}, y_{n+1}^d ; t_{n}, \theta_i, y_{n}^d)
	\;.
\end{align}%
\end{subequations}%
The relationships between $(\gamma,\xi)$, and $(\alpha,\beta)$ are applied when performing the E-step in the EM algorithm described in Section~\ref{sec: EM Algorithm}.

The natural ordering of computation proceeds by first computing $\{ \alpha_n^{j,d} \}_{n=0}^{K-1}$, $\{ c_n^{d} \}_{n=0}^{K-1}$ and $\{ \beta_n^{j,d} \}_{n=0}^{K-1}$ in order, and then using the results to compute $\gamma$ and $\xi$.

\section{Calibration to INTC stock returns} \label{sec: Calibration Results}

\footnotesize

This section contains the results for the truncated pure-jump model described in Section~\ref{sec: Censored Pure Jump}, calibrated to per-second prices on INTC stocks. The calibrated parameters are displayed below for the models with $1$ to $6$ latent states. As mentioned in Section~\ref{sec: INTC Fit Discussion}, we use the BIC and ICL criterion to determine the `optimal' number of latent states. The BIC is defined as
\begin{equation}
  BIC = \log L^\star - \frac{\nu_M}{2} \,\log{(K \times D)}
  \;,
\end{equation}
where $\log L^\star$ is the value of the maximized log-likelihood for a given model,  $\nu_J$ is the number of parameters present in the model, and  recall that $D$ represents the number of observation days, and $K$ the number of observations within a day (assumed equal across days).

As discussed in Section~\ref{sec: EM Algorithm}, the log-likelihood cannot be computed directly. Instead, we use the Forward-Backward algorithm in Section~\ref{sec: Forward Backward Section} to compute it. The log-likelihood for a given model, using the notation of Section~\ref{sec: EM Algorithm}, can be computed as
\begin{equation}
  \log L = \sum_{d=1}^D \sum_{k=0}^{K-1} \log c_k^d
  \;.
\end{equation}
The approximation to the ICL of~\cite{biernacki2000assessing} for our model can be computed directly as
\begin{equation}
  ICL = \sum_{d=1}^D \sum_{k=0}^{K-1} \log f_{\psi^\star}( t_{k+1} , y_{k+1}^d ; t_{k} , y_{k}^d , \widehat Z_{k}^d  )
  - \frac{\nu_M}{2} \log{(K \times D)}
  \;,
\end{equation}
where $\nu_M$ is again the number of parameters present in the model. $\psi^\star$ are the parameters appearing in the transition density function $f_{\psi}$ which maximize the model's log-likelihood. $\widehat Z^d$ is the most likely path of $Z^d$ conditional on $\Y_{0:K}^d$, as computed by the Viterbi algorithm, using the parameters which maximize the model's likelihood.

The tables below record the calibrated parameters for the mean-reverting pure-jump model presented in Section~\ref{sec: Censored Pure Jump}. Each table contains the calibrated parameters using the EM algorithm for the number of possible states for the latent process ranging from 1 to 6. Each of the rows in the tables below are ordered by the base noise level $\mu_i$. Along with the parameters $\mu_i$, $\kappa_i$, and $\theta_i$, we also include the initial probability of the latent random variable starting in each of the given states ($\pi_0^i$) as well as the generator matrix for the latent process, which is computed as $\bm C = \log \bm P$.

\begin{table}[H]
\footnotesize
  \centering
  \caption{Mean-Reverting Pure-Jump model calibration results for INTC and $J=1$.}
    \begin{tabular}{rrr}
    \toprule
    \toprule
    \multicolumn{3}{c}{One Latent State} \\
    \midrule
    \multicolumn{1}{c}{$\mu_i$} & \multicolumn{1}{c}{$\kappa_i$} & \multicolumn{1}{c}{$\theta_i$} \\
    0.0334 & 0.0748 & 0.0200 \\
    \bottomrule
    \bottomrule
    \end{tabular}%
  \label{tab:INTC-1State}%
\end{table}%

\begin{table}[H]
\footnotesize
  \centering
  \caption{Mean-Reverting Pure-Jump model calibration results for INTC and $J=2$.}
    \begin{tabular}{rrrrrrr}
    \toprule
    \toprule
    \multicolumn{7}{c}{2 Latent States} \\
          &       &       &       &       & \multicolumn{2}{c}{$C_{i,j}$} \\
\cmidrule{6-7}          &       &       &       &       & \multicolumn{2}{c}{State $j$} \\
    State $i$ & \multicolumn{1}{c}{$\pi_0^i$} & \multicolumn{1}{c}{$\mu_i$} & \multicolumn{1}{c}{$\kappa_i$} & \multicolumn{1}{c}{$\theta_i$} & \multicolumn{1}{c}{1} & \multicolumn{1}{c}{2} \\
    \midrule
    1     & 0.7969 & 0.0899 & 0.0897 & 0.0200 & -0.00792 & 0.00792 \\
    2     & 0.2031 & 0.0183 & 0.0100 & 0.0201 & 0.00245 & -0.00245 \\
    \bottomrule
    \bottomrule
    \end{tabular}%
  \label{tab:INTC-2State}%
\end{table}%

\begin{table}[H]
\footnotesize
  \centering
  \caption{Mean-Reverting Pure-Jump model calibration results for INTC and $J=3$.}
  \begin{tabular}{rrrrrrrr}
    \toprule
    \toprule
    \multicolumn{8}{c}{3 Latent States} \\
        &       &       &       &       & \multicolumn{3}{c}{$C_{i,j}$} \\
    \cmidrule{6-8}      &       &       &       &       & \multicolumn{3}{c}{State $j$} \\
    State $i$ & \multicolumn{1}{c}{$\pi_0^i$} & \multicolumn{1}{c}{$\mu_i$} & \multicolumn{1}{c}{$\kappa_i$} & \multicolumn{1}{c}{$\theta_i$} & \multicolumn{1}{c}{1} & \multicolumn{1}{c}{2} & \multicolumn{1}{c}{3} \\
    \midrule
    1     & 0.4483 & 0.1727 & 0.0950 & 0.0100 & -0.0147 & 0.0147 & 0.0000 \\
    2     & 0.4510 & 0.0447 & 0.0288 & 0.0000 & 0.0020 & -0.0037 & 0.0017 \\
    3     & 0.1006 & 0.0139 & 0.0117 & 0.0200 & 0.0000 & 0.0012 & -0.0012 \\
    \bottomrule
    \bottomrule
  \end{tabular}%
  \label{tab:INTC-3State}%
\end{table}%

\begin{table}[H]
\footnotesize
  \centering
  \caption{Mean-Reverting Pure-Jump model calibration results for INTC and $J=4$.}
  \begin{tabular}{rrrrrrrrr}
  \toprule
  \toprule
  \multicolumn{9}{c}{4 Latent States} \\
        &       &       &       &       & \multicolumn{4}{c}{$C_{i,j}$} \\
  \cmidrule{6-9}      &       &       &       &       & \multicolumn{4}{c}{State $j$} \\
  State $i$ & \multicolumn{1}{c}{$\pi_0^i$} & \multicolumn{1}{c}{$\mu_i$} & \multicolumn{1}{c}{$\kappa_i$} & \multicolumn{1}{c}{$\theta_i$} & \multicolumn{1}{c}{1} & \multicolumn{1}{c}{2} & \multicolumn{1}{c}{3} & \multicolumn{1}{c}{4} \\
  \midrule
  1     & 0.3431 & 0.2457 & 0.1348 & -0.0213 & -0.0226 & 0.0225 & 0.0001 & 0.0000 \\
  2     & 0.2875 & 0.0658 & 0.0292 & 0.0100 & 0.0031 & -0.0057 & 0.0026 & 0.0000 \\
  3     & 0.3368 & 0.0305 & 0.0176 & 0.0000 & 0.0000 & 0.0009 & -0.0023 & 0.0014 \\
  4     & 0.0326 & 0.0100 & 0.0079 & 0.0000 & 0.0000 & 0.0000 & 0.0015 & -0.0015 \\
  \bottomrule
  \bottomrule
  \end{tabular}
  \label{tab:INTC-4State}%
\end{table}%

\begin{table}[H]
\footnotesize
  \centering
  \caption{Mean-Reverting Pure-Jump model calibration results for INTC and $J=5$.}
  \begin{tabular}{rrrrrrrrrr}
  \toprule
  \toprule
  \multicolumn{10}{c}{5 Latent States} \\
        &       &       &       &       & \multicolumn{5}{c}{$C_{i,j}$} \\
  \cmidrule{6-10}      &       &       &       &       & \multicolumn{5}{c}{State $j$} \\
  State $i$ & \multicolumn{1}{c}{$\pi_0^i$} & \multicolumn{1}{c}{$\mu_i$} & \multicolumn{1}{c}{$\kappa_i$} & \multicolumn{1}{c}{$\theta_i$} & \multicolumn{1}{c}{1} & \multicolumn{1}{c}{2} & \multicolumn{1}{c}{3} & \multicolumn{1}{c}{4} & \multicolumn{1}{c}{5} \\
  \midrule
  1     & 0.1145 & 0.2255 & 0.1805 & -0.0201 & -0.0040 & 0.0000 & 0.0040 & 0.0000 & 0.0000 \\
  2     & 0.5979 & 0.1373 & 0.0463 & -0.0095 & 0.0000 & -0.0835 & 0.0786 & 0.0049 & 0.0000 \\
  3     & 0.0000 & 0.0376 & 0.0188 & 0.0301 & 0.0003 & 0.0325 & -0.0328 & 0.0000 & 0.0000 \\
  4     & 0.2663 & 0.0269 & 0.0149 & 0.0222 & 0.0000 & 0.0006 & 0.0000 & -0.0020 & 0.0014 \\
  5     & 0.0213 & 0.0084 & 0.0065 & -0.0004 & 0.0000 & 0.0000 & 0.0000 & 0.0019 & -0.0019 \\
  \bottomrule
  \bottomrule
  \end{tabular}
  \label{tab:INTC-5State}%
\end{table}%

\begin{table}[H]
\footnotesize
  \centering
  \caption{Mean-Reverting Pure-Jump model calibration results for INTC and $J=6$.}
  \begin{tabular}{rrrrrrrrrrr}
  \toprule
  \toprule
  \multicolumn{11}{c}{6 Latent States} \\
        &       &       &       &       & \multicolumn{6}{c}{$C_{i,j}$} \\
  \cmidrule{6-11}      &       &       &       &       & \multicolumn{6}{c}{State $j$} \\
  State $i$ & \multicolumn{1}{c}{$\pi_0^i$} & \multicolumn{1}{c}{$\mu_i$} & \multicolumn{1}{c}{$\kappa_i$} & \multicolumn{1}{c}{$\theta_i$} & \multicolumn{1}{c}{1} & \multicolumn{1}{c}{2} & \multicolumn{1}{c}{3} & \multicolumn{1}{c}{4} & \multicolumn{1}{c}{5} & \multicolumn{1}{c}{6} \\
  \midrule
  1     & 0.5678 & 0.3206 & 0.0563 & 2.7090 & -0.4276 & 0.0003 & 0.2485 & 0.1405 & 0.0331 & 0.0051 \\
  2     & 0.0906 & 0.3080 & 0.2608 & -0.0800 & 0.0000 & -0.0103 & 0.0103 & 0.0000 & 0.0000 & 0.0000 \\
  3     & 0.0054 & 0.1078 & 0.0359 & -0.0137 & 0.0000 & 0.0012 & -0.0144 & 0.0131 & 0.0001 & 0.0000 \\
  4     & 0.1291 & 0.0405 & 0.0176 & -0.0021 & 0.0000 & 0.0000 & 0.0046 & -0.0055 & 0.0009 & 0.0000 \\
  5     & 0.1915 & 0.0260 & 0.0143 & 0.0238 & 0.0000 & 0.0000 & 0.0003 & 0.0000 & -0.0025 & 0.0022 \\
  6     & 0.0157 & 0.0074 & 0.0054 & 0.0161 & 0.0000 & 0.0000 & 0.0000 & 0.0000 & 0.0029 & -0.0030 \\
  \bottomrule
  \bottomrule
  \end{tabular}
  \label{tab:INTC-6State}%
\end{table}%

\clearpage

\bibliographystyle{chicago}
\bibliography{LatentAlphaModels_V11Ref}

\end{document}